\documentclass[runningheads]{llncs}

\usepackage[T1]{fontenc}
\usepackage[colorlinks, citecolor=blue, linkcolor=blue]{hyperref}
\usepackage{color}

\urlstyle{rm}

\usepackage{amsmath,amssymb}
\usepackage{mathtools}
\usepackage{cleveref}
\usepackage{booktabs}
\usepackage{tikz}
\usetikzlibrary{automata, petri, positioning, fit, arrows, arrows.meta}
\usepackage{pgfplots}
\pgfplotsset{compat=1.18}
\usepackage[ruled, noend]{algorithm2e}
\Crefname{algocf}{Algorithm}{Algorithms}
\usepackage{thm-restate}
\usepackage{bm}
\usepackage{colortbl}

\usepackage{macros}
\usepackage{plot_macros}

\renewcommand{\orcidID}[1]{\href{https://orcid.org/#1}{\,\raisebox{-1pt}{\includegraphics[width=8pt]{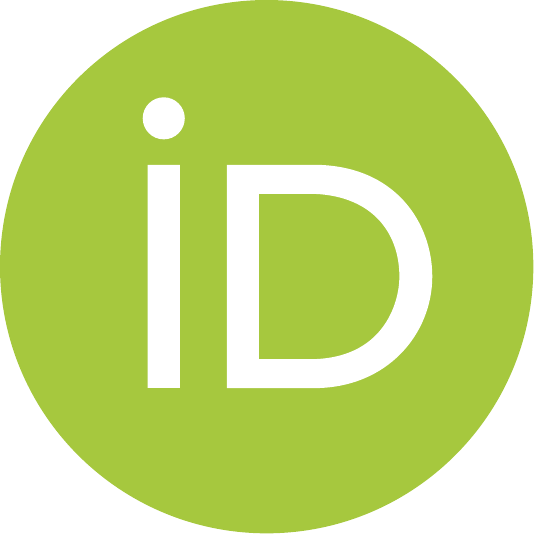}}}}

\begin{document}

\title{Weakly acyclic diagrams: A data structure for infinite-state symbolic verification\thanks{M.~Blondin was supported by a Discovery Grant from the Natural Sciences and Engineering Research
Council of Canada (NSERC).}}
\titlerunning{WADs: a data structure for infinite-state symbolic verification}

\author{Michael Blondin\inst{1}\orcidID{0000-0003-2914-2734} \and
  Michaël Cadilhac\inst{2}\orcidID{0000-0001-9828-9129} \and Xin-Yi
  Cui\inst{3} \and Philipp
  Czerner\inst{3}\orcidID{0000-0002-1786-9592} \and Javier
  Esparza\inst{3}\orcidID{0000-0001-9862-4919} \and Jakob
  Schulz\inst{3}\orcidID{0009-0007-2596-0920}}

\authorrunning{M. Blondin et al.}

\institute{%
  Universit\'{e} de Sherbrooke, Canada
  \and
  DePaul University, Chicago, USA
  \and
  Technical University of Munich, Germany
}

\maketitle

\begin{abstract}
  Ordered binary decision diagrams (OBDDs) are a fundamental data
  structure for the manipulation of Boolean functions, with strong applications to finite-state symbolic model checking. OBDDs allow for efficient algorithms using top-down dynamic programming. From an automata-theoretic perspective, OBDDs essentially are minimal deterministic finite automata recognizing languages whose words have a fixed length (the arity of the Boolean function). We introduce weakly acyclic diagrams (WADs), a generalization of OBDDs that maintains their algorithmic advantages, but can also represent 
infinite languages. We develop the theory of  WADs and show that they can be used for symbolic model checking of various models of infinite-state systems. \keywords{binary decision diagrams \and weakly acyclic
    languages \and partially ordered automata \and well-structured transition systems.}
\end{abstract}

\section{Introduction}
\label{sec:intro}
Ordered binary decision diagrams (OBDDs) are a fundamental data structure for the manipulation of Boolean functions~\cite{Akers78,Bryant86}, widely used to support symbolic model checking of finite-state systems~\cite{Bryant18,ChakiG18}. In this approach to model checking,  sets of configurations of a system are encoded as Boolean functions of fixed arity, say $n$, and the transition relation is encoded as a Boolean function of arity $2n$. After fixing a total order on the  Boolean variables, the OBDD representation of a Boolean function is unique up to isomorphism, and one can implement different model checking algorithms on top of an OBDD library.

The fundamental operations on sets and relations used by symbolic model-checking algorithms are---besides union, intersection, and complement of sets and relations---the \textit{pre} and \textit{post} operations that return the immediate predecessors or successors of a given set of configurations with respect to the transition relation. All of them are efficiently implemented on OBDDs using top-down dynamic programming: a call to an operation for functions of arity $n$ invokes one or more calls to the same or other operations for functions of arity $n-1$; intermediate results are stored using memoization~\cite{And98,Bryant18,ChakiG18}.

OBDDs can be presented in automata-theoretic terms. A Boolean function $f(x_1, \ldots, x_n)$ with order $x_1 < \cdots < x_n$ is representable by the language $L_f \defeq \{b_1 \cdots b_n \in \{0, 1\}^n : f(b_1, \ldots, b_n) = \true\}$. The OBDD for $f$ with this variable order is very close to the minimal DFA (MDFA) for $L_f$.\footnote{See \eg~\cite[Chap.~6]{EB23}, which even introduces a slight generalization of DFAs such that the OBDD \textit{is} the unique minimal deterministic generalized automaton for $L_f$.} From this point of view, OBDDs are just MDFAs recognizing \emph{fixed-length languages}---languages whose words have all the same length---and operations on OBDDs  are just operations on fixed-length languages canonically represented by their unique MDFAs.  

The automata-theoretic view of OBDDs immediately suggests a generalization from fixed-length languages to arbitrary regular languages: canonically represent a regular language by its unique MDFA, and manipulate MDFAs using automata-theoretic operations. (This is for example the approach of the tool MONA~\cite{monamanual2001}.) However, in this generalization crucial advantages of OBDD algorithms get lost.  Consider the implementation of the \textit{post} operation that, given a MDFA $\A$ recognizing a set of configurations and the MDFA $\T$ recognizing the transition relation, computes the MDFA for the set of immediate successors of $L(\A)$ with respect to $L(\T)$ (see \eg~\cite[Chap.~5]{EB23}). The algorithm proceeds in three steps. First, it constructs an NFA $\A'$ recognizing the set of immediate successors; then, it applies the powerset construction to determinize $\A'$ into a DFA $\A''$;  finally, it minimizes $\A''$ to yield the final MDFA $\A'''$. The crucial point is that the intermediate DFA $\A''$ can be \emph{exponentially larger} than $\A'''$, and so, in the worst case, \textit{post} uses exponential space in the size of $\A'''$, the final output. This is not the case for the OBDD implementation for fixed-length languages, which constructs $\A'''$ \emph{directly} from $\A$ and $\T$, using only space $\O(|\A'''|)$.

This observation raises the question of whether OBDDs can be generalized beyond fixed-length languages \emph{while maintaining the advantages of OBDD algorithms}. We answer it  in the affirmative. We introduce \emph{weakly acyclic diagrams (WADs)}, a generalization of OBDDs for the representation of \emph{weakly acyclic languages}. A language is weakly acyclic if it is recognized by a weakly acyclic DFA, and a DFA is weakly acyclic if every simple cycle of its state-transition diagram is a self-loop. So, for example, the language $a^* b a^*$ is weakly acyclic. Note that, contrary to fixed-length languages, weakly acyclic languages can be infinite.

We develop the theory of WADs and in particular show that small modifications to the top-down dynamic programming algorithms for OBDDs generalize them to WADs. We also present  a first application of WADs to symbolic infinite-state model checking. A fundamental technique in this area is the backwards reachability algorithm for well-structured transition systems~\cite{FinkelS01,AbdullaCJT96}. The algorithm computes the set of all predecessors of a given upward-closed set of configurations (with respect to the partial order making the system well-structured). For that, it iteratively computes the immediate predecessors of the current set of configurations, until a fixpoint is reached. Well-structuredness ensures that all intermediate sets remain upward-closed, and termination. We show that for many well-structured transition systems, including lossy channel systems~\cite{AbdullaJ93,AbdullaK95}, Petri nets~\cite{Murata89}, and broadcast protocols~\cite{EsparzaFM99}, upward-closed sets of configurations can be easily encoded as weakly acyclic languages, and so the backwards reachability algorithm can be implemented with WADs as data structure. 

We implemented our algorithms in \wadl{}, a prototype library for weakly acyclic diagrams. We conduct experiments on over 200 inputs for the backwards reachability algorithm, including instances of lossy channel systems, Petri nets, and broadcast protocols. We compare with some established tools using dedicated data structures and show that our generic approach is competitive. 

\paragraph{Related work.}

Weakly acyclic automata have been studied extensively. However, past work has focused on algebraic, logical, and computational complexity questions, not on their algorithmics as a data structure.

In~\cite{BF80}, Brzozowski and Fich showed that weakly acyclic
automata, called \emph{partially ordered automata} there, capture the
so-called $\mathcal{R}$-trivial languages. They credit Eilenberg 
for earlier work on $\mathcal{R}$-trivial monoids~\cite{Eil76}.
Weakly acyclic automata have also been called
\emph{extensive automata}, \eg\ in~\cite{Pin86}.

Weak acyclicity has also been defined in terms of \emph{nondeterministic} automata. 
As we shall see, one such definition yields a model as expressive as the deterministic one, while another one 
is more expressive. These models have sometimes been called \emph{restricted
partially ordered NFAs (rpoNFAs)} and \emph{partially ordered NFAs (poNFAs)}.
Schwentick, Thérien and Vollmer have shown that poNFAs
characterize level $3/2$ of the Straubing-Thérien hierarchy~\cite{STV01}. Bouajjani
\etal\ characterized the class of languages accepted by poNFAs as languages closed under permutation rewriting, \ie, languages described by so-called \emph{alphabetic pattern constraints}~\cite{BMT07}. As mentioned, \eg, 
in the introduction of~\cite{MK21}, further equivalences are known in
first-order logic.

More recently, Krötzsch, Masopust and Thomazo have studied
the computational complexity of basic questions for several types of partially ordered automata~\cite{KMT17,MK21}, and
Ryzhikov and Wolf have studied the complexity of problems for upper triangular Boolean matrices~\cite{RW24};
these matrices arise naturally from the transformation monoid of partially ordered automata. 

In the context of symbolic verification, there is work on
\emph{dedicated} data structures for the representation of
configuration sets for \emph{specific} systems. In particular,
Delzanno \etal\ introduced \emph{covering sharing trees} to represent
upward-closed sets of Petri net markings~\cite{DRB04}, which were
later extended to \emph{interval sharing trees} by
Ganty \etal~\cite{GMDKRV07}. Boigelot and Godefroid studied
\emph{queue-content decision diagrams} to analyze
channel systems~\cite{BG99}.

\paragraph{Structure of the paper.}

In \Cref{sec:wa}, we introduce and study weakly acyclic languages. In
\Cref{sec:ds,sec:oper}, we introduce weakly acyclic diagrams and
explain how to implement many operations. In \Cref{sec:model:check},
we explain how our framework can be used for infinite-state symbolic
verification, and we then report on experimental results in
\Cref{sec:exp}. We conclude in \Cref{sec:conc}. Some proofs and
experimental results are deferred to the appendix of the full version~\cite{BCCCES25}.

\section{Weakly acyclic languages}
\label{sec:wa}
We assume familiarity with basic automata theory. Let us recall some
notions. Let $\A = (Q, \Sigma, \delta, q_0, F)$ be a deterministic
finite automaton (DFA). We write $\lang{\A}$ to denote the language
accepted by $\A$. The language accepted by a state $q \in Q$, denoted
$\lang{q}$, is the language accepted by the DFA $(Q, \Sigma, \delta,
q, F)$. We write $\Sigma^+$ to denote the set of all nonempty words
over alphabet $\Sigma$.

For every word $w \in \Sigma^*$, we define $\alp{w}$ as the set of
letters that occur within $w$, \eg\ $\alp{baacac} = \{a, b, c\}$. The
\emph{residual} of a language $L \in \Sigma^*$ with respect to a word
$u \in \Sigma^*$ is $L^u \defeq \{v \in \Sigma^* : uv \in L\}$. Recall
that a language is regular iff it has finitely many
residuals. Moreover, every regular language has a unique minimal
DFA. The states of a minimial DFA $\A$ accept distinct residuals of
$\lang{\A}$, and each residual of $\lang{\A}$ is accepted by a state
of $\A$. For example, let $L \subseteq \{a, b\}^*$ be the language
described by $(a b^* + b + \ew)a b^*$. The minimal DFA for $L$ is
depicted in \Cref{fig:aut}. Its residuals are $L^\ew$, $L^a$, $L^b$,
$L^{aa}$ and $L^{aaa}$.

\begin{figure}
  \centering
  \begin{tikzpicture}[
  node distance=2.5cm, auto, thick, initial text={},
  scale=0.8, transform shape
]
  \node[state, initial]     (q0) {};
  \node[state, right of=q0, accepting] (q1) {};
  \node[state, right of=q1, accepting] (q2) {};
  \node[state, right of=q2] (q3) {};
  \node[state, below=0.25cm of q1] (q4) {};
  
  \path[->]
  (q0) edge node {$a$} (q1)
  (q1) edge node {$a$} (q2)
  (q2) edge node {$a$} (q3)

  (q0) edge node[swap] {$b$} (q4)
  (q4) edge node[swap] {$a$} (q2)
  
  (q1) edge[loop above] node {$b$} ()
  (q2) edge[loop above] node {$b$} ()
  (q3) edge[loop above] node {$a, b$} ()

  (q4) edge[bend right=10] node[swap] {$b$} (q3)
  ;
\end{tikzpicture}
  \caption{Example of a minimal DFA (whose language is weakly
    acyclic).}\label{fig:aut}
\end{figure}
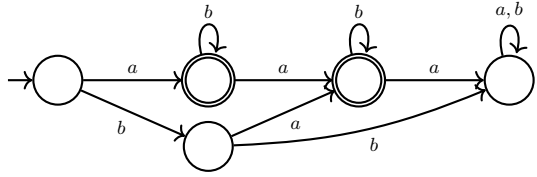

We say that a DFA $\A = (Q, \Sigma, \delta, q_0, F)$ is \emph{weakly
acyclic} if for every $q \in Q$, $w \in \Sigma^+$ and $a
\in \alp{w}$, it is the case that $\delta(q, w) = q$ implies
$\delta(q, a) = q$. For example, \Cref{fig:aut} depicts a weakly
acyclic DFA. Equivalent definitions of weak acyclicity are as follows:
\begin{itemize}
\item the binary relation $\preceq$ over $Q$ given by ``$q \preceq q'$
  if $\delta(q, w) = q'$ for some word $w$'' is a partial order;
  
\item each strongly connected component of the underlying directed
  graph of $\A$ contains a single state; and

\item the underlying directed graph of $\A$ does not have any
  cycle beyond self-loops.
\end{itemize}

Analogously, a nondeterministic finite automaton (NFA) $\A = (Q,
\Sigma, \delta, q_0, F)$ is \emph{weakly acyclic} if $q \in \delta(q,
w)$ implies $\delta(q, a) = \{q\}$ for all $a \in \alp{w}$. In other
words, the underlying directed graph does not contain any simple cycle
beyond self-loops, and nondeterminism with a letter $a$ can only occur
from states without self-loops labeled by $a$.

We make some observations whose proofs appear in the appendix of the
long version of this work. Let us observe that the powerset
construction (determinization), and DFA minimization, both preserve
weak acyclicity.

\begin{restatable}{proposition}{propAcycPow}\label{prop:acyc:pow}
  Applying the powerset construction to a weakly acyclic NFA yields a
  weakly acyclic DFA.
\end{restatable}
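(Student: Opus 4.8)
The plan is to show that if $\A = (Q, \Sigma, \delta, q_0, F)$ is a weakly acyclic NFA and $\A' = (2^Q, \Sigma, \delta', \{q_0\}, F')$ is its determinization via the powerset construction, then $\A'$ satisfies the weak acyclicity condition: for every reachable $S \subseteq Q$, every $w \in \Sigma^+$ and every $a \in \alp{w}$, if $\delta'(S, w) = S$ then $\delta'(S, a) = S$. The natural tool is the partial order $\preceq$ on $Q$ from the alternative characterization of weak acyclicity: $q \preceq q'$ iff $q' \in \delta(q, v)$ for some $v \in \Sigma^*$. Weak acyclicity of $\A$ means $\preceq$ is a partial order (antisymmetric), and moreover from any state $q$, reading a letter $a$ either fixes $q$ (if $a \in$ some self-loop alphabet at $q$ in the relevant sense) or moves strictly up in $\preceq$; more precisely, the key local fact I would extract first is: if $q \in \delta(q, w)$ with $a \in \alp{w}$, then $\delta(q,a) = \{q\}$, so in particular for any word $u$ with $\alp{u} \subseteq \alp{w}$ we get $\delta(q, u) = \{q\}$.

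First I would set up the monotonicity of the powerset transition with respect to reachability in $\preceq$: for $S \subseteq Q$ and $w \in \Sigma^*$, every state in $\delta'(S,w) = \bigcup_{q \in S}\delta(q,w)$ lies $\succeq$ some state of $S$, and conversely. Then, assuming $\delta'(S,w) = S$ with $w \in \Sigma^+$, I would argue that reading $w$ cannot strictly increase any state: pick a $\preceq$-minimal state $q \in S$; since $q \in S = \delta'(S,w)$, there is some $q' \in S$ with $q \in \delta(q', w)$, forcing $q' \preceq q$, hence $q' = q$ by minimality, so $q \in \delta(q, w)$. By the local fact above, $\delta(q,a) = \{q\}$ for every $a \in \alp{w}$. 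Now I would iterate / induct: having shown the minimal states are self-looped on every letter of $\alp{w}$, remove them and repeat on $S$ minus those states — but one must be careful, because a non-minimal state $p \in S$ might only be reachable in $\delta(S,w)$ via a minimal state, which is fine, or via the cycle. The cleanest phrasing is probably a rank induction on $\preceq$ restricted to $S$: I claim every $q \in S$ satisfies $q \in \delta(q,w)$. Indeed, given $q \in S = \delta'(S,w)$, there is $q' \in S$ with $q \in \delta(q', w)$ and $q' \preceq q$. If $q' = q$ we are done; if $q' \prec q$, then $q' \in \delta'(S,w)$ again gives $q'' \preceq q'$ with $q' \in \delta(q'', w)$, and so on, producing a $\preceq$-descending chain in the finite poset $S$, which must stabilize at some $r$ with $r \in \delta(r,w)$ and $r \preceq q$; but then $\delta(r, u) = \{r\}$ for every word $u$ with $\alp{u}\subseteq\alp{w}$, contradicting $r \prec \cdots \prec q$ (reading the appropriate sub-word of $w$ from $r$ would have to leave $r$). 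Hence $q' = q$, i.e. $q \in \delta(q,w)$, for all $q \in S$.

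Once every $q \in S$ has $q \in \delta(q, w)$, the local fact gives $\delta(q, a) = \{q\}$ for every $q \in S$ and every $a \in \alp{w}$. Since $a \in \alp{w}$, summing over $q \in S$ yields $\delta'(S, a) = \bigcup_{q\in S}\delta(q,a) = \bigcup_{q\in S}\{q\} = S$, which is exactly $\delta'(S,a) = S$. This establishes the weak acyclicity of the powerset DFA. (One should note $\A'$ here is the full powerset automaton; restricting to the reachable part clearly preserves the property, since the condition is quantified over states and the reachable part is a sub-DFA closed under $\delta'$.)

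The main obstacle is the inductive step establishing that $\delta'(S,w) = S$ forces each individual $q \in S$ to be on a $w$-cycle, i.e.\ ruling out that $S$ is preserved only through a "rotation" among strictly $\preceq$-comparable states. This cannot happen precisely because $\preceq$ is a partial order (no nontrivial cycles), so any such rotation would have to bottom out at a genuinely $w$-looped state, and from that state the strong local constraint ($\delta(r,u)=\{r\}$ whenever $\alp{u}\subseteq\alp{w}$) propagates upward along the alleged chain and collapses it. Getting this descending-chain argument stated cleanly — rather than via an awkward induction on $|S|$ — is the delicate part; everything else is routine bookkeeping with $\delta' = \bigcup_{q\in S}\delta(q,\cdot)$.
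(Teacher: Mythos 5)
Your proof is correct and takes essentially the same route as the paper's: both arguments reduce a subset-level cycle to the claim that every NFA state inside the subset must itself lie on a cycle (you via a weakly descending chain that bottoms out by antisymmetry, the paper via a minimal acyclic state chosen for contradiction), and then apply the weak-acyclicity constraint $\delta(q,a)=\{q\}$ to collapse the chain. The only cosmetic difference is that the paper verifies the equivalent partial-order characterization for the powerset DFA (antisymmetry of the reachability relation on subsets), whereas you verify the self-loop definition directly, which spares you the appeal to that equivalence.
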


\begin{restatable}{proposition}{propAcycMinDfa}\label{prop:acyc:min:dfa}
  Let $\A$ be a weakly acyclic DFA. The minimal DFA that accepts
  $\lang{\A}$ is also weakly acyclic.
\end{restatable}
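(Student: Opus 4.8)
The plan is to work directly with the residual description of the minimal DFA. Let $L \defeq \lang{\A}$ and let $\mathcal{M}$ be its minimal DFA: its states are the finitely many residuals $L^u$, the state $L^u$ is the one reached from the initial state $L = L^\ew$ by reading $u$, and its transition function sends $L^u$ on letter $a$ to $L^{ua}$. By the definition of a weakly acyclic DFA, it then suffices to show: for every word $u$, every $w \in \Sigma^+$, and every $a \in \alp{w}$, if $L^{uw} = L^u$ then $L^{ua} = L^u$. (Note that the Brzozowski double-reversal characterization of $\mathcal{M}$ is of no help here, since reversal does not preserve weak acyclicity.)

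The core idea is to pull such a loop of $\mathcal{M}$ back into $\A$, where weak acyclicity is available. First I would recall the elementary identity $\lang{\delta_\A(q_0, v)} = L^v$ for every word $v$, so that $q \defeq \delta_\A(q_0, u)$ satisfies $\lang{q} = L^u$. Now iterate $w$: set $q_k \defeq \delta_\A(q, w^k)$. From $L^{uw} = L^u$ one gets $(L^u)^{w^k} = L^u$ for every $k$, hence $\lang{q_k} = \lang{q}^{w^k} = L^u$ for all $k$; since $Q$ is finite, there are $i < j$ with $q_i = q_j$, so $\delta_\A(q_i, w^{j-i}) = q_i$ with $w^{j-i} \in \Sigma^+$ and $\alp{w^{j-i}} = \alp{w}$. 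Weak acyclicity of $\A$ now yields $\delta_\A(q_i, a) = q_i$ for each $a \in \alp{w}$, and therefore $L^{ua} = (L^u)^a = \lang{q_i}^a = \lang{\delta_\A(q_i,a)} = \lang{q_i} = L^u$, which is exactly what was needed.

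The one genuine obstacle — and the reason the statement is not immediate — is that $\A$ need not be minimal, so the state reached by $u$ and the state reached by $uw$ may differ even though both accept the residual $L^u$; one therefore cannot apply weak acyclicity to $q$ and $w$ directly. Iterating $w$ and invoking the pigeonhole principle is precisely the device that manufactures an honest self-loop in $\A$, to which the hypothesis does apply. Beyond this, the argument is routine manipulation of residuals, using nothing about $\A$ except that it is finite and weakly acyclic.
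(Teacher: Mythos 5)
Your proof is correct. It rests on the same core device as the paper's: a loop of the minimal DFA is pulled back into $\A$ by iterating the cycle word and exploiting finiteness of $\A$, after which weak acyclicity of $\A$ finishes the job. The packaging differs, though. The paper argues by contradiction: it takes a simple cycle through two \emph{distinct} residuals $L^x \neq L^y$ of the minimal DFA, iterates $w = w_1w_2$ from $\delta(q_0,x)$, and picks a state $p_n$ that is \emph{maximal} for the reachability order $\preceq$ (a partial order precisely because $\A$ is weakly acyclic), which forces $p_n = p_{n+1}$; it must then separately check that the pulled-back cycle is nontrivial ($p_n \neq \delta(p_n,w_1)$, via $\lang{p_n} = L^x \neq L^y$) to contradict weak acyclicity. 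You instead verify the definitional condition directly at the residual level: from $L^{uw}=L^u$ you get $\lang{q_k}=L^u$ for all $q_k = \delta(q_0, u w^k)$, plain pigeonhole gives $\delta(q_i, w^{j-i}) = q_i$ with $\alp{w^{j-i}} = \alp{w}$, and the letter clause of weak acyclicity ($\delta(q_i,a)=q_i$ for $a \in \alp{w}$) transfers back to $L^{ua} = L^u$. Your route buys a direct (non-contradiction) argument that needs neither the maximal-state trick nor the distinctness check, at the price of invoking the letter-level form of weak acyclicity rather than only ``no nontrivial cycles''; both forms are equivalent for DFAs, so this is a matter of taste rather than strength. (Your aside about Brzozowski double reversal is immaterial but harmless.)
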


We say that a language is \emph{weakly acyclic} if it is accepted by a
weakly acyclic automaton $\A$. By the above propositions, this means
that $\A$ can interchangeably be an NFA, a DFA, or a minimal DFA. The
lemma below yields a characterization of weakly acyclic languages in
terms of residuals rather than automata.

\begin{restatable}{lemma}{lemResiduals}\label{lem:residuals}
  A regular language $L \subseteq \Sigma^*$ is weakly acyclic iff for
  all $u \in \Sigma^*$ and $v \in \Sigma^+$ it is the case that $L^u =
  L^{uv}$ implies $L^u = L^{uw}$ for every $w \in \alp{v}^*$.
\end{restatable}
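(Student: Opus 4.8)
The plan is to route both directions through the minimal DFA $\A = (Q, \Sigma, \delta, q_0, F)$ of $L$ and to exploit the standard bijection between the residuals of $L$ and the states of $\A$: for every $u \in \Sigma^*$ the state $\delta(q_0, u)$ is precisely the one whose accepted language is $L^u$, and two words $u, u'$ reach the same state iff $L^u = L^{u'}$ (equivalently, distinct states of a minimal DFA accept distinct residuals). Combined with \Cref{prop:acyc:pow,prop:acyc:min:dfa}, this lets me treat ``$L$ is weakly acyclic'' as equivalent to ``$\A$ is weakly acyclic'': if $L$ is weakly acyclic it is accepted by some weakly acyclic DFA (via \Cref{prop:acyc:pow}), hence $\A$ is weakly acyclic by \Cref{prop:acyc:min:dfa}; conversely, if $\A$ is weakly acyclic then it is itself a weakly acyclic automaton accepting $L$.

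For the forward direction, assume $L$ is weakly acyclic, so $\A$ is weakly acyclic. Given $u \in \Sigma^*$ and $v \in \Sigma^+$ with $L^u = L^{uv}$, set $q \defeq \delta(q_0, u)$. Since $\delta(q_0, uv) = \delta(q, v)$ accepts $L^{uv} = L^u = \lang{q}$, minimality forces $\delta(q, v) = q$. Weak acyclicity of $\A$ then gives $\delta(q, a) = q$ for every $a \in \alp{v}$, and a one-line induction on word length extends this to $\delta(q, w) = q$ for every $w \in \alp{v}^*$. Consequently $L^{uw} = \lang{\delta(q_0, uw)} = \lang{\delta(q, w)} = \lang{q} = L^u$, as required.

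For the converse, assume the residual condition holds and let $q \in Q$, $w \in \Sigma^+$, $a \in \alp{w}$ with $\delta(q, w) = q$; I must show $\delta(q, a) = q$. Every state of a minimal DFA is reachable, so choose $u$ with $\delta(q_0, u) = q$. Then $\lang{q} = L^u$, and $\delta(q_0, uw) = \delta(q, w) = q$ gives $L^{uw} = \lang{q} = L^u$, i.e.\ $L^u = L^{uw}$ with $w \in \Sigma^+$. Since the word $a$ lies in $\alp{w}^*$, the hypothesis yields $L^u = L^{ua}$, so $q$ and $\delta(q, a)$ accept the same residual and therefore coincide by minimality. Hence $\A$, and thus $L$, is weakly acyclic.

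I do not expect a real obstacle here: the argument is essentially bookkeeping. The only points that need care are the reductions, in both directions, to the \emph{minimal} DFA — which is where \Cref{prop:acyc:pow,prop:acyc:min:dfa} and the facts that the states of a minimal DFA are reachable and pairwise inequivalent are invoked — and the trivial induction that promotes single-letter self-loops at $q$ to self-loops on all of $\alp{v}^*$.
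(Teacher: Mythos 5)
Your proof is correct and follows essentially the same route as the paper's: both directions are argued through the minimal DFA via the bijection between residuals and states, with the forward direction using minimality to get a $v$-self-loop and weak acyclicity to promote it to letter self-loops, and the converse using the residual hypothesis plus minimality to force $\delta(q,a)=q$ (the paper merely phrases this last step as a contradiction rather than a direct verification).
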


It can also be shown that weakly acyclic languages are precisely the
languages described by such \emph{weakly acyclic expressions}: $r
::= \emptyset \mid \Gamma^* \mid \Lambda^* a r \mid r + r$ where
$\Gamma, \Lambda \subseteq \Sigma$ and
$a \in \Sigma \setminus \Lambda$ (see the appendix of the full
version). Note that these expressions are essentially
``$\mathcal{R}$-expressions'', which are known to be equivalent to
weakly acyclic languages via algebraic automata
theory. Moreover, \cite[Thm.~7]{KMT17}, shows that weaky acyclic DFAs
and NFAs are equivalent, and the proof is very similar our conversion
from weakly acyclic NFAs to expressions given in the appendix of the
full version.

Complementing a weakly acyclic DFA preserves weak acyclicity since no
cycle is created. Moreover, weakly acyclic regular expressions allow
for union. Thus:

\begin{proposition}\label{prop:wa:closures}
  Weakly acyclic languages are closed under complementation, union and
  intersection.
\end{proposition}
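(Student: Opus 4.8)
The plan is to establish the three closures in the order complementation, then intersection, then union, obtaining union either from a direct product construction or from the other two via De~Morgan.

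\emph{Complementation.} Let $L$ be weakly acyclic, recognized by a weakly acyclic DFA $\mathcal{A} = (Q, \Sigma, \delta, q_0, F)$. First I would make $\mathcal{A}$ complete: add a fresh sink state $q_\bot$ with $\delta(q_\bot, a) = q_\bot$ for all $a \in \Sigma$ and redirect every missing transition to $q_\bot$. This introduces only a self-loop, so the resulting DFA is still weakly acyclic and accepts the same language. Now $\Sigma^* \setminus L$ is accepted by $\mathcal{A}' = (Q, \Sigma, \delta, q_0, Q \setminus F)$, which has the very same transition function $\delta$ as $\mathcal{A}$. Since weak acyclicity is a condition on $\delta$ alone --- that $\delta(q, w) = q$ implies $\delta(q, a) = q$ for all $a \in \alp{w}$ --- the DFA $\mathcal{A}'$ is weakly acyclic, hence so is $\Sigma^* \setminus L$.

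\emph{Intersection and union.} Given complete weakly acyclic DFAs $\mathcal{A}_i = (Q_i, \Sigma, \delta_i, q_i, F_i)$ for $i \in \{1,2\}$, I would take the usual product DFA on state set $Q_1 \times Q_2$ with $\delta((p,q), a) \defeq (\delta_1(p,a), \delta_2(q,a))$, initial state $(q_1, q_2)$, and final set $F_1 \times F_2$ for intersection (resp.\ $(F_1 \times Q_2) \cup (Q_1 \times F_2)$ for union). The key claim is that this product is weakly acyclic: if $\delta((p,q), w) = (p,q)$ for some $w \in \Sigma^+$, then $\delta_1(p,w) = p$ and $\delta_2(q,w) = q$, so weak acyclicity of $\mathcal{A}_1$ and of $\mathcal{A}_2$ gives $\delta_1(p,a) = p$ and $\delta_2(q,a) = q$, and therefore $\delta((p,q),a) = (p,q)$, for every $a \in \alp{w}$. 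Thus $L_1 \cap L_2$ and $L_1 \cup L_2$ are weakly acyclic. Alternatively, union needs no new argument at all: it is immediate from the grammar of weakly acyclic expressions $r ::= \emptyset \mid \Gamma^* \mid \Lambda^* a r \mid r + r$ recalled above, which is closed under $+$; or one can simply use $L_1 \cup L_2 = \Sigma^* \setminus ((\Sigma^* \setminus L_1) \cap (\Sigma^* \setminus L_2))$ together with the two closures already obtained.

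I do not expect a genuine obstacle here: the one recurring idea is that the weak-acyclicity condition constrains only the transition function, and is preserved both under swapping the set of final states and under taking products. The only routine care needed is completing the input DFAs with a harmless sink state before complementing, so that the swap of final states really computes the complement.
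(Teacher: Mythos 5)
Your proof is correct. The complementation argument is exactly the paper's (weak acyclicity constrains only the transition function, so swapping final states creates no new cycles; completing with a sink adds only self-loops). For the binary operations you take a mildly different route: you verify directly that the product automaton of two weakly acyclic DFAs is weakly acyclic --- a cycle $\delta((p,q),w)=(p,q)$ projects to cycles in both components, which by weak acyclicity of each factor forces self-loops on every letter of $w$ in both coordinates, hence a self-loop in the product --- and this single construction yields both intersection and union. The paper instead obtains union from the closure of weakly acyclic expressions under $+$ (a characterization it develops anyway) and gets intersection implicitly via De Morgan from complement and union; the product argument does appear in the paper, but only inside the appendix proof that weakly acyclic expressions yield weakly acyclic DFAs, and there it is phrased as a proof by contradiction about strongly connected components rather than your direct implication. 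Your version is more self-contained, since it needs neither the expression characterization nor De Morgan, while the paper's version buys economy by reusing machinery it has already set up; both are sound, and your direct check of the product is, if anything, the cleaner statement to have on record.
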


\section{A data structure for weakly acyclic languages}
\label{sec:ds}
Note that if a language $L$ is weakly acyclic, then so is $L^a$ for
all $a \in \Sigma$. From this simple observation, one can imagine an
infinite deterministic automaton where each state is a weakly acyclic
language $L$, and each $a$-transition leads to $L^a$. Let us define
this ``master automaton'' formally.

\begin{definition}
  The \emph{master automaton} over alphabet $\Sigma$ is the tuple $\M
  = (Q_{\M}, \Sigma, \allowbreak \delta_{\M}, F_{\M})$, where
  \begin{itemize}
  \item $Q_{\M}$ is is the set of all weakly acyclic languages over
    $\Sigma$;

  \item $\delta_{\M} \colon Q_{\M} \times \Sigma \rightarrow Q_{\M}$
    is given by $\delta_{\M}(L, a) \defeq L^a$ for all $L \in Q_{\M}$
    and $a \in \Sigma$;

  \item $L \in F_{\M}$ iff $\ew \in L$.
  \end{itemize}
\end{definition}

Given weakly acyclic languages $K$ and $L$, let $K \preceq L$ denote
that $K = L^w$ for some word $w$. An immediate consequence of the
definition of weakly acyclic DFAs is that ${\preceq}$ is a partial
order. The minimal elements of ${\preceq}$ satisfy the equation $L =
L^a$ for every letter $a$. The equation has exactly two solutions: $L
= \emptyset$ and $L = \Sigma^*$. Moreover, we can easily show by
structural induction that ${\preceq}$ has no infinite descending
chains. This allows to reason about a weakly acyclic language
recursively through its residuals until reaching $\emptyset$ or
$\Sigma^*$.

\begin{restatable}{proposition}{propFundOne}\label{prop:fund1}
  Let $L$ be a weakly acyclic language. The language accepted from the
  state $L$ of the master automaton is $L$.
\end{restatable}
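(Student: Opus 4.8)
The plan is to reduce the statement to the elementary residual identity $(L^u)^v = L^{uv}$, together with the observation made just before the definition of $\M$, namely that $L^a$ is weakly acyclic whenever $L$ is. The latter guarantees that $\M$ is a well-defined (infinite) deterministic automaton, so that "the language accepted from state $L$" is meaningful: it is the set of words $w$ with $\delta_{\M}(L, w) \in F_{\M}$, where $\delta_{\M}$ is extended to words in the usual way.

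First I would prove, by induction on $|w|$, that $\delta_{\M}(L, w) = L^w$ for every weakly acyclic language $L$ and every $w \in \Sigma^*$. The base case $w = \ew$ is immediate since $L^\ew = L$. For the inductive step, write $w = ua$ with $a \in \Sigma$; then $\delta_{\M}(L, ua) = \delta_{\M}(\delta_{\M}(L, u), a) = \delta_{\M}(L^u, a) = (L^u)^a$ by the induction hypothesis and the definition of $\delta_{\M}$, and $(L^u)^a = L^{ua}$ because $v \in (L^u)^a$ iff $av \in L^u$ iff $uav \in L$ iff $v \in L^{ua}$. This step also implicitly re-uses closure of weak acyclicity under residuals, so that every intermediate state $L^u$ genuinely lies in $Q_{\M}$.

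The proposition then follows at once: a word $w$ is accepted from state $L$ iff $\delta_{\M}(L, w) \in F_{\M}$, i.e.\ (by the claim and the definition of $F_{\M}$) iff $\ew \in L^w$, which in turn holds iff $w = w\ew \in L$. Hence $\lang{L} = L$.

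I do not expect a genuine obstacle here — this is essentially a consistency check that the master automaton behaves as intended. The only points deserving a word of care are the residual composition law $(L^u)^v = L^{uv}$ and the well-definedness of $\M$ (closure of weak acyclicity under single-letter residuals, hence under arbitrary residuals), both of which are routine.
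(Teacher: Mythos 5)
Your proof is correct, but it takes a different route from the paper's. The paper argues by induction over the states of the master automaton, using the well-founded order $\preceq$ on weakly acyclic languages: the claim is checked for the minimal states $\emptyset$ and $\Sigma^*$, and for a general $L$ it is assumed for every proper residual $L^a \neq L$ and then propagated to $L$ using the self-loops for the letters $b$ with $L^b = L$ and the acceptance condition $\ew \in L$. You instead prove the single identity $\delta_{\M}(L, w) = L^w$ by induction on $|w|$, using only the residual composition law $(L^u)^a = L^{ua}$, and read off the proposition from the definition of $F_{\M}$. Your argument is arguably more elementary: it never invokes the absence of infinite descending chains for $\preceq$, and indeed weak acyclicity enters only through the well-definedness of $\M$ (closure of the class under single-letter residuals), which you correctly flag; the same proof would work verbatim for a ``master automaton'' over any class of languages closed under residuals. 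What the paper's state-based induction buys is consistency with the recursive, residual-descending style used throughout \Cref{sec:ds,sec:oper} (it is the same descent that underlies the algorithms and, e.g., \Cref{prop:fund2}), whereas your word-length induction isolates the purely automata-theoretic content of the statement. Both proofs are complete; no gap.
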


By \Cref{prop:fund1}, we can look at the master automaton as a
structure containing DFAs recognizing all the weakly acyclic
languages. To make this precise, each weakly acyclic language $L$
determines a DFA $\A_L = (Q_L, \Sigma, \delta_L, q_{0L}, F_L)$ as
follows: $Q_L$ is the set of states of the master automaton reachable
from the state $L$, $q_{0L}$ is the state $L$, $\delta_L$ is the
projection of $\delta_{\M}$ onto $Q_L$, and $F_L \defeq F_{\M} \cap
Q_L$. It is easy to show that $\A_L$ is the \emph{minimal} DFA
recognizing $L$.

\begin{restatable}{proposition}{propFundTwo}\label{prop:fund2}
  For every weakly acyclic language $L$, automaton $\A_L$ is the
  minimal DFA recognizing $L$.
\end{restatable}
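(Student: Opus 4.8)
The plan is to show that $\A_L$ is isomorphic to the unique minimal DFA for $L$ by verifying three things: that $\A_L$ is a well-defined DFA recognizing $L$, that all of its states are reachable from $q_{0L}$, and that no two distinct states of $\A_L$ accept the same language. Together these imply minimality, since the minimal DFA is characterized (up to isomorphism) as the one in which every state is reachable and all states accept pairwise-distinct languages (equivalently, distinct residuals of $L$).

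First I would confirm that $\A_L$ is a legitimate DFA. The set $Q_L$ consists of the states of $\M$ reachable from $L$; by the observation preceding the master automaton definition, every residual of a weakly acyclic language is again weakly acyclic, so $Q_L \subseteq Q_{\M}$ and $\delta_{\M}$ restricted to $Q_L$ indeed maps into $Q_L$ (reachability is closed under taking one more transition). Thus $\delta_L$ is total and well-defined, $q_{0L} = L \in Q_L$, and $F_L = F_{\M} \cap Q_L$ is well-defined. Since $L$ is regular, it has finitely many residuals, and every state in $Q_L$ is of the form $L^w$ for some $w \in \Sigma^*$ by definition of reachability; hence $Q_L$ is finite and $\A_L$ is a genuine DFA. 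That every state of $\A_L$ is reachable from $q_{0L}$ is immediate from the definition of $Q_L$.

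Next I would show $\lang{\A_L} = L$. Running $\A_L$ on a word $w$ from $q_{0L} = L$ lands in the state $\delta_{\M}(L, w) = L^w$ (a routine induction on $|w|$ using $\delta_{\M}(K,a) = K^a$), and $w \in \lang{\A_L}$ iff this state is in $F_L$, i.e.\ iff $\ew \in L^w$, i.e.\ iff $w \in L$. Alternatively, this is exactly the content of \Cref{prop:fund1} applied with starting state $L$: the language accepted from state $L$ is $L$, and $\lang{\A_L}$ is by construction the language accepted from $L$ in the sub-automaton, which coincides with the language accepted from $L$ in $\M$ because $\A_L$ is the restriction of $\M$ to the states reachable from $L$.

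Finally, for minimality it remains to check that distinct states of $\A_L$ accept distinct languages. A state of $\A_L$ is a weakly acyclic language $K = L^w$, and by \Cref{prop:fund1} the language accepted from $K$ in $\M$ is $K$ itself; since $\A_L$ is the restriction of $\M$ to states reachable from $L$ and this restriction does not remove any transitions out of states in $Q_L$, the language accepted from state $K$ in $\A_L$ is also $K$. Hence if two states $K, K' \in Q_L$ accept the same language then $K = K'$ as languages, so they are literally the same state. Therefore $\A_L$ is reachable and has no pair of equivalent states, which means it is the minimal DFA for $L$. The only mild subtlety — and the place I would be most careful — is justifying that restricting $\M$ to the reachable states does not alter the language accepted from any of those states; this follows because reachability is forward-closed, so no transition or accepting status of a state in $Q_L$ is lost in the restriction, and hence \Cref{prop:fund1} transfers verbatim from $\M$ to $\A_L$.
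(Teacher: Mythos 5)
Your proof is correct and follows essentially the same route as the paper: both arguments apply \Cref{prop:fund1} to conclude that each state of $\A_L$ accepts exactly the language it is labelled by, so distinct states accept distinct languages, and then invoke the standard characterization of minimal DFAs. You simply spell out additional routine details (finiteness of $Q_L$, reachability, and that restricting $\M$ to the forward-closed set $Q_L$ preserves accepted languages) that the paper leaves implicit.
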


\Cref{prop:fund2} allows us to define a data structure for
representing finite sets of weakly acyclic languages. Loosely
speaking, the structure representing the languages $\LL = \{L_1,
\ldots, L_n\}$ is the fragment of the master automaton containing the
states recognizing $L_1, \ldots, L_n$ and their descendants. It is a
DFA with multiple initial states which we call the \emph{weakly
acyclic diagram for $\LL$}. Formally:

\begin{definition}
  Let $\LL = \{L_1, \ldots, L_n\}$ be weakly acyclic languages over
  $\Sigma$. The \emph{weakly acyclic diagram} $\A_\LL$ is the tuple
  $(Q_\LL, \Sigma, \delta_\LL, Q_{0\LL}, F_\LL)$ where $Q_\LL$ is the
  subset of states of the master automaton reachable from at least one
  of $L_1, \ldots, L_n$; $Q_{0\LL} \defeq \{L_1, \ldots, L_n\}$;
  $\delta_\LL$ is the projection of $\delta_M$ onto $Q_\LL$; and
  $F_\LL = F_\M \cap Q_\LL$.
\end{definition}

In order to manipulate weakly acyclic diagrams, we represent them as
tables of nodes. Let us fix $\Sigma = \{a_1, a_2, \ldots, a_m\}$. A
\emph{node} is a triple $q \colon (s, b)$ where
\begin{itemize}
\item $q$ is the node \emph{identifier};

\item $s = (q_1, q_2, \ldots, q_m)$ is the \emph{successor tuple} of
  the node, where each $q_i$ is either a node identifier or the
  special symbol \self; and
  
\item $b \in \{0, 1\}$ is the \emph{acceptance flag} that indicates
  whether $q$ is accepting or not.
\end{itemize}

We write $q^{a_i}$ to denote $q_i$. We write $\lang{s, b}$ for the
language defined recursively as follows, where $i \in [1..m]$ and $w
\in \Sigma^*$:
\begin{align*}
  \ew \in \lang{q} &\iff b = 1, \\
  a_i w \in \lang{q} &\iff \left((q^{a_i} = \self \land w \in \lang{q}) \lor
  (q^{a_i} \neq \self \land w \in \lang{q^{a_i}})\right).
\end{align*}
For every node $q \colon (s, b)$, we write $\lang{q}$ to denote
$\lang{s, b}$.

We denote the identifiers of the nodes for languages $\emptyset$ and
$\Sigma^*$ by $\qnone$ and $\qall$, respectively, \ie, $\qnone \colon
((\self, \ldots, \self), 0)$ and $\qall \colon ((\self, \ldots,
\self), 1)$.

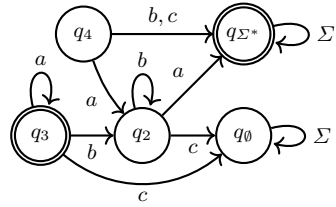
\begin{figure}
  \hfill
  \begin{minipage}[t]{0.45\textwidth}
    \vspace{0pt} 
    \begin{tabular}{ccccc}
      \toprule

      \textbf{identifier} &
      \multicolumn{3}{c}{\textbf{succ.\ tuple}} &
      \textbf{flag} \\

      & $a$ & $b$ & $c$ & \\

      \midrule

      $q_4$ & $q_2$ & $\qall$ & $\qall$ & $0$ \\

      $q_3$ & $\self$ & $q_2$ & $\qnone$ & $1$ \\

      $q_2$ & $\qall$ & $\self$ & $\qnone$ & $0$ \\

      $\qall$ & $\self$ & $\self$ & $\self$ & $1$ \\

      $\qnone$ & $\self$ & $\self$ & $\self$ & $0$ \\

      \bottomrule
    \end{tabular}
  \end{minipage}%
  \begin{minipage}[t]{0.45\textwidth}
    \vspace{2pt}  
    \begin{tikzpicture}[auto, node distance=1.5cm, thick, scale=0.9, transform shape]
  \node[state, accepting]              (q1) {$\qall$};
  \node[state,            below of=q1] (q0) {$\qnone$};
  \node[state,            left  of=q0] (q2) {$q_2$};
  \node[state, accepting, left  of=q2] (q3) {$q_3$};
  \node[state,            left=1.5cm of q1] (q4) {$q_4$};

  \path[->]
  (q1) edge[loop right]    node[]     {$\Sigma$} ()
  (q0) edge[loop right]    node[]     {$\Sigma$} ()
  (q3) edge[loop above]    node[]     {$a$}      ()
  (q3) edge[]              node[swap] {$b$}      (q2)
  (q3) edge[bend right=40] node[swap] {$c$}      (q0)
  (q2) edge[loop above]    node       {$b$}      ()
  (q2) edge[]              node[]     {$a$}      (q1)
  (q2) edge[]              node[swap] {$c$}      (q0)
  (q4) edge[]              node[]     {$b, c$}   (q1)
  (q4) edge[bend right=10] node[swap] {$a$}      (q2)
  ;
\end{tikzpicture}
  \end{minipage}
  \hfill
  \caption{Example of the representation of weakly acyclic languages.}\label{fig:ex:ds}
\end{figure}

\begin{example}
  Let $K$ and $L$ be the languages over $\Sigma = \{a, b, c\}$
  described by the regular expressions $a^* (\ew + b^+ a \Sigma^*)$
  and $a b^* a \Sigma^* + (b + c)\Sigma^*$. These languages are represented by
  the table depicted on the left of \Cref{fig:ex:ds}, which
  corresponds to the diagram on the right. It is readily seen that
  $\lang{q_3} = K$ and $\lang{q_4} = L$. \qed
\end{example}

Note that there is a risk that two distinct nodes represent the same
language. For example, suppose $\Sigma = \{a, b\}$ and that we create
a node $p \colon ((\self, \qall), 1)$. We obtain the following table,
where $\lang{p} = \lang{\qall} = \{a, b\}^*$:

\begin{center}
  \resizebox{0.325\columnwidth}{!}{%
    \begin{tabular}{cccc}
      \toprule
      
      \textbf{identifier} &
      \multicolumn{2}{c}{\textbf{succ.\ tuple}} &
      \textbf{flag} \\
      
      & $a$ & $b$ & \\
      
      \midrule
      
      $p$ & $\self$ & $\qall$ & $1$ \\
      
      $\qall$ & $\self$ & $\self$ & $1$ \\
      
      \bottomrule
    \end{tabular}
  }
\end{center}

We will avoid this: we will maintain a table of nodes such that the
language of distinct nodes is distinct. The procedure \make of
\Cref{alg:make} serves this purpose. Given a successor tuple $s$ and
an acceptance flag $b$, it first checks whether there is already an
entry for language $\lang{s, b}$, and if not it creates one with a new
identifier. The identifiers created by \make are generated in
increasing order. We assume that the table initially contains $\qnone$
and $\qall$ as the first two identifiers.

\begin{algorithm}
  \DontPrintSemicolon
  \LinesNumbered
  \SetKwProg{myproc}{}{}{}
  \KwIn{$s = (q_1, \ldots, q_m)$ and $b \in \{0, 1\}$ where $q_i = \self$ or an existing identifier}
  \KwResult{unique identifier $q$ with language $\lang{s, b}$ and $q \notin s$}
  \algskip

  \myproc{\make{$s, b$}\algdelim}{
    \lIf{the table contains $q \colon (s, b)$}{
      \Return{$q$}\label{ln:make:exists}
    }
    \Else{
      $q' \leftarrow \max\{r \in s : r \neq \self\}$\;
      $s' \leftarrow s[q' / \self]$\;
      \algskip
      \algskip

      \lIf{the table contains $q \colon (s', b)$}{
        \Return{$q$}\label{ln:make:equiv}
      }
      \Else{
        $q \leftarrow \text{next fresh identifier}$\;
        \textbf{add} $q \colon (s, b)$ to the table\;
        \Return{$q$}\;
      }
    }
  }
  
  \caption{Algorithm to maintain nodes.}\label{alg:make}
\end{algorithm}

Let us explain how $\make$ checks whether a node must be
created. Given a tuple $s = (q_1, q_2, \ldots, q_m)$, we write $s[r /
r']$ to denote the tuple obtained from $s$ by replacing each
occurrence $q_i = r$ by $r'$. As already explained, when trying to add
$(s, b)$ to the table, there might already be an entry $q \colon (s',
b')$ with $\lang{s', b'} = \lang{s, b}$. It can be shown that this
happens iff $b' = b$ and there exists $q' \in s$ such that $s' = s[q'
/ \self]$ (see the proof of \Cref{prop:make} below in the full
version~\cite{BCCCES25}). In fact, it can be shown that such a
$q'$ must be the maximal identifier of $s$; hence, there is no need to
check all elements of $s$. For example, in the aforementioned
problematic case, we have $s = (\self, \qall)$ and $s[\qall / \self] =
(\self,
\self)$ which is already in the table, so no node is created.

The amortized time of \make belongs to $\O(|\Sigma|)$. Further, it
works as intended:

\begin{restatable}{proposition}{propMake}\label{prop:make}
  A table obtained from successive calls to \Cref{alg:make} does not
  contain nodes $q \neq q'$ such that $\lang{q} = \lang{q'}$.
\end{restatable}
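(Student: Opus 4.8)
The plan is to prove, by induction on the order in which nodes are created, the invariant that at every point the table contains no two distinct nodes with the same language. Equivalently, I would prove the stronger invariant: for every pair of distinct identifiers $p \colon (s_p, b_p)$ and $q \colon (s_q, b_q)$ currently in the table, $\lang{p} \ne \lang{q}$, \emph{and} moreover each node is "reduced" in the sense that its successor tuple is not of the form $s'[\self / r]$ for an identifier $r$ that would collapse it to an already-present node. The base case is the initial table $\{\qnone, \qall\}$, and since $\lang{\qnone} = \emptyset \ne \Sigma^* = \lang{\qall}$, the invariant holds trivially.

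For the inductive step I would consider a call $\make(s,b)$ that actually adds a fresh node $q \colon (s,b)$ — i.e. both early-return checks on lines \ref{ln:make:exists} and \ref{ln:make:equiv} failed. The key claim to establish is the one already announced in the text: given a successor tuple $s$ and flag $b$, an existing entry $q' \colon (s',b')$ satisfies $\lang{s',b'} = \lang{s,b}$ \emph{iff} $b' = b$ and $s' = s[r/\self]$ for some $r \in s$; and furthermore any such $r$ must be $\max\{r \in s : r \neq \self\}$. Granting this claim, the two checks performed by \make — first looking up $(s,b)$ verbatim, then looking up $(s[q'/\self], b)$ with $q' = \max\{r \in s : r \ne \self\}$ — are together exhaustive: if some existing node had the same language as $(s,b)$, one of the two checks would have found it, contradicting the assumption that we reached the fresh-identifier branch. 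Hence the newly added node has a language distinct from all earlier ones, and combined with the induction hypothesis the invariant is preserved. An additional small point: I must check that adding $q$ does not retroactively create a collision among \emph{old} nodes — but old nodes are unchanged and their pairwise distinctness is exactly the induction hypothesis, so nothing needs reproving there.

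The crux is therefore the equivalence claim about when two nodes denote the same language, which is precisely the content deferred to the proof of \Cref{prop:make} in the full version. I would argue the two directions separately. For the "$\Leftarrow$" direction: if $s' = s[r/\self]$ and $b'=b$, then unfolding the recursive definition of $\lang{\cdot}$ shows $\lang{s',b'}$ and $\lang{s,b}$ agree — intuitively, replacing a transition to $r$ by a self-loop changes nothing exactly when $\lang{r}$ already equals the language of the node being defined, and one checks this is forced here because $s$ and $s'$ differ only on coordinates pointing to $r$ versus $\self$. For the "$\Rightarrow$" direction: suppose $\lang{q} = \lang{q'}$ for existing $q \colon (s,b)$, $q' \colon (s',b')$; first $b = b'$ since $\ew$-membership is read off the flag; then for each letter $a_i$ the residuals must coincide, $\lang{q}^{a_i} = \lang{q'}^{a_i}$, and since by the induction hypothesis no two existing identifiers share a language, the $a_i$-successors of $q$ and $q'$ must be "the same" — equal as identifiers, or one is $\self$ (pointing back to $\lang q$) while the other equals the identifier whose language is $\lang q = \lang{q'}$. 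Tracking which coordinates can differ, one concludes $s$ and $s'$ differ only by swapping $\self$ for a single identifier $r$, and the reducedness part of the invariant (each stored node already had its maximal identifier pushed into $\self$ position when possible, via the line-\ref{ln:make:equiv} check at creation time) pins $r$ down to $\max\{r \in s : r \ne \self\}$. The main obstacle I anticipate is this last bookkeeping: one needs the strengthened invariant (reducedness of stored nodes) carried along the induction, rather than just pairwise distinctness, in order to justify that checking a single candidate $q'$ in \make suffices; proving that the freshly added node is itself reduced — so the strengthened invariant is maintained — requires re-examining that the line-\ref{ln:make:equiv} lookup did in fact normalize $s$ against its maximal identifier.
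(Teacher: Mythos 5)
Your overall strategy (consider the first call to \make that would create a duplicate, compare the flag and the successor tuples of the new node and the colliding node coordinatewise, and argue that one of the two lookups in \Cref{alg:make} must then have fired) is the same as the paper's, but there is a genuine gap exactly at the step you yourself call the crux: pinning the colliding identifier down to $\max\{r \in s : r \neq \self\}$. You justify this with a ``reducedness'' invariant (``each stored node already had its maximal identifier pushed into \self{} position when possible''), which is never made precise and, as described, does not give the conclusion: normalization at creation time says nothing that forbids a stored node $q'$ from having in its tuple an identifier larger than $q'$, which is the situation you need to exclude. What actually closes this case in the paper is the ordering discipline of \make: identifiers are created in increasing order and an input tuple may only mention already-existing identifiers, so every stored node's successors are strictly smaller than the node itself. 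Hence if the colliding node $q'$ satisfied $q' < q'' = \max\{r \in s : r \neq \self\}$, then choosing $a$ with $q^a = q''$ and using the coordinatewise comparison gives $(q')^a = q''$ with $q'' > q'$, a contradiction. Your proposal never invokes identifier monotonicity, and without it (or a genuinely equivalent invariant, stated and maintained precisely) the case $q' < q''$ is simply left open, so the single lookup on Line~\ref{ln:make:equiv} is not justified and the induction does not go through.

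A second, smaller problem: the ``$\Leftarrow$'' direction of your claimed equivalence is false as stated. From $s' = s[r/\self]$ and $b' = b$ one cannot conclude $\lang{s',b'} = \lang{s,b}$; this holds only when the matching entry is the substituted node $r$ itself (equivalently, when $\lang{r} = \lang{s,b}$), and it is not ``forced'' by the shape of the tuples. For instance, over $\Sigma = \{a,b\}$ with the initial table, $s = (\qall,\qall)$ and $b = 0$ define $\Sigma^+$, yet $s[\qall/\self] = (\self,\self)$ with flag $0$ is the entry of $\qnone$, whose language is $\emptyset$. Fortunately this direction is not needed for the proposition: a call that returns an existing node leaves the table unchanged and can never create a duplicate, so only the ``$\Rightarrow$'' analysis matters, which is why the paper proves only that direction (its properties (i) and (ii)) together with the identifier-ordering argument above. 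But since your plan presents the full equivalence as the key claim and proposes to prove both directions, the argument as planned would break there as well.
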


\section{Operations on weakly acyclic languages}
\label{sec:oper}
\subsection{Unary and binary operations}\label{ssec:binop}

\Cref{alg:comp,alg:inter} describe recursive procedures that
respectively complement a weakly acyclic language, and intersect two
weakly acyclic languages. They are \emph{very similar} to the
classical algorithms for negation and conjunction in OBDDs
(see~\cite[Chap.~6, Algorithms~27 and~26]{EB23}); they essentially
only differ in the usage of \self. For this reason, their correcteness
and time complexity are established with the same arguments. Let us
expand for readers less familiar with OBDDs.

Procedure \comp terminates because on each recursive call, $q^a$ is a
smaller identifier than $q$. Procedure \inter terminates because on
each recursive call, either $p^a \neq \self$ and $p^a$ is a smaller
identifier than $p$, or likewise for $q^a$ and $q$.

\begin{center}
  \resizebox{0.975\columnwidth}{!}{%
    \begin{minipage}[t]{0.47\textwidth}
      \begin{algorithm}[H]
        \DontPrintSemicolon
        \LinesNumbered
        \SetKwProg{myproc}{}{}{}
        \KwIn{identifier $q$}
        \KwResult{id.\ $r$ : $\lang{r} = \overline{\lang{q}}$}
        \algskip
        
        \myproc{\comp{$q$}\algdelim}{
          \If{$q = \qnone$}{
            \Return{$\qall$}\;
          }
          \ElseIf{$q = \qall$}{
            \Return{$\qnone$}\;
          }
          \Else{
            \For{$a \in \Sigma$\label{ln:comp:for}}{
              \If{$q^a = \self$}{
                $s_a \leftarrow \self$\;
              }
              \Else{
                $s_a \leftarrow \comp{$q^a$}$\;
              }
            }
            
            \algskip
            \algskip
            \Return{\make{$s, \neg \flag{q}$}}\label{ln:comp:return}\hspace*{-1cm}
          }
        }
        
        \caption{Complement.}\label{alg:comp}
      \end{algorithm}
    \end{minipage}%
    \begin{minipage}[t]{0.55\textwidth}
      \begin{algorithm}[H]
        \DontPrintSemicolon
        \LinesNumbered
        \SetKwProg{myproc}{}{}{}
        \KwIn{identifiers $p$ and $q$}
        \KwResult{id.\ $r$ : $\lang{r} = \lang{p} \cap \lang{q}$}
        \algskip
        
        \myproc{\inter{$p, q$}\algdelim}{
          \If{$p = \qnone$ or $q = \qnone$}{
            \Return{$\qnone$}\;
          }
          \ElseIf{$p = \qall$}{
            \Return{$q$}\;
          }
          \ElseIf{$q = \qall$}{
            \Return{$p$}\;
          }
          \Else{
            \For{$a \in \Sigma$\label{ln:inter:for}}{
              $\mathmakebox[32pt][l]{p' \leftarrow p^a} \textbf{ if } \mathmakebox[42pt][l]{p^a \neq \self} \textbf{ else } p$\hspace*{-2pt}\;
              $\mathmakebox[32pt][l]{q' \leftarrow q^a} \textbf{ if } \mathmakebox[42pt][l]{q^a \neq \self} \textbf{ else } q$\hspace*{-2pt}\;
              \algskip
              \algskip
              
              \If{$p' = p$ and $q' = q$}{
                $s_a \leftarrow \self$\;
              }
              \Else{
                $s_a \leftarrow \inter{$p', q'$}$\;
              }
            }
            
            \algskip
            \Return{\make{$s, \flag{p} \land \flag{q}$}}\label{ln:inter:return}\hspace*{-1cm}
          }
        }
        
        \caption{Intersection.}\label{alg:inter}
      \end{algorithm}
  \end{minipage}}
\end{center}

The two procedures may have an exponential complexity since the
recursion may recompute the same values repeatedly. However, this is
easily avoided with memoization. This will apply to all our
algorithms, \ie, each recursive procedure $\texttt{t}(x)$ can first
check in a dictionary whether the output value for $x$ has already
been computed, and only proceed to compute it if not. To avoid
cluttering the presentation, we will not explicitly describe such
memoization in our pseudocode.

The following proposition is established with standard arguments (see
the appendix of the full version).

\begin{proposition}\label{prop:comp:inter:full}
  \Cref{alg:comp} and \Cref{alg:inter} are correct. Moreover, with
  memoization, they respectively work in time $\O(|\Sigma| n_q)$ and
  $\O(|\Sigma| n_p n_q)$, where $n_x$ is the number of nodes reachable
  from node $x$.
\end{proposition}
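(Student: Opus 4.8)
The plan is to mirror the correctness and complexity arguments for the corresponding OBDD routines, the only new ingredient being the bookkeeping around the symbol \self. The structural fact underlying everything is that \make\ generates identifiers in increasing order, so in any table produced by it every non-\self\ entry of a node's successor tuple is a strictly smaller identifier; together with \Cref{prop:make} this lets us speak of \emph{the} node recognising a given weakly acyclic language, and it supplies a well-founded order for the inductions below.

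For \comp, I would show by induction on the identifier $q$ that on input $q$ the procedure returns the node recognising $\overline{\lang{q}}$, which is weakly acyclic by \Cref{prop:wa:closures}. The base cases $q \in \{\qnone, \qall\}$ are immediate since $\overline{\emptyset} = \Sigma^*$ and $\overline{\Sigma^*} = \emptyset$. For the step, the two identities to use are $\overline{L}^a = \overline{L^a}$ and $\ew \in \overline{L} \iff \ew \notin L$. The condition $q^a = \self$ means exactly $\lang{q}^a = \lang{q}$, hence $\overline{\lang{q}}^a = \overline{\lang{q}}$, so $s_a \leftarrow \self$ is correct in that case; otherwise $q^a < q$ and, by the induction hypothesis, $s_a$ becomes the node for $\overline{\lang{q^a}} = \overline{\lang{q}^a}$. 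Since the flag $\neg\flag{q}$ records $\ew$-membership of $\overline{\lang{q}}$, the pair passed to \make\ describes $\overline{\lang{q}}$, and \make\ returns its node. Termination is clear: every recursive call is on a strictly smaller identifier.

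For \inter, the relevant identities are $(K \cap L)^a = K^a \cap L^a$ and $\ew \in K \cap L \iff \ew \in K \wedge \ew \in L$. I would prove by induction on $p + q$ that on input $(p, q)$ the procedure returns the node for $\lang{p} \cap \lang{q}$ (weakly acyclic by \Cref{prop:wa:closures}); the four base cases are immediate. In the recursive case the values $p'$ and $q'$ computed in the loop satisfy $\lang{p'} = \lang{p}^a$ and $\lang{q'} = \lang{q}^a$ by the meaning of \self. If $p' = p$ and $q' = q$ then $(\lang{p}\cap\lang{q})^a = \lang{p}\cap\lang{q}$, so $s_a \leftarrow \self$ is correct; otherwise, by the induction hypothesis, $s_a$ becomes the node for $\lang{p'}\cap\lang{q'} = (\lang{p}\cap\lang{q})^a$. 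The flag $\flag{p}\wedge\flag{q}$ handles $\ew$-membership, so \make\ returns the node for $\lang{p}\cap\lang{q}$. For termination, note that $p' \leq p$ and $q' \leq q$ always, with equality exactly when the corresponding successor equals \self, and that at least one inequality is strict whenever the recursive branch is entered, so $p + q$ strictly decreases.

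For the running times, I would add memoisation keyed on $q$ in \comp\ and on the pair $(p, q)$ in \inter, so that each distinct argument is handled at most once. The arguments reached inside \comp\ from an initial $q$ are obtained by repeatedly passing to a successor, hence lie among the $n_q$ nodes reachable from $q$; likewise every pair reached inside \inter\ from $(p, q)$ has first component reachable from $p$ and second component reachable from $q$, so there are at most $n_p\, n_q$ of them. A single handled call does $\O(|\Sigma|)$ work in its loop, performs a constant number of dictionary accesses, and makes one call to \make, which costs amortised $\O(|\Sigma|)$; summing over the at most $n_q$, respectively $n_p\,n_q$, handled calls gives $\O(|\Sigma|\, n_q)$ and $\O(|\Sigma|\, n_p\, n_q)$. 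The only steps requiring care — the closest thing to an obstacle — are keeping the \self\ convention straight in the residual identities, picking sound well-founded measures for the two recursions, and checking that the set of recursion arguments never leaves the reachable-node set, which is precisely what bounds the memo table and hence the time; everything else is the standard OBDD analysis.
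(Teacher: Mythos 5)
Your proposal is correct and follows essentially the same route as the paper's own proof: correctness via the residual identities $\overline{L}^a = \overline{L^a}$ and $(K \cap L)^a = K^a \cap L^a$ together with the acceptance-flag check and the meaning of \self, termination via the fact that non-\self{} successors have strictly smaller identifiers, and the time bounds by counting, under memoization, at most one handled call per reachable node (resp.\ per pair of reachable nodes) with $\O(|\Sigma|)$ work each, including the amortized cost of \make. The only differences are presentational (you make the inductions and the memoization bookkeeping slightly more explicit than the paper does).
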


It is easy to adapt \Cref{alg:inter} to other binary operations such
as union: It suffices to change the base cases and the way the
acceptance flag is set.

Note that testing emptiness, universality and language equality can be
done in \emph{constant time}. Indeed, as the table keeps a unique
identifier for each language, we have $\lang{q} = \emptyset$ iff $q =
\qnone$, $\lang{q} = \Sigma^*$ iff $q = \qall$, and $\lang{p} =
\lang{q}$ iff $p = q$.

\subsection{Fixed-length relations: a general approach}\label{ssec:rel}

For a relation $R \subseteq \Sigma^* \times \Sigma^*$ and a language
$L \subseteq \Sigma^*$, we define $\Post{R}{L} \defeq \{v \in \Sigma^*
: u \in L, (u, v) \in R\}$ and $\Pre{R}{L} \defeq \{u \in \Sigma^* :
(u, v) \in R, v \in L\}$. A relation is said to be \emph{fixed-length}
if $(u, v) \in R$ implies $|u| = |v|$.

We consider fixed-length regular relations, \ie, those that can be
represented by automata over alphabet $\Sigma \times \Sigma$, which we
call transducers. Given a fixed-length regular relation $R$ and a
weakly acyclic language $L$, $\Post{R}{L}$ and $\Pre{R}{L}$ may not be
weakly acyclic. For example, consider the relation $R$ and the
language $L$ depicted by the transducer and automata of
\Cref{fig:rel:bad}. We have $\Post{R}{L} = (a + b)^* b$ which is not
weakly acyclic (as its minimal DFA has a nontrivial cycle).

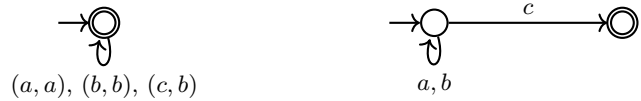
\begin{figure}
  \centering
  \begin{tikzpicture}[
  node distance=2.5cm, auto, thick, initial text={},
  scale=1.0, transform shape
]
  \tikzstyle{cstate} = [state, minimum size=10pt]
  
  \node[cstate, initial, accepting] (p) {};
  
  \path[->]
  (p) edge[loop below] node {$(a, a)$, $(b, b)$, $(c, b)$} ()
  ;

  \node[cstate, initial, right=4cm of p] (q) {};
  \node[cstate, accepting, right of=q] (r) {};
  
  \path[->]
  (q) edge[loop below] node {$a, b$} ()
  (q) edge             node {$c$}    (r)
  ;    
\end{tikzpicture}  
  \caption{\emph{Left}: A transducer $\T$ that converts each
    occurrence of letter $c$ into letter $b$. \emph{Right}: A weakly
    acyclic DFA $\A$ accepting language $(a + b)^*
    c$.}\label{fig:rel:bad}
\end{figure}

Yet, under the guarantee that the resulting language is weakly
acyclic, we can compute $\Post{R}{L}$ and $\Pre{R}{L}$. The key
observation is that if a DFA accepting a weakly acyclic language has a
cycle, then this cycle can be ``contracted''. Thus, given a transducer
$\T$ and a node $q$, we apply the powerset contruction on the pairing
of $T$ and $q$, and we contract cycles on the fly. Informally, cycle
detection is achieved by testing whether a successor is already on the
call stack. \Cref{alg:pre:gen} implements this idea for
$\Pre{R}{L}$. In the pseudocode, symbols $S$ and $S'$ denote sets of
pairs of states.

\begin{algorithm}
  \DontPrintSemicolon
  \LinesNumbered
  \SetKwFunction{make}{make}
  \SetKwFunction{pre}{pre}
  \SetKwFunction{prep}{pre-aux}
  \SetKwFunction{tsucc}{succ}
  \SetKwProg{myproc}{}{}{}

  \KwIn{transducer $\T = (P, \Sigma \times \Sigma, \delta, p_0, F)$, and identifier $q$ such that $\Pre{\lang{T}}{\lang{q}}$ is weakly acyclic}
  \KwResult{identifier $r$ such that $\lang{r} = \Pre{\lang{\T}}{\lang{q}}$}
  \algskip

  \myproc{\pre{$\T, q$}\algdelim}{
    \myproc{\tsucc{$q, b$}\algdelim}{
      \lIf{$q^b \neq \self$}{
        \Return{$q^b$}
      }
      \lElse{
        \Return{$q$}
      }
    }

    \algskip
    \algskip
    \myproc{\prep{$S$}\algdelim}{
      \textbf{mark} $S$\;
      
      \algskip
      \algskip
      \For{$a \in \Sigma$}{
        $S' \leftarrow \bigcup_{b \in \Sigma}
        \{(p', q') : (p, q) \in S,\ p' \in \delta(p, (a, b)),\ q' = \tsucc{$q, b$}\}$\label{ln:S:def}
        
        \algskip
        \algskip
        \lIf{$S'$ is marked}{
          $s_a \leftarrow \self$\tcp*[f]{Contract detected cycle}\label{ln:S:marked}
        }
        \lElse{
          \hspace*{69.25pt}$s_a \leftarrow \prep{$S'$}$\label{ln:S:unmarked}
        }
      }
      
      \algskip
      \algskip
      \textbf{unmark} $S$\;
      
      \algskip
      \Return{\make{$s,\ \exists (p, q) \in S : p \in F \land \flag{q}$}}\label{ln:pre:gen:make}
    }

    \algskip
    \algskip
    \Return{\prep{$\{(p_0, q)\}$}}\;
  }
  
  \caption{General algorithm for the Pre operation}\label{alg:pre:gen}
\end{algorithm}

Let us analyze \Cref{alg:pre:gen}. We write $\Pre{p}{q}$ to denote
$\Pre{\lang{p}}{\lang{q}}$. Let $\lang{S} \defeq \bigcup_{(p, q) \in
  S} \Pre{p}{q}$. Given $S$ and $a \in \Sigma$, we write $S \trans{a}
S'$ where
\[
S' \defeq \bigcup_{b \in \Sigma} \left\{(p', q') : (p, q) \in S, p' \in
\delta(p, (a, b)), q' = \tsucc{$q, b$}\right\}.
\]
In words, ``$S \trans{a} S'$'' denotes that \prep{$S$} constructs the
set $S'$ on Line~\ref{ln:S:def}. By extension, we write $S \trans{\ew}
S'$ if $S = S'$, and we write $S \trans{a_1 a_2 \cdots a_n} S'$ if $S
= S_0 \trans{a_1} S_1 \trans{a_2} S_2 \cdots \trans{a_n} S_n = S'$ for
some $S_0, S_1, \ldots, S_n$.

\begin{restatable}{proposition}{propSRes}\label{prop:s:res}
  Let $w \in \Sigma^*$. If $S \trans{w} S'$, then $\lang{S}^w =
  \lang{S'}$.
\end{restatable}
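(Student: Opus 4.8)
The plan is to first establish the statement for a single letter, $w \in \Sigma$, by directly unfolding the definitions of $\Pre{\cdot}{\cdot}$, of the transducer, and of the node semantics, and then to obtain the general case by an easy induction on $|w|$ using the residual identity $L^{uv} = (L^u)^v$ together with the fact that $S \trans{w} S'$ decomposes into single-letter steps.

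I would begin with two elementary observations. First, residuals distribute over the union defining $\lang{S}$: for every letter $a$,
\[
  \lang{S}^a \;=\; \Bigl(\,\bigcup_{(p,q) \in S} \Pre{p}{q}\,\Bigr)^{\!a} \;=\; \bigcup_{(p,q) \in S} \bigl(\Pre{p}{q}\bigr)^a .
\]
Second, for every node $q$ and letter $b$ we have $\lang{q}^b = \lang{\tsucc{$q, b$}}$: by the recursive definition of the language of a node, $bv \in \lang{q}$ holds iff ($q^b = \self$ and $v \in \lang{q}$) or ($q^b \neq \self$ and $v \in \lang{q^b}$), while $\tsucc{$q, b$}$ is $q$ in the first case and $q^b$ in the second.

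The heart of the argument is the single-letter identity. Fix $a \in \Sigma$ and let $S \trans{a} S'$; by definition $S'$ is the set $\bigcup_{b \in \Sigma}\{(p', q') : (p, q) \in S,\ p' \in \delta(p, (a, b)),\ q' = \tsucc{$q, b$}\}$ built on Line~\ref{ln:S:def}. I claim that for each $(p, q) \in S$,
\[
  \bigl(\Pre{p}{q}\bigr)^a \;=\; \bigcup_{b \in \Sigma}\ \bigcup_{p' \in \delta(p,(a,b))} \Pre{\lang{p'}}{\lang{q}^b} .
\]
Indeed, writing $R_p = \lang{p} \subseteq \Sigma^* \times \Sigma^*$, a word $u'$ lies in $(\Pre{p}{q})^a$ iff there is $v$ with $(au', v) \in R_p$ and $v \in \lang{q}$. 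Since $R_p$ is fixed-length, $|v| = |au'| \geq 1$, so $v = bv'$ with $b \in \Sigma$ and $|v'| = |u'|$. Reading the pair $(au', bv')$ letter by letter over $\Sigma \times \Sigma$, we get $(au', bv') \in R_p$ iff $(u', v') \in \lang{p'}$ for some $p' \in \delta(p, (a, b))$, and $bv' \in \lang{q}$ iff $v' \in \lang{q}^b$. Hence $u'$ lies in the left-hand side iff there are $b$, $p' \in \delta(p,(a,b))$ and $v'$ with $(u', v') \in \lang{p'}$ and $v' \in \lang{q}^b$, which is exactly membership in the right-hand side. Taking the union over $(p, q) \in S$, applying the first observation, and using $\lang{q}^b = \lang{\tsucc{$q, b$}}$, the right-hand side becomes $\bigcup_{(p', q') \in S'} \Pre{p'}{q'} = \lang{S'}$; that is, $\lang{S}^a = \lang{S'}$.

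Finally I would induct on $|w|$. For $w = \ew$, $S \trans{\ew} S'$ means $S = S'$, so $\lang{S}^\ew = \lang{S} = \lang{S'}$. For $w = a w''$, take a witnessing chain $S \trans{a} S'' \trans{w''} S'$; the single-letter case gives $\lang{S}^a = \lang{S''}$, the induction hypothesis gives $\lang{S''}^{w''} = \lang{S'}$, and so $\lang{S}^w = (\lang{S}^a)^{w''} = \lang{S''}^{w''} = \lang{S'}$. The only step requiring genuine care is the single-letter identity: one must invoke the fixed-length property of $R_p$ to see that consuming the letter $a$ in the first component forces consuming exactly one letter in the second, and one must match the transducer's nondeterministic branching — the union over $b \in \Sigma$ and over $p' \in \delta(p,(a,b))$ — with the definition of $S'$; the rest is routine bookkeeping.
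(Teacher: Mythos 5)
Your proposal is correct and follows essentially the same route as the paper: the paper also proves the single-letter case $w = a$ by a chain of equivalences unfolding the definitions of $\lang{S}$, the transducer transition $p' \in \delta(p,(a,b))$, and the node successor (using that $\lang{q}^b$ is the language of $\tsucc{q}{b}$, with equal lengths forced by the alphabet $\Sigma \times \Sigma$), and then dispatches the general case by induction on $|w|$. Your write-up merely makes explicit a few steps the paper leaves implicit (distribution of residuals over the union and the base case $w = \ew$), which is fine.
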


\begin{restatable}{proposition}{propSCycle}\label{prop:S:cycle}
  If $S_0 \trans{*} S' \trans{*} S \trans{a} S'$ for some $a \in
  \Sigma$, and $\lang{S_0}$ is weakly acyclic, then $\lang{S} =
  \lang{S'}$.
\end{restatable}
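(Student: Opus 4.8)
The plan is to exploit the characterization of weakly acyclic languages in terms of residuals, namely \Cref{lem:residuals}, together with the relationship between the reachability relation $S \trans{w} S'$ on sets of state pairs and the residual operation on $\lang{S}$ given by \Cref{prop:s:res}. Write $S_0 \trans{u} S' \trans{v} S \trans{a} S'$, so that $S \trans{a} S'$ with $a \in \Sigma$. First I would apply \Cref{prop:s:res} to the run $S' \trans{v} S \trans{a} S'$: letting $x = va$, we get $\lang{S'}^{x} = \lang{S'}$, i.e.\ $\lang{S'}$ has a residual-fixing word $x \in \Sigma^+$. Next, applying \Cref{prop:s:res} again to $S_0 \trans{u} S'$ gives $\lang{S'} = \lang{S_0}^{u}$, so $\lang{S'}$ is itself a residual of the weakly acyclic language $\lang{S_0}$; since a residual of a weakly acyclic language is weakly acyclic (this is the observation opening \Cref{sec:ds}, and it follows from \Cref{lem:residuals} applied to $L = \lang{S_0}$), the language $L' \defeq \lang{S'}$ is weakly acyclic.

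Now I would invoke \Cref{lem:residuals} on $L'$ directly. We have $L'^{\ew} = L' = L'^{x}$ with $x = va \in \Sigma^+$, so the lemma tells us that $L'^{\ew} = L'^{w}$ for every $w \in \alp{x}^*$. In particular, since $v a \in \Sigma^*$ and $\alp{v} \cup \{a\} = \alp{va} = \alp{x}$, the word $w = v$ lies in $\alp{x}^*$ (each letter of $v$ is in $\alp{x}$), hence $L'^{v} = L'$. Finally, by \Cref{prop:s:res} applied to $S' \trans{v} S$, we have $\lang{S} = \lang{S'}^{v} = L'^{v} = L' = \lang{S'}$, which is exactly the claim $\lang{S} = \lang{S'}$.

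The only step requiring care — and the one I expect to be the main obstacle — is making sure \Cref{lem:residuals} is being applied to the \emph{right} language. The lemma is a statement about residuals of a single fixed weakly acyclic language $L$; here that language must be $L' = \lang{S'}$, and the two facts I need are (i) that $L'$ is weakly acyclic and (ii) that $L'$ has a nonempty residual-fixing word whose alphabet contains every letter of $v$. Fact (i) comes for free because $L'$ is a residual of $\lang{S_0}$, which is weakly acyclic by hypothesis, and weak acyclicity is preserved under taking residuals. Fact (ii) is where the cycle structure enters: the word $x = va$ obtained by concatenating the labels along $S' \trans{v} S \trans{a} S'$ is nonempty (because of the final $a$) and $v \in \alp{x}^*$ trivially. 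One should double-check the degenerate case $v = \ew$ (i.e.\ $S' = S$, a self-loop): then $x = a$, $\alp{x} = \{a\}$, $v = \ew \in \alp{x}^*$, and the conclusion $\lang{S} = \lang{S'}$ holds trivially since $S = S'$ — so the argument is uniform. Everything else is a mechanical chaining of \Cref{prop:s:res} and \Cref{lem:residuals}.
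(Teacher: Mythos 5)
Your proof is correct and takes essentially the same route as the paper: both arguments chain \Cref{prop:s:res} with \Cref{lem:residuals} along the decomposition $S_0 \trans{u} S' \trans{v} S \trans{a} S'$. The only cosmetic difference is that the paper applies the lemma directly to $\lang{S_0}$ with prefix $u$ and suffix $va$, whereas you first pass to the residual $\lang{S'} = \lang{S_0}^u$ (using closure of weak acyclicity under residuals) and apply the lemma with the empty prefix.
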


\begin{proposition}
  \Cref{alg:pre:gen} is correct and terminates.
\end{proposition}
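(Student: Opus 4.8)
The plan is to establish termination and correctness separately, leaning on \Cref{prop:s:res,prop:S:cycle} for the semantic content and on \Cref{prop:make} for the bookkeeping. For termination, I would first observe that every set $S$ passed to \prep satisfies $S \subseteq P \times Q_q$, where $Q_q$ is the finite set of nodes obtained from $q$ by iterating successor pointers (the reason being that the \succ routine only follows such pointers and $\delta$ stays within $P$). Since each call of \prep marks its argument on entry and unmarks it on exit, at any moment the marked sets are exactly the arguments of the \prep-calls currently on the stack, and these form a chain $S_0 \trans{a_1} S_1 \trans{a_2} \cdots \trans{a_n}$ with $S_0 = \{(p_0, q)\}$; because a recursive call on $S'$ (Line~\ref{ln:S:unmarked}) is issued only when $S'$ is \emph{unmarked}, the sets on the stack are pairwise distinct, so the recursion depth is at most $2^{|P|\cdot|Q_q|}$. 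As each individual call performs only finitely much work besides recursing---a constant number of passes over $\Sigma$, the computation of the finite sets $S'$, and one call to \make, which halts---the whole procedure terminates.

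For correctness, write $\lang{S} \defeq \bigcup_{(p,q') \in S}\Pre{p}{q'}$ as in the analysis above, and note $\lang{\{(p_0,q)\}} = \Pre{\lang{p_0}}{\lang{q}} = \Pre{\lang{\T}}{\lang{q}}$ since $\lang{p_0}$ is the relation accepted by $\T$. Hence it suffices to prove that every terminating call \prep{$S$} returns an identifier $r$ with $\lang{r} = \lang{S}$. By \Cref{prop:make}, $r$ is the unique identifier whose language equals $\lang{s,\beta}$, where $(s,\beta)$ are the arguments handed to \make on Line~\ref{ln:pre:gen:make} and $\beta$ abbreviates the Boolean $\exists (p,q') \in S : p \in F \land \flag{q'}$; so the goal reduces to $\lang{s,\beta} = \lang{S}$. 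I would prove this by a nested induction: an outer induction on the height of the recursion subtree rooted at \prep{$S$} (legitimate once termination is known), and, for a fixed call, an inner induction on $|w|$ showing $w \in \lang{s,\beta} \iff w \in \lang{S}$ for all words $w$.

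The base case $w = \ew$ holds because $\ew \in \lang{s,\beta}$ iff $\beta = 1$, i.e.\ iff some $(p,q') \in S$ has $p \in F$ and $\flag{q'} = 1$; as $\lang{\T}$ is fixed-length, this is equivalent to $\ew \in \Pre{p}{q'}$ for some $(p,q') \in S$, that is, to $\ew \in \lang{S}$. For $w = a w'$, let $S'$ be the set built on Line~\ref{ln:S:def} for the letter $a$, so $S \trans{a} S'$ and hence $w \in \lang{S} \iff w' \in \lang{S}^a = \lang{S'}$ by \Cref{prop:s:res}. If $S'$ was unmarked, then $s_a$ is the identifier returned by \prep{$S'$}, which by the outer hypothesis satisfies $\lang{s_a} = \lang{S'}$, and $w = a w' \in \lang{s,\beta}$ iff $w' \in \lang{s_a} = \lang{S'}$, matching $\lang{S}$. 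If $S'$ was marked, then $s_a = \self$ (Line~\ref{ln:S:marked}), so $w = a w' \in \lang{s,\beta}$ iff $w' \in \lang{s,\beta}$, which by the inner hypothesis is iff $w' \in \lang{S}$; moreover $S'$ equals some $S_j$ on the current stack chain, yielding $S_0 \trans{*} S' \trans{*} S \trans{a} S'$ with $\lang{S_0} = \Pre{\lang{\T}}{\lang{q}}$ weakly acyclic, so \Cref{prop:S:cycle} gives $\lang{S} = \lang{S'}$ and thus $w' \in \lang{S} \iff w' \in \lang{S'} \iff w \in \lang{S}$. This closes both inductions, and applying the claim to the top-level call $\prep{\{(p_0,q)\}}$ gives $\lang{r} = \Pre{\lang{\T}}{\lang{q}}$, as required.

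The main obstacle will be the contracted-cycle case: the subtle point is that replacing $s_a$ by \self is sound precisely because \Cref{prop:S:cycle} forces every set on the looped-back segment of the call stack to denote one and the same language, and the self-reference this introduces into $\lang{s,\beta}$ must then be untangled via the auxiliary induction on $|w|$. The remaining ingredients---the empty-word case, the fact that ``marked'' coincides with ``currently on the call stack'', and the passage through \make---are routine.
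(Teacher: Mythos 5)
Your proof is correct and takes essentially the same route as the paper's: the same acceptance-flag argument, the same use of \Cref{prop:s:res,prop:S:cycle} with the marked/unmarked case distinction (marked meaning ``on the call stack''), and a termination argument that likewise rests on the finiteness of the possible subsets together with the fact that sets on the stack never repeat. The only difference is presentational: your nested induction on recursion height and on $|w|$ makes explicit the step the paper leaves informal, namely that setting each $s_a$ to a node for $\lang{S}^a$, or to $\self$ when $\lang{S}^a=\lang{S}$, indeed yields through \make{} a node recognizing $\lang{S}$.
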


\begin{proof}
  Termination follows easily by the fact that only finitely many
  subsets can be generated (see the appendix of the full version). Let
  us prove correctness.
  
  Let $S_0 \defeq \{(p_0, q)\}$ be the input and let \prep{$S$} be a
  recursive call made by \pre{$S_0$}. Recall that $\lang{S} =
  \bigcup_{(p, q) \in S} \Pre{p}{q}$. So, we must show that
  $\lang{\prep{$S$}} = \lang{S}$. To do so, we must prove that, for
  each $a \in \Sigma$, \Cref{alg:pre:gen} assigns to $s_a$ the state
  accepting $\lang{S}^a$. Moreover, we must show that the acceptance
  flag is set correctly. We show the latter first, and then the
  former.

  We have $\ew \in \bigcup_{(p, q) \in S} \Pre{p}{q}$ iff there exists
  $(p, q) \in S$ such that $(\ew, \ew) \in \lang{p}$ and $\ew \in
  \lang{q}$ iff there exists $(p, q) \in S$ such that $p \in F$ and
  $\flag{q} = \true$. Thus, the flag is set correctly on
  Line~\ref{ln:pre:gen:make}.

  Let $a \in \Sigma$. Let $S'$ be the set computed on Line~\ref{ln:S:def}
  for letter $a$. By definition, we have $S \trans{a} S'$. We make a
  case distinction on whether $S'$ is marked or not.

  \medskip
  \noindent\emph{Case 1: $S'$ is marked}. Since $S'$ is marked, we
  have $S_0 \trans{*} S' \trans{*} S$. Therefore, $S_0 \trans{*} S'
  \trans{*} S \trans{a} S'$. By \Cref{prop:S:cycle}, we have $\lang{S}
  = \lang{S'}$. Moreover, by \Cref{prop:s:res}, we have $\lang{S}^a =
  \lang{S'}$. Thus, $\lang{S}^a = \lang{S'} = \lang{S}$. As $S'$ is
  marked, \Cref{alg:pre:gen} executes Line~\ref{ln:S:marked}, which
  correctly assigns ``$s_a \leftarrow \self$''.

  \medskip
  \noindent\emph{Case 2: $S'$ is unmarked}. By \Cref{prop:s:res},
  $\lang{S}^a = \lang{S'}$. As $S'$ is unmarked, \Cref{alg:pre:gen}
  executes Line~\ref{ln:S:unmarked}, which correctly assigns ``$s_a
  \leftarrow \prep{$S'$}$''. \qed
\end{proof}

Note that \Cref{alg:pre:gen} has exponential worse-case time
complexity since this is already the case for the particular case of
OBDDs (see \cite[Chap.~6.5, pp.~148--149]{EB23}). Indeed, given an arbitrary NFA $A$, one can find a
(deterministic) transducer and a DFA whose composition yields $A$, and it is well
known that determinizing an NFA  blows up its size exponentially in the worst case.

\subsection{Fixed-length relations: a more efficient approach}\label{ssec:wa:rel}

As just mentioned, pairing a transducer and an automaton can result in
a nondeterministic automaton. For this reason, \Cref{alg:pre:gen}
determinizes on the fly by constructing subsets of states, and
minimizes on the fly by contracting cycles.

We describe a setting in which determinism is guaranteed, and hence
where cycle contraction is avoided. This allows for a polynomial-time
time procedure.

Observe that a weakly acyclic language $K$ over $\Sigma \times \Sigma$
can be represented with our data structure. As shown in
\Cref{fig:rel:bad}, even if $K$ and $L$ are weakly acyclic, it is not
necessarily the case that $\Post{K}{L}$ and $\Pre{K}{L}$ are weakly
acyclic. Yet, we provide a sufficient condition under which weak
acyclicity is guaranteed.

Let $K$ and $L$ be weakly acyclic languages respectively over $\Sigma
\times \Sigma$ and $\Sigma$. We say that $K$ and $L$ are
\emph{pre-compatible} if for every $a \in \Sigma$ the following holds:
\begin{itemize}
\item there exists at most one $b \in \Sigma$ such that $K^{(a, b)}
  \neq \emptyset$ and $L^b \neq \emptyset$; and

\item if $K^{(a, b)} \neq \emptyset$ and $L^b \neq \emptyset$, then
  $K^{(a, b)}$ and $L^b$ are pre-compatible.
\end{itemize}

\Cref{alg:pre:comp} describes a procedure that computes a node
accepting $\Pre{K}{L}$. Similarly to \Cref{alg:comp,alg:inter}, a
simple analysis yields the following proposition (see the appendix of
the full version).

\begin{proposition}\label{prop:pre:comp:full}
  \Cref{alg:pre:comp} is correct. Moreover, with memoization, it works
  in time $\O(|\Sigma|^2 n_p n_q)$, where $n_x$ is the number of nodes
  reachable from node $x$.
\end{proposition}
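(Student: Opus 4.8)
The plan is to follow the template of the correctness-and-complexity analysis of \Cref{alg:inter}, the one genuinely new ingredient being pre-compatibility: it is what makes the recursion single-successor (hence deterministic), and also what certifies that the language we compute is weakly acyclic and can be stored in the table. Accordingly I would split ``correct'' into (a) $\Pre{\lang{p}}{\lang{q}}$ is weakly acyclic whenever $p$ and $q$ are pre-compatible, so the output node is well-defined, and (b) \Cref{alg:pre:comp} returns that (unique) node. The workhorse is a pair of residual identities: viewing $\lang{p}$ as the encoding of a fixed-length relation over $\Sigma \times \Sigma$, a one-letter unfolding gives, for every $a \in \Sigma$,
\[
  \ew \in \Pre{\lang{p}}{\lang{q}} \iff \ew \in \lang{p} \text{ and } \ew \in \lang{q},
  \qquad
  \bigl(\Pre{\lang{p}}{\lang{q}}\bigr)^{a} = \bigcup_{b \in \Sigma} \Pre{(\lang{p})^{(a,b)}}{(\lang{q})^{b}} .
\]
Since $\Pre{\emptyset}{X} = \Pre{X}{\emptyset} = \emptyset$, the first clause of pre-compatibility collapses the union on the right to a single term $\Pre{(\lang{p})^{(a,b_a)}}{(\lang{q})^{b_a}}$ (or to $\emptyset$, when no $b_a$ with $(\lang{p})^{(a,b_a)} \neq \emptyset \neq (\lang{q})^{b_a}$ exists), and the second clause says precisely that the pair $((\lang{p})^{(a,b_a)}, (\lang{q})^{b_a})$ is again pre-compatible. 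Hence pre-compatibility is an invariant along the recursion, and this is exactly what lets us dispense with the subset construction and cycle contraction of \Cref{alg:pre:gen}.

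For (a) I would introduce the deterministic automaton $\mathcal{B}$ whose states are the pairs reachable from $(p, q)$ under the single-successor rule $(\hat p, \hat q) \trans{a} ((\hat p)^{(a, b_a)}, (\hat q)^{b_a})$ (routing $a$ to a non-accepting self-looping sink when no $b_a$ exists), with initial state $(p, q)$ and $(\hat p, \hat q)$ accepting iff $\ew \in \lang{\hat p} \cap \lang{\hat q}$; the identities above show that $\mathcal{B}$ recognises $\Pre{\lang{p}}{\lang{q}}$, so by \Cref{prop:acyc:min:dfa} it suffices to prove that $\mathcal{B}$ is weakly acyclic. Suppose $(\hat p, \hat q) \trans{c_1} \cdots \trans{c_k} (\hat p, \hat q)$ is a closed walk with $k \geq 1$ and successive choices $d_1, \dots, d_k$. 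Then $(c_1, d_1) \cdots (c_k, d_k)$ labels a closed walk from $\hat p$ in the minimal DFA of the weakly acyclic language $\lang{p}$, and $d_1 \cdots d_k$ labels a closed walk from $\hat q$ in the minimal DFA of $\lang{q}$; since both of these minimal DFAs have the shape of a DAG with self-loops, every state visited along either walk coincides with its starting state and every edge traversed is a self-loop. Consequently every state along the walk in $\mathcal{B}$ is again $(\hat p, \hat q)$, so $\mathcal{B}$ has no simple cycle of length exceeding one, and $\Pre{\lang{p}}{\lang{q}}$ is weakly acyclic.

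For (b), correctness is an induction on the recursion. Termination holds because in each recursive call at least one of $(\hat p)^{(a,b_a)}$ and $(\hat q)^{b_a}$ is a node identifier strictly smaller than $\hat p$, resp.\ $\hat q$; the only case in which both stay equal is $(\hat p)^{(a,b_a)} = \self = (\hat q)^{b_a}$, which the algorithm handles by setting $s_a \leftarrow \self$ without recursing, and the base cases (e.g.\ $\hat p = \qnone$ or $\hat q = \qnone$) are immediate. The inductive step only has to check that the value written to each $s_a$ (namely $\self$, $\qnone$, or the recursive result) together with the acceptance flag $\flag{\hat p} \land \flag{\hat q}$ realises the two identities above, after which \make returns the unique node for $\Pre{\lang{p}}{\lang{q}}$ by \Cref{prop:make}. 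For the running time, memoization ensures that each pair $(\hat p, \hat q)$ with $\hat p$ reachable from $p$ and $\hat q$ reachable from $q$ is processed once; there are at most $n_p n_q$ of them, and each costs $\O(|\Sigma|)$ work per letter to locate $b_a$ plus one amortised-$\O(|\Sigma|)$ call to \make, for a total of $\O(|\Sigma|^2 n_p n_q)$.

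I expect the weak-acyclicity step (a) to be the main obstacle: parts (b) and the complexity bound are a routine transcription of the \Cref{alg:inter} analysis, whereas the closed-walk argument for $\mathcal{B}$ genuinely uses both the DAG-with-self-loops structure of the two minimal DFAs and the determinism that pre-compatibility buys us; without the ``at most one $b$'' clause, $\mathcal{B}$ would be nondeterministic and the argument (and indeed the conclusion, cf.\ \Cref{fig:rel:bad}) would fail.
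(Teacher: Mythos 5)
Your proof is correct, and its parts (b) and the complexity count are essentially the paper's own argument: the same one-letter unfolding of $\Pre{\lang{p}}{\lang{q}}$ into $\bigcup_b \Pre{\lang{p}^{(a,b)}}{\lang{q}^{b}}$, the collapse of that union to a single term (or to $\emptyset$) via the first clause of pre-compatibility, the observation that the second clause makes pre-compatibility an invariant of the recursion, correctness of the $\self$ case because then the residual equals the language itself, and the memoization count of at most $n_p n_q$ processed pairs with $\O(|\Sigma|^2)$ work per pair to locate the unique $b$. Where you genuinely diverge is part (a): you prove up front, via the deterministic pair automaton $\mathcal{B}$ and the closed-walk argument (any cycle projects to closed walks in the two minimal DFAs, which weak acyclicity forces to consist of self-loops), that pre-compatibility guarantees $\Pre{\lang{p}}{\lang{q}}$ is weakly acyclic. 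The paper never proves this as a separate lemma; it obtains it implicitly, since the recursion terminates (identifiers strictly decrease in every recursive call) and therefore builds an actual node of the table, whose language is weakly acyclic by construction and equals $\Pre{\lang{p}}{\lang{q}}$ by the correctness argument. So your extra lemma is not needed for the proposition, but it is a sound, self-contained justification of the main text's claim that pre-compatibility is a sufficient condition for weak acyclicity, and the closed-walk argument is valid. Two cosmetic points: you do not need \Cref{prop:acyc:min:dfa} there, since exhibiting any weakly acyclic automaton for the language already makes it weakly acyclic by definition; and \Cref{alg:pre:comp} has no explicit base cases for $\qnone$ or $\qall$ (unlike \Cref{alg:inter}), the recursion simply bottoms out when every letter yields $\self$ or the empty node and \make resolves the result—your remark about base cases is harmless but does not correspond to the pseudocode.
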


\begin{algorithm}
  \DontPrintSemicolon
  \LinesNumbered
  \SetKwFunction{make}{make}
  \SetKwFunction{pre}{pre}
  \SetKwProg{myproc}{}{}{}
  \KwIn{identifier $p$ over $\Sigma \times \Sigma$, identifier $q$ over $\Sigma$ such that $\lang{p}$ and $\lang{q}$ are pre-compatible}
  \KwResult{identifier $r$ such that $\lang{r} = \Pre{\lang{p}}{\lang{q}}$}
  \algskip

  \myproc{\pre{$p, q$}\algdelim}{
    \For{$a \in \Sigma$\label{ln:pre:comp:for}}{
      \lIf{$\neg(\exists b \in \Sigma : p^{(a, b)} \neq p_\emptyset \land q^b \neq q_\emptyset)$}{
        $s_a \leftarrow q_\emptyset$
      }
      \Else{
        \textbf{let} $b$ be the unique letter such that $p^{(a, b)} \neq p_\emptyset$ and $q^b \neq q_\emptyset$

        \algskip
        $\mathmakebox[42pt][l]{p' \leftarrow p^{(a, b)}} \textbf{ if } \mathmakebox[52pt][l]{p^{(a, b)} \neq \self} \textbf{ else } p$\;

        $\mathmakebox[42pt][l]{q' \leftarrow q^b} \textbf{ if } \mathmakebox[52pt][l]{q^b \neq \self} \textbf{ else } q$\;

        \algskip
        \algskip
        \lIf{$p' = p \land q' = q$\label{ln:pp:qq}}{
          $s_a \leftarrow \self$
        }
        \lElse{
          \hspace*{75.65pt}$s_a \leftarrow \pre{$p', q'$}$
        }
      }
    }
    
    \algskip
    \Return{\make{$s, \flag{p} \land \flag{q}$}}\label{ln:pre:return}
  }
  
  \caption{Algorithm for the Pre operation under pre-compatibility}\label{alg:pre:comp}
\end{algorithm}

\section{Weakly acyclic (regular) model checking}
\label{sec:model:check}
In this section, we present interesting examples of systems that can
be modeled with WADs. In our setting, a system has a set of
configurations equipped with a partial order ${\preceq}$ and a
monotone labeled transition relation, \ie, if $C \trans{t} D$ and $C'
\succeq C$, then $C' \trans{t} D'$ for some $D' \succeq
C'$. The \emph{safety verification problem} asks, given configurations
$C$ and $C'$, whether $C \trans{*} C''$ for some $C''
\succeq C'$.

For every $D$, let $\cPre{D} \defeq \{C : C \trans{*} D\}$. We extend
this notation to sets, \ie, $\cPre{X} \defeq
\bigcup_{C \in X} \cPre{C}$. Let $\cPreI{*}{X} \defeq \bigcup_{i \geq
  0} \cPreI{i}{X}$. A set of configurations $X$ is
\emph{upward-closed} if $\upc{X} = X$, where $\upc{X} \defeq \{C' : C'
\succeq C \text{ and } C \in X\}$.

Safety verification amounts to testing whether $C \in
\cPreI{*}{\upc{C'}}$. If the system is well structured, namely if
${\preceq}$ is a well-quasi-order, then it is well-known that
$\cPreI{*}{\upc{C'}} = \cPreI{i}{\upc{C'}}$ for some $i \geq 0$. Thus,
verification can be done by the \emph{backward algorithm}: compute
$\cPreI{0}{\upc{C'}} \cup \cPreI{1}{\upc{C'}} \cup \cdots$ until
reaching a fixed point. So, for well-structured transition systems,
the approach is as follows:
\begin{itemize}
\item Choose a suitable alphabet $\Sigma$;
  
\item For each configuration $D$, define an encoding $\cEnc{\upc{D}}
  \subseteq \Sigma^*$ of $\upc{D}$ which is a weakly acyclic language;

\item For each $t$, define a transducer $\T_t$ for the language
  $\{(\cEnc{D}, \cEnc{D'}) : D \trans{t} D'\} \subseteq (\Sigma \times
  \Sigma)^*$;

\item Create a weakly acyclic diagram for $\lang{q_0} =
  \cEnc{\upc{C'}}$ and compute $\lang{q_{i+1}} =
  \Pre{\T_t}{\lang{q_i}}$ iteratively until $q_{i+1} = q_i$.
\end{itemize}

Since ${\preceq}$ is a well-quasi-order, any upward-closed set $X$ of
configurations has a finite basis, \ie, there exist configurations
$C_1, \ldots, C_n$ such that $X = \bigcup_{i \in [1..n]}
\upc{C_i}$. Thus, as weakly acyclic languages are closed under union,
$\cEnc{X}$ must be weakly acyclic. Moreover, since $\cPre{X}$ is
upward-closed by monotonicity, $\Pre{\T_t}{\cEnc{X}}$ is necessarily
weakly acyclic.
  
\subsection{Lossy channel systems}

A \emph{channel system} is a finite directed graph whose nodes $P$ are
called \emph{states}, and whose arcs, called \emph{transitions}, are
labeled by $\lcsread{a}{i}$, $\lcswrite{a}{i}$ or $\lcsnop{}$ with $a
\in \Gamma$ and $i \in [1..k]$. There are $m$ processes starting in
some states, and evolving by reading from the left and writing to the
right of channels, and by moving to other states. A \emph{lossy
channel system (LCS)} is a channel system where any letter may be lost
at any moment from any channel.

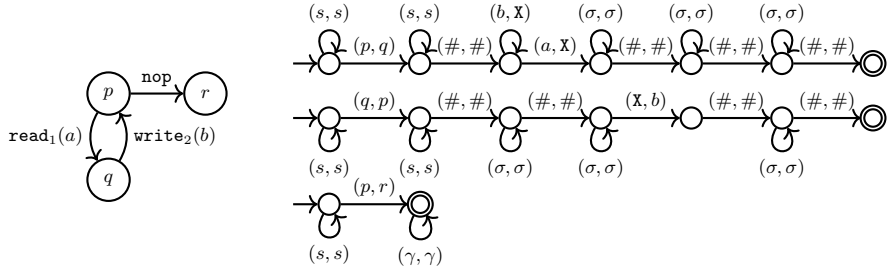
\begin{figure}
  \centering
  \begin{tikzpicture}[
  node distance=2.5cm, auto, thick, initial text={},
  scale=0.8, transform shape
]
  \tikzstyle{lstate} = [state, minimum size=20pt]

  \node[lstate] (p) {$p$};
  \node[lstate, below=20pt of p] (q) {$q$};
  \node[lstate, right=25pt of p] (r) {$r$};

  \path[->]
  (p) edge[bend right=30] node[swap] {$\lcsread{a}{1}$} (q)
  (q) edge[bend right=30] node[swap] {$\lcswrite{b}{2}$} (p)
  (p) edge[]              node[]     {$\lcsnop$} (r)
  ;

  \tikzstyle{tstate} = [state, minimum size=10pt]

  \node[tstate, right=1.5cm of r, initial, yshift=0.5cm] (x0) {};
  \node[tstate, right=32pt of x0] (x1) {};
  \node[tstate, right=32pt of x1] (x2) {};
  \node[tstate, right=32pt of x2] (x3) {};
  \node[tstate, right=32pt of x3] (x4) {};
  \node[tstate, right=32pt of x4] (x5) {};
  \node[tstate, right=32pt of x5, accepting] (x6) {};

  \path[->]
  (x0) edge node {$(p, q)$} (x1)
  (x1) edge node {$(\#, \#)$} (x2)
  (x2) edge node {$(a, \texttt{X})$} (x3)
  (x3) edge node {$(\#, \#)$} (x4)
  (x4) edge node {$(\#, \#)$} (x5)
  (x5) edge node {$(\#, \#)$} (x6)

  (x0) edge[loop, out=120, in=60, looseness=8] node {$(s, s)$} (x0)
  (x1) edge[loop, out=120, in=60, looseness=8] node {$(s, s)$} (x1)
  (x2) edge[loop, out=120, in=60, looseness=8] node {$(b, \texttt{X})$} (x2)
  (x3) edge[loop, out=120, in=60, looseness=8] node {$(\sigma, \sigma)$} (x3)
  (x4) edge[loop, out=120, in=60, looseness=8] node {$(\sigma, \sigma)$} (x4)
  (x5) edge[loop, out=120, in=60, looseness=8] node {$(\sigma, \sigma)$} (x5)
  ;

  \node[tstate, below=15pt of x0, initial] (y0) {};
  \node[tstate, right=32pt of y0] (y1) {};
  \node[tstate, right=32pt of y1] (y2) {};
  \node[tstate, right=32pt of y2] (y3) {};
  \node[tstate, right=32pt of y3] (y4) {};
  \node[tstate, right=32pt of y4] (y5) {};
  \node[tstate, right=32pt of y5, accepting] (y6) {};

  \path[->]
  (y0) edge node {$(q, p)$} (y1)
  (y1) edge node {$(\#, \#)$} (y2)
  (y2) edge node {$(\#, \#)$} (y3)
  (y3) edge node {$(\texttt{X}, b)$} (y4)
  (y4) edge node {$(\#, \#)$} (y5)
  (y5) edge node {$(\#, \#)$} (y6)

  (y0) edge[loop, out=-120, in=-60, looseness=8] node[swap] {$(s, s)$} (y0)
  (y1) edge[loop, out=-120, in=-60, looseness=8] node[swap] {$(s, s)$} (y1)
  (y2) edge[loop, out=-120, in=-60, looseness=8] node[swap] {$(\sigma, \sigma)$} (y2)
  (y3) edge[loop, out=-120, in=-60, looseness=8] node[swap] {$(\sigma, \sigma)$} (y3)
  (y5) edge[loop, out=-120, in=-60, looseness=8] node[swap] {$(\sigma, \sigma)$} (y5)
  ;  

  \node[tstate, below=30pt of y0, initial] (z0) {};
  \node[tstate, right=32pt of z0, accepting] (z1) {};

  \path[->]
  (z0) edge node {$(p, r)$} (z1)
  (z0) edge[loop, out=-120, in=-60, looseness=8] node[swap] {$(s, s)$} (z0)
  (z1) edge[loop, out=-120, in=-60, looseness=8] node[swap] {$(\gamma, \gamma)$} (z1)
  ;
\end{tikzpicture}
  \caption{\emph{Left}: Example of a lossy channel system with $\Gamma
    = \{a, b\}$. \emph{Right}: Transducers encoding respectively the
    transitions $\lcsread{a}{1}$, $\lcswrite{b}{2}$ and $\lcsnop$
    (under the semi-lossy semantics), where $s \in \{p, q, r\}$,
    $\sigma \in \{a, b\}$ and $\gamma$ stands for any
    letter.}\label{fig:lcs}
\end{figure}

For example, consider the channel system depicted on the left of
\Cref{fig:lcs}. Consider its configuration $C \defeq ([p, q], ab, \ew,
b)$ with two processes in states $p$ and $q$, and with three channels
currently containing $ab$, $\ew$ and $b$. We have, \eg,
\[
C
\lcstrans{\lcsread{a}{1}}  ([q, q], b, \ew, b)
\lcstrans{\lcswrite{b}{2}} ([q, p], b, b, b)
\lcstrans{\lcsnop} ([q, r], b, b, b).
\]

We encode configurations as words over $\Sigma \defeq P \cup \Gamma
\cup \{\lcsdelim\}$, \eg\ the encoding of $C$ is $pq \lcsdelim ab
\lcsdelim \lcsdelim a \lcsdelim$. Moreover, we encode operations as
transducers with an extra letter $\lcsdelete$. Let us explain this
through an example.

The transducers for the three transitions of our example are depicted
on the right of \Cref{fig:lcs}. As a preprocessing step for operation
$\lcswrite{\sigma}{i}$, we pad channels of a configuration on the
right with $\lcsdelete^*$. This way, the first occurrence of
$\lcsdelete$ can be replaced by the letter to insert. This padding can
then be removed in a postprocessing step. Furthermore, we model the
``lossiness'' of the channels through $\lcsread{\sigma}{i}$: when
reading an $a$, all letters in front of the first $a$ may be
lost. Transducers for $\lcswrite{\sigma}{i}$ and $\lcsnop$ are
nonlossy. For safety verification, the standard ``pure lossy''
semantics is equivalent to our ``semi-lossy'' semantics.

Let us formalize it. Given configurations $C = (X, w_1, \ldots, w_k)$
and $C' = (X', \allowbreak w_1', \allowbreak \ldots, \allowbreak
w_k')$, we write $C \preceq C'$ if $X = X'$ and $w_i \sqsubseteq w_i'$
for all $i \in [1..k]$, where ${\sqsubseteq}$ denotes the subword
order. For example, $([p, q], ab, \ew, b) \preceq ([p, q], aab, a,
baa)$. By Higman's lemma, the subword order ${\sqsubseteq}$ is a
well-quasi-order. Moreover, equality over a finite set is a
well-quasi-order. As the order ${\preceq}$ over configurations is a
product of these orders, it is also a well-quasi-order. Moreover:

\begin{restatable}{proposition}{propLossy}\label{prop:lossy}
  We have $C \lcstrans{*}_\text{pure-lossy} C''$ for some $C'' \succeq
  C'$ iff $C \lcstrans{*}_\text{semi-lossy} C''$ for some $C'' \succeq
  C'$. Furthermore, ${\lcstrans{}_\text{semi-lossy}}$ is monotone.
\end{restatable}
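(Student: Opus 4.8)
The plan is to prove the two claims of \Cref{prop:lossy} separately, starting with monotonicity of $\lcstrans{}_\text{semi-lossy}$ since it is the simpler half and will be reused. For monotonicity I would fix a transition $t$ and configurations $C \lcstrans{t}_\text{semi-lossy} D$ and $C' \succeq C$, and exhibit $D' \succeq C'$ with $C' \lcstrans{t}_\text{semi-lossy} D'$, by a case analysis on the label of $t$. For $\lcsnop{}$ this is immediate since the channels are untouched and the state component is equal by definition of $\preceq$. For $\lcswrite{a}{i}$, appending $a$ to the right of the $i$-th channel is monotone for the subword order, so $D' = C'$ with $a$ appended works. The only case requiring thought is $\lcsread{a}{i}$: in the semi-lossy semantics this step removes an arbitrary prefix of $w_i$ up to and including the first occurrence of $a$; given $w_i \sqsubseteq w_i'$, the first $a$ in $w_i$ embeds into \emph{some} occurrence of $a$ in $w_i'$, so we can choose a semi-lossy read on $C'$ that deletes the prefix of $w_i'$ up to that occurrence, and the remaining suffix of $w_i'$ still has the suffix of $w_i$ as a subword (since subword embeddings restrict to suffixes compatibly). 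This yields the required $D' \succeq C'$.

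For the first claim, the ``only if'' direction ($\text{pure-lossy} \Rightarrow \text{semi-lossy}$) is the easy one: every pure-lossy step is simulated by at most one semi-lossy step possibly followed by message losses, which in the semi-lossy model are \emph{not} free but are absorbed into the behaviour of the next $\lcsread{}{}$ (or into the initial/final comparison $\succeq$). More precisely, a maximal run of lossy moves followed by a real transition is reorganized so that losses needed for a later read are performed lazily as part of that read; losses not consumed by any read only \emph{shrink} the final configuration, which is harmless because we compare with $\succeq C'$. The ``if'' direction ($\text{semi-lossy} \Rightarrow \text{pure-lossy}$) holds because the semi-lossy transition relation is contained in the reflexive–transitive closure of the pure-lossy one: a semi-lossy $\lcsread{a}{i}$ that deletes a prefix $u a$ of $w_i$ is just a sequence of pure-lossy single-letter losses removing $u$, followed by an ordinary (non-lossy) read of $a$; all other semi-lossy steps are already pure steps. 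Hence any semi-lossy run from $C$ to $C''$ lifts to a pure-lossy run from $C$ to the \emph{same} $C''$, and in particular $C'' \succeq C'$ is preserved.

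I expect the main obstacle to be the $\lcsread{}{}$ case, and specifically making the ``only if'' direction airtight: one must argue that deferring losses across a sequence of transitions is sound, i.e.\ that a loss performed early in a pure-lossy run can always be commuted to the point just before the read that ``needed'' it, or else discarded (which only decreases the final configuration). The cleanest way to handle this is to prove a small commutation lemma: if $C \lcstrans{\text{loss}} \widehat{C} \lcstrans{t} D$ in the pure-lossy system, then either $C \lcstrans{t}_\text{semi-lossy} D$ directly (when the lost letter is one that the semi-lossy step could have deleted or ignored) or $C \lcstrans{t}_\text{semi-lossy} D^+$ for some $D^+ \succeq D$ (when the loss was "wasted"), and then induct on the number of pure-lossy loss-moves in the run. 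Combined with the observation that the safety question only asks about reaching \emph{some} $C'' \succeq C'$, the extra letters in $D^+$ never hurt. Monotonicity of $\lcstrans{}_\text{semi-lossy}$, already established, can be invoked to keep the induction step uniform across transition labels.
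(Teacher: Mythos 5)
Your proposal is correct in substance, and for the nontrivial direction it takes a genuinely different route from the paper. The containment direction (a semi-lossy read is just pure-lossy losses followed by a non-lossy read) matches the paper's one-line remark, though your labelling is off: pure-lossy $\Rightarrow$ semi-lossy is the hard direction, not ``the easy one'', as your own final paragraph effectively concedes. For that direction the paper inducts on the number of $\lcsread{\cdot}{\cdot}$ operations in the pure-lossy run and explicitly reorganizes it: losses on other channels, and losses hitting letters after the first occurrence of the letter being read, are postponed past the read, and the remaining losses are absorbed into one semi-lossy read; monotonicity is not needed there. You instead induct on the number of loss moves, absorbing each loss either into the next transition (your commutation lemma) or into the final comparison with $\upc{C'}$, and you invoke monotonicity of the semi-lossy relation to continue from the larger configuration $D^{+} \succeq D$. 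This is sound and arguably more modular: in fact, once semi-lossy monotonicity is in hand you do not even need the commutation lemma --- take the first loss $C_1 \lcstrans{\lcsloss{a}{i}}_\text{pure-lossy} \widehat{C}_1$, apply the induction hypothesis to the suffix starting at $\widehat{C}_1$, and lift the resulting semi-lossy run to the larger $C_1$ by monotonicity; loss-free steps are verbatim semi-lossy steps, and trailing or wasted losses only shrink the final configuration, which is harmless against the target $\succeq C'$. Do spell out the routine cases your lemma does not cover as stated (consecutive losses, a loss not followed by any transition) when closing the induction.

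One step of your monotonicity proof is not licensed by the semantics: for $\lcsread{a}{i}$ you ``choose a semi-lossy read on $C'$ that deletes the prefix of $w_i'$ up to'' the occurrence of $a$ into which the first $a$ of $w_i$ embeds. The semi-lossy read, as fixed by the transducers, cannot target a later occurrence: it must delete exactly the prefix of $w_i'$ up to and including the \emph{first} $a$ of $w_i'$. The conclusion still holds, because the forced read deletes a prefix no longer than the one you wanted, so the remaining suffix of $w_i'$ contains your intended suffix and hence still has the suffix $y$ of $w_i$ as a subword --- this is exactly the case distinction in the paper's proof, which writes $w_i = xay$ with $x$ free of $a$ and splits on whether the corresponding $x'$ contains an $a$. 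So this is a repairable slip rather than a fatal gap, but as written the step you appeal to does not exist.
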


Note that $\cEnc{\upc{C}}$ is weakly acyclic. Indeed, for $C = ([p_1,
  \ldots, p_m], w_1, \ldots, w_k)$, the language $\cEnc{\upc{C}}$ can
be expressed by a weakly acyclic expression:
\[
p_1 \cdots p_m\ \lcsdelim
\prod_{i \in [1..k]}
\bigg(\prod_{a \in w_i}
(\Gamma \setminus \{a\})^* a\bigg) \Gamma^*\ \lcsdelim.
\]

\subsection{Petri nets}

A Petri net is a tuple $(P, T, F)$ where $P$ is a finite set of
\emph{places}, $T$ is a finite set of \emph{transitions} disjoint from
$P$, and $F \colon (P \times T) \cup (T \cup P) \to \N$. A
\emph{marking} is an assignment $\vec{m} \colon P \to \N$ of tokens to
places. A transition $t$ is \emph{enabled} in $\vec{m}$ if $\vec{m}(p)
\geq F(p, t)$ for every place $p$. If $t$ is enabled, then it can be
\emph{fired}, which leads to the marking $\vec{m}'$ obtained by
removing $F(p, t)$ tokens and adding $F(t, p)$ tokens to each place
$p$, \ie, $\vec{m}'(p) = \vec{m}(p) - F(p, t) + F(t, p)$.

\begin{figure}[b]
  \centering
  \begin{tikzpicture}[
  node distance=1.5cm, auto, thick, initial text={},
  scale=0.875, transform shape
]
  \node[place, label=below:$p$, tokens=3] (p) {};
  \node[transition, right of=p, label=below:$t$] (t) {};
  \node[place, right of=t, label=below:$q$, tokens=1] (q) {};

  \path[->]
  (p) edge[bend left=20] node {$2$} (t)
  (t) edge[bend left=20] node {$1$} (p)
  (q) edge[bend left=20] node {$1$} (t)
  (t) edge[bend left=20] node {$3$} (q)
  ;

  \tikzstyle{tstate} = [state, minimum size=10pt]

  \node[tstate, right=1cm of q, initial] (x0) {};
  \node[tstate, right=25pt of x0] (x1) {};
  \node[tstate, right=25pt of x1] (x2) {};
  \node[tstate, right=25pt of x2] (x3) {};
  \node[tstate, right=25pt of x3] (x4) {};
  \node[tstate, right=25pt of x4] (x5) {};
  \node[tstate, right=25pt of x5] (x6) {};
  \node[tstate, right=25pt of x6, accepting] (x7) {};

  \path[->]
  (x0) edge node {$(\pntok, \pntok)$} (x1)
  (x1) edge node {$(\pntok, \pnpad)$} (x2)
  (x2) edge node {$(\pndelim, \pndelim)$} (x3)
  (x3) edge node {$(\pntok, \pntok)$} (x4)
  (x4) edge node {$(\pnpad, \pntok)$} (x5)
  (x5) edge node {$(\pnpad, \pntok)$} (x6)
  (x6) edge node {$(\pndelim, \pndelim)$} (x7)

  (x1) edge[loop, out=-120, in=-60, looseness=8] node[swap] {$(\pntok, \pntok)$} (x1)
  (x4) edge[loop, out=-120, in=-60, looseness=8] node[swap] {$(\pntok, \pntok)$} (x4)
  ;
\end{tikzpicture}
  \caption{\emph{Left}: Example of a Petri net where $P = \{p, q\}$,
    $T = \{t\}$, $F(p, t) = 2$, $F(t, p) = 1 = F(q, t)$ and $F(t, q) =
    3$. \emph{Right}: Transducer encoding transition
    $t$.}\label{fig:pn}
\end{figure}
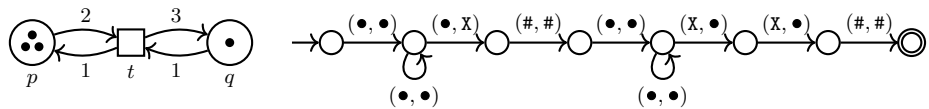

Consider, \eg, the Petri net depicted on the left of
\Cref{fig:pn}. From the marking $[p \mapsto 3, q \mapsto 1]$, written
as $\vec{m} \defeq (3, 1)$, we have, \eg, $\vec{m} \trans{t} (2, 3)
\trans{t} (1, 5)$.

We encode markings over $\Sigma \defeq \{\pntok, \pndelim\}$, \eg\ the
encoding of $\vec{m}$ is $\pntok \pntok \pntok \pndelim \pntok
\pndelim$. Moreover, we encode operations as transducers with an extra
letter $\pnpad$. Let us explain this through an example. The
transducer for the transition of our example is depicted on the right
of \Cref{fig:pn}. As a preprocessing step, we pad places of a
configuration on the right with $\pnpad^*$. This symbol can be used to
remove and add tokens. The padding is then removed in a postprocessing
step.

We write $\vec{m} \leq \vec{m}'$ if $\vec{m}(p) \leq \vec{m}'(p)$ for
all $p$. By Dickson's lemma, ${\leq}$ is a well-quasi-order. Note that
$\cEnc{\upc{\vec{m}}}$ is weakly acyclic. Indeed, for $\vec{m} = (n_1,
\ldots, n_k)$, it is the language of this weakly acyclic expression:
$\pntok^{n_1} \pntok^* \pndelim \cdots \pntok^{n_k} \pntok^*
\pndelim$.

\subsection{Broadcast protocols}
\label{subsec:broadcast}

A \emph{broadcast protocol} is a finite directed graph whose nodes $P$
are called \emph{states}, and where each arc (\emph{transition}) is
labeled by $\bpl{a}$, $\bpb{b}$, $\bpq{b}$, $\bpbb{c}$ or $\bpqq{c}$
where $a \in \Gamma$, $b \in \Gamma'$ and $c \in \Gamma''$ (three
disjoint alphabets). There are $m$ processes starting in some states,
and evolving by local, rendez-vous or broadcast communication:
\begin{itemize}
\item A local update $a$ only changes the state of a single process;

\item A rendez-vous occurs when two processes respectively take a
  $\bpb{b}$-transition and a $\bpq{b}$-transition;

\item A broadcast occurs when a process takes a $\bpbb{c}$-transition;
  when this happens, any other process that can take a
  $\bpqq{c}$-transition takes it.
\end{itemize}
For example, consider the broadcast protocol depicted on the left of
\Cref{fig:bp}. Let $C \defeq [p, p, p, p]$ be the configuration with
four processes in state $p$. We have, \eg,
$
C
\trans{b}
[q, r, p, p]
\trans{b}
[q, r, q, r]
\trans{c}
[p, q, p, r]
\trans{c}
[p, p, p, q]$.
We encode configurations as words over $\Sigma \defeq P$, \eg\ $[p, r,
  p, q]$ becomes $prpq$. The transducers for our example are depicted
on the middle and right of \Cref{fig:bp}.

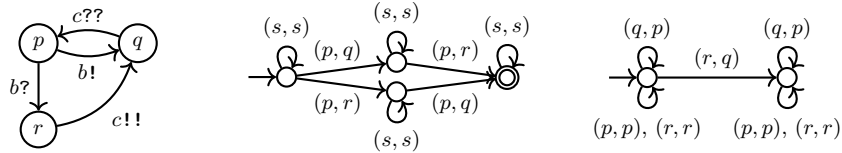
\begin{figure}
  \centering
  \vspace{-7pt}
  \begin{tikzpicture}[
  node distance=2.5cm, auto, thick, initial text={},
  scale=0.9, transform shape
]
  \tikzstyle{bstate} = [state, minimum size=15pt]

  \node[bstate] (p) {$p$};
  \node[bstate, below=20pt of p] (r) {$r$};
  \node[bstate, right=25pt of p] (q) {$q$};

  \path[->]
  (p) edge[bend right=20] node[swap] {$\bpb{b}$}  (q)
  (q) edge[bend right=20] node[swap] {$\bpqq{c}$} (p)
  (p) edge[]              node[swap] {$\bpq{b}$}  (r)
  (r) edge[bend right=30] node[swap] {$\bpbb{c}$} (q)
  ;

  \tikzstyle{tstate} = [state, minimum size=8pt, inner sep=1pt]

  \node[tstate, right=1.75cm of q, initial, yshift=-0.5cm] (x0) {};
  \node[tstate, above right=0pt and 40pt of x0] (x0a) {};
  \node[tstate, below right=0pt and 40pt of x0] (x0b) {};
  \node[tstate, below right=0pt and 40pt of x0a, accepting] (x1) {};

  \path[->]
  (x0)  edge node[xshift=12pt]        {$(p, q)$} (x0a)
  (x0a) edge node[xshift=-12pt]       {$(p, r)$} (x1)
  (x0)  edge node[xshift=12pt,  swap] {$(p, r)$} (x0b)
  (x0b) edge node[xshift=-12pt, swap] {$(p, q)$} (x1)

  (x0)  edge[loop, out=120,  in=60,  looseness=8] node[]     {$(s, s)$} (x0)
  (x0a) edge[loop, out=120,  in=60,  looseness=8] node[]     {$(s, s)$} (x0a)
  (x0b) edge[loop, out=-120, in=-60, looseness=8] node[swap] {$(s, s)$} (x0b)
  (x1)  edge[loop, out=120,  in=60,  looseness=8] node[]     {$(s, s)$} (x1)
  ;

  \node[tstate, right=1.75cm of x1, initial] (y0) {};
  \node[tstate, right=50pt of y0] (y1) {};

  \path[->]
  (y0) edge node[] {$(r, q)$} (y1)

  (y0) edge[loop, out=120, in=60, looseness=8] node[] {$(q, p)$} (y0)
  (y1) edge[loop, out=120, in=60, looseness=8] node[] {$(q, p)$} (y1)

  (y0) edge[loop, out=-120, in=-60, looseness=8] node[swap] {
    \begin{tabular}{c}
      $(p, p)$, $(r, r)$
    \end{tabular}
  } (y0)

  (y1) edge[loop, out=-120, in=-60, looseness=8] node[swap] {
    \begin{tabular}{c}
      $(p, p)$, $(r, r)$
    \end{tabular}
  } (y1)
  ;
\end{tikzpicture}
  \vspace{-5pt}
  \caption{\emph{Left}: Example of a broadcast protocol with $P = \{p,
    q, r\}$, $\Gamma = \emptyset$, $\Gamma' = \{b\}$ and $\Gamma'' =
    \{c\}$. \emph{Middle and right}: Transducers encoding the
    transitions, where $s \in \{p, q, r\}$.}\label{fig:bp}
\end{figure}

\section{Experimental results}
\label{sec:exp}
\newcommand{\cpp}{C\raisebox{1.5pt}{\scriptsize++}}

We developed a prototype \cpp{} library for weakly acyclic diagrams,
which we refer to as \wadl{}\footnote{Our artifact is available at \url{https://doi.org/10.5281/zenodo.14624775}.}. It implements all operations described
in \Cref{sec:oper}, and backwards reachability for lossy channel
systems, Petri nets and broadcast protocols, as in
\Cref{sec:model:check}. To benchmark the performance, we
use four different sets:
\begin{itemize}
\item \emph{Lossy channel systems.} We collected 13 instances used by
  the tools \bml{}~\cite{GeffroyLS17} and
  \mcscm{}~\cite{HeussnerGS09,HeussnerGS12}. They arise from a mutual
  exclusion algorithm and from protocols (communication, business,
  etc.)  The instance \texttt{ring2} comes from a parameterized
  family. Thus, we generalized it to \texttt{ring3}, \texttt{ring4},
  etc.

\item \emph{Petri nets.} We collected 173 instances used in many
  papers on Petri net safety verification (\eg\ see~\cite{ELMMN14}). They
  model protocols (mutual exclusion, communication, etc.); concurrent
  C and Erlang programs; provenance analysis of messages of a medical
  messaging system and a bug-tracking system.

\item \emph{Broadcast protocols.} We collected 38 broadcast protocol
  instances from the benchmark set of \dodo{}, a tool for regular
  model checking that uses the encoding of Section
  \ref{subsec:broadcast} for broadcast protocols
  \cite{CzernerEKW24,Welzel-Mohr24}. The instances correspond to
  safety properties of several cache-coherence protocols.
  
\item \emph{Regular model checking problems.} Backwards reachability
  can also be applied to systems which are not well structured, as
  long as all the intermediate sets of configurations remain weakly
  acyclic. The algorithm may not terminate, but if it does then it
  yields the correct answer. We collected 35 further instances from
  \dodo{} modeling parameterized mutual exclusion algorithms, variants
  of the dining philosophers, algorithms for termination detection and
  leader election, and token passing algorithms.
\end{itemize}
 
We carried experiments for the four above categories with the goal of
testing the potential and versatility of WADs. We used a machine with
an 8-Core $\text{11}^\text{th}$ Gen Intel® Core™ i7-1165G7 CPU @
2.80GHz running on Kubuntu 24.04 with 31.1GiB of memory. The time was
determined with the sum of the \texttt{user} and \texttt{sys} times
given by Linux tool \texttt{time}. We used a timeout of 10 minutes per
instance, except for Petri nets where we used a timeout of 20
minutes\footnote{During the reviewing process, a reviewer requested a
longer timeout for this case.}.

\paragraph{Lossy channel systems.}

We made a few optimizations compared to \Cref{sec:model:check}. First,
states are represented in binary to avoid using $|P|$ letters. Second,
we tweaked the transducers so that they yield the pre-compatibility of
\Cref{ssec:wa:rel}. The transducers of \Cref{fig:lcs} do not
necessarily satisfy it as the initial states use letters $(x, y)$ and
$(x', y')$ with $x = x'$. This can be fixed, \eg\ make the initial
state of the top transducer loop on $(q, q)$ and $(r, r)$, but not
$(p, p)$. This slightly changes the semantics: only the first process
in $p$ can execute. Yet, as processes in a common state are
indistinguishable, only the state count matters and so the change does
not alter safety verification. Moreover, in our benchmarks, processes
evolve in disjoint subgraphs and hence can never even be in a common
state.

We compared with \bml{}~\cite{GeffroyLS17} which runs the backward
algorithm with pruning (modes: \textsc{csre}, \textsc{mof},
\textsc{si}). Pruning can be disabled to yield the basic algorithm. We
also compared to \mcscm{}~\cite{HeussnerGS09,HeussnerGS12} which (in
its default mode) uses an adaptive extrapolated backwards
computation. The results are depicted on the left of
\Cref{fig:results:1}. \wadl{} solves the most instances and does so
generally faster, although the competition is close. We further tested
the same tools on \texttt{ring3}, \texttt{ring4}, etc. Our tool
terminates on large $n$ (e.g.\ \texttt{ring99} in 2m38s and
\texttt{ring187} is the largest solved), but the other tools did not
scale, as depicted in \Cref{fig:results:2}.

\begin{figure}
  \centering
  \begin{minipage}[t]{0.47\textwidth}
    \plotcanvas{time $t$ in seconds}{number of instances \\ decided in $\leq t$ seconds}{%
        \toolplotwithtotalcount{wadl}{\markwadl}{
          (0.1, 5)(0.25, 8)(0.5, 9)(1.0, 10)(5.0, 11)(10.0, 11)(15.0, 11)(50.0, 12)(150.0, 12)(600.0, 12)
        }{2800.0}{12}{12/13};
        \toolplotwithtotalcount{mcscm}{\markmcscm}{
          (0.1, 1)(0.25, 4)(0.5, 6)(1.0, 6)(5.0, 7)(10.0, 7)(15.0, 7)(50.0, 7)(150.0, 8)(600.0, 8)
        }{600.0}{8}{8/13};      
        \toolplotwithtotalcount{bmlMof}{\markbmlMof}{
          (0.1, 2)(0.25, 5)(0.5, 7)(1.0, 10)(5.0, 11)(10.0, 11)(15.0, 11)(50.0, 12)(150.0, 12)(600.0, 12)
        }{1300.0}{12}{12,\phantom{/13}};
        \toolplotwithtotalcount{bml}{\markbml}{
          (0.1, 3)(0.25, 4)(0.5, 6)(1.0, 7)(5.0, 10)(10.0, 10)(15.0, 10)(50.0, 10)(150.0, 10)(600.0, 11)
        }{1300.0}{10.5}{11/13};
        \toolplotwithtotalcount{bmlCsre}{\markbmlCsre}{
          (0.1, 4)(0.25, 7)(0.5, 7)(1.0, 8)(5.0, 10)(10.0, 10)(15.0, 10)(50.0, 11)(150.0, 11)(600.0, 11)
        }{600.0}{10.5}{11,\phantom{/13}};      
        \toolplotwithtotalcount{bmlSi}{\markbmlSi}{
          (0.1, 2)(0.25, 3)(0.5, 5)(1.0, 6)(5.0, 7)(10.0, 10)(15.0, 10)(50.0, 12)(150.0, 12)(600.0, 12)
        }{600.0}{12}{12,\phantom{/13}};
      }
  \end{minipage}\quad%
  \begin{minipage}[t]{0.47\textwidth}
    \plotcanvasext{time $t$ in seconds}{number of instances \\ decided in $\leq t$ seconds}{%
        \toolplotwithtotalcount{mist}{\markmist}{
          (0.1, 13)(0.25, 23)(0.5, 26)(1.0, 29)(5.0, 41)(10.0, 42)(15.0, 47)(50.0, 49)(150.0, 54)(600.0, 60)(1200.0, 60)
        }{1200.0}{60}{60/173};
        \toolplotwithtotalcount{besttool}{\markbesttool}{
          (0.1, 18)(0.25, 25)(0.5, 28)(1.0, 31)(5.0, 45)(10.0, 45)(15.0, 52)(50.0, 55)(150.0, 59)(600.0, 85)(1200.0, 91)
        }{1200.0}{94.5}{91/173};
        \toolplotwithtotalcount{wadl}{\markwadl}{
          (0.1, 17)(0.25, 20)(0.5, 25)(1.0, 28)(5.0, 40)(10.0, 41)(15.0, 44)(50.0, 55)(150.0, 56)(600.0, 80)(1200.0, 89)
        }{1200.0}{88}{89/173};
      }
  \end{minipage}
  
  \generatelegend
  \medskip
  
  \caption{Cumulative number of instances decided over time (semi-log
    scale) for lossy channel systems excluding $\{\texttt{ring3},
    \texttt{ring4}, \ldots\}$ (left), and Petri nets (right). 
    \besttool\ refers to the best outcome on instances solved by at
    least one of the two tools.}\label{fig:results:1}
\end{figure}
\begin{table}
  \centering
  \resizebox{0.65\columnwidth}{!}{%
  \begin{tabular}{lrrrrrr}
    \toprule
    & \bml & \bmlCsre & \bmlMof & \bmlSi & \mcscm & \wadl \\
    \midrule
    \texttt{ring2} & 0.63 & 1.02 & 0.89 & 0.37   & \TO    & \best0.04 \\
    \texttt{ring3} & \TO  & \TO  & \TO  & 9.13   & 416.48 & \best0.10 \\
    \texttt{ring4} & \TO  & \TO  & \TO  & 307.98 & \SO    & \best0.19 \\
    \texttt{ring5} & \TO  & \TO  & \TO  & \TO    & \TO    & \best0.33 \\
    \bottomrule
  \end{tabular}}
  \medskip
  
  \caption{Time in seconds to solve \texttt{ring}$n$ (\TO~=~timeout,
    \SO~=~stack overflow).}\label{fig:results:2}
\end{table}

\paragraph{Petri nets.} We compared with \mist{}~\cite{DRB04,GMDKRV07}, an
established efficient tool that, in particular, offers an
implementation of the backward algorithm without pruning, as in our
case, but using interval sharing trees as the data structure. The
results are depicted on the right of \Cref{fig:results:1}. \mist{} is
competitive, but we solved more instances in total. There are
31~instances that we solved that \mist{} did not; 2~instances for the
other way around; and 58~instances solved by both tools.

\paragraph{Broadcast protocols.}

We solved 32/38 instances, with 25 solved within 100ms. We do not
compare with \dodo{} as its goal is not to check the property, but
rather to compute small invariants explaining why it holds. So,
although we would look good in comparison, it would be meaningless
(the runtimes of \dodo{} reported in~\cite{Welzel-Mohr24} are at least
ten times higher than ours in every instance).

\paragraph{Regular model checking.}

While there is \emph{a priori} no reason why the instances should be
checkable with WADs, we solved 22/35 instances (all within 10ms). We
further timed out on 6~instances, and could not verify
7~instances. For the latter, a cycle was contracted in
Line~\ref{ln:S:marked} of \Cref{alg:pre:gen}, and hence we cannot be
certain that the result is weakly acyclic.

The fact we solved many instances suggests that weakly acyclic
languages may be common in regular model checking. In fact, popular
examples used to introduce the technique, like the token-passing
protocol, FIFO channel, and stack examples of the
surveys~\cite{AJNS04,Abd12,AST18,Abd21}, or the alternating bit
protocol of~\cite{AbdullaJ93} are modeled with weakly acyclic
languages.

\section{Conclusion}
\label{sec:conc}
We introduced weakly acyclic diagrams as a general data structure that
extends binary decision diagrams to (possibly) infinite languages,
while maintaining their algorithmic advantages. Many instances in
regular model checking fall within the class of weakly acyclic
languages. Moreover, weakly acyclic diagrams allow for the
manipulation and verification of various well-structured transition
systems such as lossy channel systems, Petri nets and broadcast
protocols.

As future work, it would be interesting to see whether the good
algorithmic properties can be retained under a slight extension of
nondeterminism, \eg\ for languages like $(a + b)^*b$ which are simple
but not weakly acyclic per se.


\bibliographystyle{splncs04}
\bibliography{references}

\begin{thebibliography}{10}
\providecommand{\url}[1]{\texttt{#1}}
\providecommand{\urlprefix}{URL }
\providecommand{\doi}[1]{https://doi.org/#1}

\bibitem{Abd12}
Abdulla, P.A.: Regular model checking. International Journal on Software Tools
  for Technology Transfer  \textbf{14}(2),  109--118 (2012).
  \doi{10.1007/S10009-011-0216-8}

\bibitem{Abd21}
Abdulla, P.A.: Regular model checking: Evolution and perspectives. In: Model
  Checking, Synthesis, and Learning -- Essays Dedicated to Bengt Jonsson on The
  Occasion of His 60th Birthday. Lecture Notes in Computer Science, vol. 13030,
  pp. 78--96. Springer (2021). \doi{10.1007/978-3-030-91384-7_5}

\bibitem{AbdullaCJT96}
Abdulla, P.A., Cerans, K., Jonsson, B., Tsay, Y.: General decidability theorems
  for infinite-state systems. In: Proc.\ $\text{11}^\text{th}$ Annual {IEEE}
  Symposium on Logic in Computer Science ({LICS}). pp. 313--321. {IEEE}
  Computer Society (1996). \doi{10.1109/LICS.1996.561359}

\bibitem{AbdullaJ93}
Abdulla, P.A., Jonsson, B.: Verifying programs with unreliable channels. In:
  Proc.\ $\text{8}^\text{th}$ Annual Symposium on Logic in Computer Science
  ({LICS}). pp. 160--170 (1993). \doi{10.1109/LICS.1993.287591}

\bibitem{AJNS04}
Abdulla, P.A., Jonsson, B., Nilsson, M., Saksena, M.: A survey of regular model
  checking. In: Proc.\ $\text{15}^\text{th}$ International Conference on
  Concurrency Theory ({CONCUR}). vol.~3170, pp. 35--48. Springer (2004).
  \doi{10.1007/978-3-540-28644-8_3}

\bibitem{AbdullaK95}
Abdulla, P.A., Kindahl, M.: Decidability of simulation and bisimulation between
  lossy channel systems and finite state systems (extended abstract). In:
  Proc.\ $\text{6}^\text{th}$ International Conference on Concurrency Theory
  ({CONCUR}). vol.~962, pp. 333--347. Springer (1995).
  \doi{10.1007/3-540-60218-6_25}

\bibitem{AST18}
Abdulla, P.A., Sistla, A.P., Talupur, M.: Model checking parameterized systems.
  In: Handbook of Model Checking, pp. 685--725. Springer (2018).
  \doi{10.1007/978-3-319-10575-8_21}

\bibitem{Akers78}
Akers, S.: Binary decision diagrams. {IEEE} Transactions on Computers
  \textbf{27}(6),  509--516 (1978),
  \url{https://doi.org/10.1109/TC.1978.1675141}

\bibitem{And98}
Andersen, H.R.: An introduction to binary decision diagrams (1998)

\bibitem{BCCCES25}
Blondin, M., Cadilhac, M., Cui, X., Czerner, P., Esparza, J., Schulz, J.:
  Weakly acyclic diagrams: {A} data structure for infinite-state symbolic
  verification. CoRR  \textbf{abs/2411.17250} (2024).
  \doi{10.48550/arXiv.2411.17250}

\bibitem{BG99}
Boigelot, B., Godefroid, P.: Symbolic verification of communication protocols
  with infinite state spaces using {QDD}s. Formal Methods in System Design
  {(FMSD)}  \textbf{14}(3),  237--255 (1999). \doi{10.1023/A:1008719024240}

\bibitem{BMT07}
Bouajjani, A., Muscholl, A., Touili, T.: Permutation rewriting and algorithmic
  verification. Information and Computation  \textbf{205}(2),  199--224 (2007).
  \doi{10.1016/J.IC.2005.11.007}

\bibitem{Bryant86}
Bryant, R.E.: Graph-based algorithms for boolean function manipulation. {IEEE}
  Transactions on Computers  \textbf{35}(8),  677--691 (1986).
  \doi{10.1109/TC.1986.1676819}

\bibitem{Bryant18}
Bryant, R.E.: Binary decision diagrams. In: Handbook of Model Checking, pp.
  191--217. Springer (2018). \doi{10.1007/978-3-319-10575-8_7}

\bibitem{BF80}
Brzozowski, J.A., Fich, F.E.: Languages of $\mathcal{R}$-trivial monoids.
  Journal of Computer and System Sciences  \textbf{20}(1),  32--49 (1980),
  \url{https://doi.org/10.1016/0022-0000(80)90003-3}

\bibitem{ChakiG18}
Chaki, S., Gurfinkel, A.: {BDD}-based symbolic model checking. In: Handbook of
  Model Checking, pp. 219--245. Springer (2018).
  \doi{10.1007/978-3-319-10575-8_8}

\bibitem{CzernerEKW24}
Czerner, P., Esparza, J., Krasotin, V., Welzel{-}Mohr, C.: Computing inductive
  invariants of regular abstraction frameworks. In: Proc.\
  $\text{35}^\text{th}$ International Conference on Concurrency Theory
  ({CONCUR}). LIPIcs, vol.~311, pp. 19:1--19:18 (2024).
  \doi{10.4230/LIPICS.CONCUR.2024.19}, {T}he tool \dodo{} and the instances are
  available at \url{https://zenodo.org/records/8354894}.

\bibitem{DRB04}
Delzanno, G., Raskin, J., Begin, L.V.: Covering sharing trees: a compact data
  structure for parameterized verification. International Journal on Software
  Tools for Technology Transfer ({STTT})  \textbf{5}(2-3),  268--297 (2004).
  \doi{10.1007/S10009-003-0110-0}

\bibitem{Eil76}
Eilenberg, S.: Automata, Languages, and Machines, vol.~B. Academic Press (1976)

\bibitem{EB23}
Esparza, J., Blondin, M.: Automata theory: An algorithmic approach. MIT Press
  (2023)

\bibitem{EsparzaFM99}
Esparza, J., Finkel, A., Mayr, R.: On the verification of broadcast protocols.
  In: Proc.\ $\text{14}^\text{th}$ Annual {IEEE} Symposium on Logic in Computer
  Science ({LICS}). pp. 352--359. {IEEE} Computer Society (1999).
  \doi{10.1109/LICS.1999.782630}

\bibitem{ELMMN14}
Esparza, J., Ledesma{-}Garza, R., Majumdar, R., Meyer, P.J., Niksic, F.: An
  {SMT}-based approach to coverability analysis. In: Proc.\
  $\text{26}^\text{th}$ International Conference on Computer Aided Verification
  ({CAV}). pp. 603--619. Springer (2014). \doi{10.1007/978-3-319-08867-9_40}

\bibitem{FinkelS01}
Finkel, A., Schnoebelen, {\relax Ph}.: Well-structured transition systems
  everywhere! Theoretical Computer Science  \textbf{256}(1-2),  63--92 (2001).
  \doi{10.1016/S0304-3975(00)00102-X}

\bibitem{GMDKRV07}
Ganty, P., Meuter, C., Delzanno, G., Kalyon, G., Raskin, J.F., {Van Begin}, L.:
  Symbolic data structure for sets of $k$-uples. Tech. Rep.~570,
  {Universit{\'e} Libre de Bruxelles, Belgium} (2007)

\bibitem{GeffroyLS17}
Geffroy, T., Leroux, J., Sutre, G.: Backward coverability with pruning for
  lossy channel systems. In: Proc.\ $\text{24}^\text{th}$ {ACM} {SIGSOFT}
  International Symposium on Model Checking of Software ({SPIN}). pp. 132--141.
  {ACM} (2017). \doi{10.1145/3092282.3092292}, {T}he tool \bml{} and the lossy
  channel system instances are available at
  \url{https://dept-info.labri.fr/~tgeffroy/lcs/}.

\bibitem{HeussnerGS09}
Heu{\ss}ner, A., Gall, T.L., Sutre, G.: Extrapolation-based path invariants for
  abstraction refinement of {FIFO} systems. In: Proc.\ $\text{16}^\text{th}$
  International on Model Checking Software ({SPIN}). pp. 107--124. Springer
  (2009). \doi{10.1007/978-3-642-02652-2_11}, {T}he tool \mcscm{} is available
  at \url{https://svn.labri.fr/repos/acs/www/redmine/projects/mcscm/wiki.html}

\bibitem{HeussnerGS12}
Heu{\ss}ner, A., Gall, T.L., Sutre, G.: {McScM}: {A} general framework for the
  verification of communicating machines. In: Proc.\ $\text{18}^\text{th}$
  International Conference on Tools and Algorithms for the Construction and
  Analysis of Systems ({TACAS}). pp. 478--484. Springer (2012).
  \doi{10.1007/978-3-642-28756-5_34}

\bibitem{monamanual2001}
Klarlund, N., M{\o}ller, A.: {MONA} Version 1.4 User Manual. BRICS, Department
  of Computer Science, University of Aarhus (January 2001), notes Series
  NS-01-1. Available from \url{http://www.brics.dk/mona/}

\bibitem{KMT17}
Kr{\"{o}}tzsch, M., Masopust, T., Thomazo, M.: Complexity of universality and
  related problems for partially ordered {NFA}s. Information and Computation
  \textbf{255},  177--192 (2017). \doi{10.1016/j.ic.2017.06.004}

\bibitem{MK21}
Masopust, T., Kr{\"{o}}tzsch, M.: Partially ordered automata and piecewise
  testability. Logical Methods in Computer Science  \textbf{17}(2) (2021).
  \doi{10.23638/LMCS-17(2:14)2021}

\bibitem{Murata89}
Murata, T.: {P}etri nets: Properties, analysis and applications. Proceedings of
  the {IEEE}  \textbf{77}(4),  541--580 (1989). \doi{10.1109/5.24143}

\bibitem{Pin86}
\'{E}ric Pin, J.: Varieties of formal languages. North Oxford, London and
  Plenum (1986)

\bibitem{RW24}
Ryzhikov, A., Wolf, P.: Monoids of upper triangular matrices over the {B}oolean
  semiring. In: Proc.\ $\text{49}^\text{th}$ International Symposium on
  Mathematical Foundations of Computer Science ({MFCS}). LIPIcs, vol.~306, pp.
  81:1--81:18 (2024). \doi{10.4230/LIPICS.MFCS.2024.81}

\bibitem{STV01}
Schwentick, T., Th{\'{e}}rien, D., Vollmer, H.: Partially-ordered two-way
  automata: {A} new characterization of {DA}. In: Proc.\ $\text{5}^\text{th}$
  International Conference on Developments in Language Theory ({DLT}).
  vol.~2295, pp. 239--250. Springer (2001). \doi{10.1007/3-540-46011-X_20}

\bibitem{Welzel-Mohr24}
Welzel{-}Mohr, C.: Inductive Statements for Regular Transition Systems. Ph.D.
  thesis, Technical University of Munich, Germany (2024),
  \url{https://mediatum.ub.tum.de/1721365}

\end{thebibliography}

\clearpage
\appendix
\section{Missing proofs of \Cref{sec:wa}}

\propAcycPow*

\begin{proof}
  Let $\A = (Q, \Sigma, \delta, q_0, F)$ be a weakly acyclic
  NFA. Recall that the states of the DFA obtained by the powerset
  construction are subsets of states of $\A$, and we have a transition
  $Q_1 \trans{a} Q_2$ in the DFA if{}f $\delta(Q_1, a) = Q_2$, where
  $\delta(R, a) \defeq \bigcup_{q \in R} \delta(q, a)$. So, by
  definition of weakly acyclic DFAs, it suffices to show that the
  following binary relation $\preceq$ over $2^Q$ is a partial order:
  \[
  Q_1 \preceq Q_2 \iff \delta(Q_1, w) = Q_2 \text{ for some word $w$}.
  \]
  The relation is reflexive because $\delta(Q_1, \ew) = Q_1$, and
  transitive since $\delta(Q_1, w_1) = Q_2$ and $\delta(Q_2, w_2) =
  Q_3$ implies $\delta(Q_1, w_1w_2) = Q_3$. It remains to show that
  $\preceq$ is antisymmetric, \ie, that $\delta(Q_1, w_1) = Q_2$ and
  $\delta(Q_2, w_2) = Q_1$ implies $Q_1 = Q_2$.

  Assume $\delta(Q_1, w_1) = Q_2$ and $\delta(Q_2, w_2) = Q_1$. We say
  that a state $q \in Q_1$ is \emph{cyclic} if there exists some $n
  \geq 1$ such that $q \in \delta(q, (w_1w_2)^n)$. We prove that every
  state of $Q_1$ is cyclic, which establishes $Q_1 = Q_2$.

  Assume $Q_1$ has an acyclic state $q$. We can pick $q$ minimal
  w.r.t.\ ${\preceq}$. Since $\delta(Q_1, w_1w_2) = Q_1$ by
  assumption, there is some $q' \in Q_1$ such that $q \in \delta(q',
  w_1w_2)$, and so $q \succeq q'$. As $q$ is acyclic we have $q' \neq
  q$, and so $q \succ q'$. By minimality of $q$, the state $q'$ is
  cyclic. As $\A$ is weakly acyclic, $\delta(q', a) = \{q'\}$ for all
  $a \in \alp{w_1w_2}$, and so, in particular, $\delta(q', w_1w_2) =
  \{q'\}$. This contradicts $q \neq q'$. \qed
\end{proof}

\propAcycMinDfa*

\begin{proof}
  Let $\A = (Q, \Sigma, \delta, q_0, F)$ be a weakly acyclic
  DFA. Given states $p, q \in Q$, we write $p \preceq q$ if $p
  \trans{w} q$ for some word $w \in \Sigma^*$. Note that $\preceq$ is
  a partial order.

  Let $\A'$ be the minimal DFA for $L \defeq \lang{\A}$. Towards a
  contradiction, suppose that $\A'$ is not weakly acyclic. It contains
  a simple cycle of length at least two, \ie, there are residuals
  $L^{x} \neq L^{y}$ and $w_1, w_2 \in \Sigma^+$ with $L^{xw_1} = L^y$
  and $L^{yw_2} = L^x$.
  
  Let $w \defeq w_1 w_2$ and $p_n \defeq \delta(q_0, xw^n)$. The set $\{p_n : n \in \N\}$ is finite, so it has at least one maximal state w.r.t.\ ${\preceq}$. Let $n \in \N$ be such that $p_n$ is maximal. As $p_n \trans{w} p_{n+1}$, we have $p_n \preceq p_{n+1}$, and by maximality, $p_n = p_{n+1}$.
  
  Now, define $q \defeq \delta(p_n, w_1)$. Note that $\delta(q, w_2) = \delta(p_n, w) = p_{n+1} = p_n$, so $p_n$ and $q$ lie on a cycle in $\A$. Thus, for a contradiction to the weak acyclicity of $\A$, it remains to show that $p_n \neq q$. We have $L^{xw} = L^{x}$, so we get inductively
  $
  \lang{p_n} = L^{xw^n} = L^x
  $.
  On the other hand, this implies
  $
  \lang{q} = \lang{p_n}^{w_1} = L^{xw_1} = L^y \neq L^x
  $.
  Thus, $L(p_n) \neq L(q)$, which yields the contradiction $p_n \neq q$. \qed
\end{proof}

\lemResiduals*

\begin{proof}
  $\Rightarrow)$ Let $L \subseteq \Sigma^*$ be a weakly acyclic
  language. Let $u \in \Sigma^*$ and $v \in \Sigma^+$ be such that
  $L^u = L^{uv}$. Let $\A = (Q, \Sigma, \delta, q_0, F)$ be the
  minimal weakly acyclic DFA that accepts $L$. Let $q, r \in Q$ be the
  states such that $q_0 \trans{u} q \trans{v} r$. Since $L^u =
  L^{uv}$, $|v| > 0$, and $\A$ is weakly acyclic, we have $q = r$ and
  $q \trans{a} q$ for every $a \in \alp{v}$. Thus, $L^u = L^{uw}$ for
  every $w \in \alp{v}^*$.

  $\Leftarrow$) Let $\A$ be the minimal DFA that accepts $L$. We must
  show that $\A$ is weakly acyclic. For the sake of contradiction,
  suppose that $q_0 \trans{u} q \trans{a} r \trans{x} q$ holds for
  some states $q \neq r$, $u \in \Sigma^*$, $a \in \Sigma$ and $x \in
  \Sigma^+$. We have $L^u = L^{uv}$ where $v \defeq ax$. By
  assumption, this implies that $L^u = L^{uw}$ for every $w \in
  \alp{v}^*$. In particular, $L^u = L^{ua}$. By minimality of $\A$,
  each state accepts a distinct residual. Hence, we must have $q_0
  \trans{u} q \trans{a} q$, which contradicts $q \neq r$. \qed
\end{proof}

\begin{proposition}\label{prop:acyc:expr:dfa}
  For every weakly acyclic expression $r$, there is an acyclic DFA
  that accepts $\lang{r}$.
\end{proposition}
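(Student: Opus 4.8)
The plan is to proceed by structural induction on the weakly acyclic expression $r$, constructing in each case a complete DFA whose underlying graph has no cycle beyond self-loops (this is what ``acyclic DFA'' means here, equivalently a weakly acyclic DFA). For the base case $r = \emptyset$, I would take the single-state DFA whose state is non-accepting and carries a self-loop on every letter. For $r = \Gamma^*$, I would take a two-state DFA with an accepting initial state $q_0$ that self-loops on each $c \in \Gamma$ and moves on each $c \in \Sigma \setminus \Gamma$ to a non-accepting sink $q_\bot$ that self-loops on every letter; then $\lang{q_0} = \Gamma^*$ and the only cycles are self-loops.

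For $r = \Lambda^* a r'$ with $a \in \Sigma \setminus \Lambda$, the induction hypothesis provides an acyclic DFA $\A'$ for $\lang{r'}$ with initial state $q_0'$. I would build $\A$ by adjoining a fresh initial state $q_0$ and a non-accepting sink $q_\bot$, setting $\delta(q_0, c) = q_0$ for $c \in \Lambda$, $\delta(q_0, a) = q_0'$, $\delta(q_0, c) = q_\bot$ for $c \in \Sigma \setminus (\Lambda \cup \{a\})$, and $\delta(q_\bot, c) = q_\bot$ for all $c$, while retaining the transitions of $\A'$ on its own states. Since $\Lambda$, $\{a\}$ and $\Sigma \setminus (\Lambda \cup \{a\})$ partition $\Sigma$ — and this is exactly where the side condition $a \notin \Lambda$ is used, to make the first two blocks disjoint — $\A$ is a genuine complete DFA; $q_0$ is non-accepting because every word of $\lang{r}$ contains the letter $a$; $\A$ accepts $\Lambda^* \cdot \{a\} \cdot \lang{r'} = \lang{r}$; and the only cycles it adds to those of $\A'$ are the self-loops at $q_0$ and $q_\bot$, so it remains acyclic.

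For $r = r_1 + r_2$, the induction hypothesis provides acyclic DFAs $\A_1$ and $\A_2$ for $\lang{r_1}$ and $\lang{r_2}$. I would let $\A$ be the product automaton $\A_1 \times \A_2$, declaring a pair accepting iff at least one component is, so that $\lang{\A} = \lang{r_1} \cup \lang{r_2} = \lang{r}$. One then checks that the product of two acyclic DFAs is acyclic: if $(p,q) \trans{w} (p,q)$ with $w \in \Sigma^+$, then $p \trans{w} p$ in $\A_1$ and $q \trans{w} q$ in $\A_2$, so by weak acyclicity $p \trans{c} p$ and $q \trans{c} q$, hence $(p,q) \trans{c} (p,q)$, for every $c \in \alp{w}$. (Alternatively, one could form the weakly acyclic NFA that is the disjoint union of $\A_1$ and $\A_2$ together with a fresh initial state and invoke \Cref{prop:acyc:pow}.)

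The argument is essentially routine. The only points that call for a little care are the verification in the $\Lambda^* a r'$ case that the automaton is deterministic and complete — precisely where the hypothesis $a \notin \Lambda$ enters — and the elementary observation that products of acyclic DFAs stay acyclic in the union case. Neither of these is a genuine obstacle.
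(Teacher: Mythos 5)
Your proof is correct and takes essentially the same route as the paper's: structural induction with the identical fresh-initial-state-plus-sink construction for $\Lambda^* a r'$ and the pairing (product) automaton for union. The only cosmetic difference is that you verify weak acyclicity of the product directly from the self-loop definition, whereas the paper argues by contradiction via strongly connected components; both are equally valid.
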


\begin{proof}
  We proceed by structural induction on expression $r$. The claim is
  obvious for $r = \emptyset$ and $r = \Gamma^*$. Assume $r =
  \Lambda^* ar$ for some $\Lambda \subseteq \Sigma$ and $a \notin
  \Lambda$. By induction, there exists a weakly acyclic DFA $\A = (Q,
  \Sigma, \delta, q_0, F)$ such that $\lang{A} = \lang{r}$. The
  following weakly acyclic DFA accepts $\Lambda^* a \lang{r}$:

  \begin{center}
    \begin{tikzpicture}[
        node distance=2cm, auto, thick, initial text={},
        scale=0.8, transform shape
      ]
      \node[state, initial]           (q0) {};
      \node[state, above right=5pt and 2cm of q0] (q1) {$q_0$};
      \node[state, below right=5pt and 2cm of q0] (q2) {};
      \node[right=4cm of q1]          (q3) {};
      
      \node[fit=(q1)(q3), draw, inner sep=5pt, label={below:\Large$\A$}] {};
      
      \path[->]
      (q0) edge[loop above] node {$\Lambda$} () 
      (q0) edge[] node {$a$} (q1)
      (q0) edge[] node[swap, xshift=15pt, yshift=-5pt] {$\Sigma \setminus (\Lambda \cup \{a\})$} (q2)
      (q2) edge[loop right] node {$\Sigma$} () 
      ;
    \end{tikzpicture}
  \end{center}

  Assume $r = r_1 + r_2$. By induction, there are weakly acyclic DFAs
  $\A_1 = (Q_1, \allowbreak \Sigma, \allowbreak \delta_1, q_{01},
  F_1)$ and $\A_2 = (Q_2, \Sigma, \delta_2, q_{02}, F_2)$ such that
  $\lang{A_i} = \lang{r_i}$ for both $i \in \{1, 2\}$. We claim that
  the usual pairing $\A = [\A_1, \A_2]$ for $\lang{A_1} \cup
  \lang{A_2}$ is acyclic. To obtain a contradiction, suppose it is
  not. There are states $[p_1, p_2] \neq [q_1, q_2]$ and words $u, v
  \in \Sigma^+$ such that $[p_1, p_2] \trans{u} [q_1, q_2] \trans{v}
      [p_1, p_2]$.
  
  Let $i \in \{1, 2\}$ be such that $p_i \neq q_i$. By definition of
  pairing, we have $\delta(p_i, u) = q_i$ and $\delta(q_i, v) =
  p_i$. So, $p_i$ and $q_i$ are distinct states of $\A_i$ that belong
  to the same strongly connected component. This contradicts the weak
  acyclicity of $\A_i$. \qed
\end{proof}

\begin{proposition}\label{prop:acyc:nfa:expr}
  For every weakly acyclic NFA $\A$, there is a weakly acyclic
  expression for $\lang{A}$.
\end{proposition}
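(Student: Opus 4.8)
The plan is to build, by induction along the reachability partial order $\preceq$ of $\A$, a weakly acyclic expression $r_q$ with $\lang{r_q} = \lang{q}$ for every state $q$; the expression for $\lang{\A}$ is then $r_{q_0}$. First I would record two consequences of weak acyclicity that tame the nondeterminism. For each state $q$ set $\Lambda_q \defeq \{a \in \Sigma : \delta(q, a) = \{q\}\}$. By the definition of a weakly acyclic NFA applied to single-letter words, $q \in \delta(q, a)$ already forces $\delta(q, a) = \{q\}$; hence for every $a \notin \Lambda_q$ we have $q \notin \delta(q, a)$, and since the underlying graph has no simple cycle of length $\ge 2$, every $q' \in \delta(q, a)$ with $a \notin \Lambda_q$ satisfies $q \prec q'$. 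As $Q$ is finite, $\preceq$ has finite height, so the induction (on the length of the longest $\prec$-chain issuing from $q$) is well-founded; maximal states form the base case.

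The heart of the argument is the identity
\[
\lang{q} \;=\; \Big(\textstyle\bigcup_{a \in \Sigma \setminus \Lambda_q}\ \bigcup_{q' \in \delta(q, a)}\ \Lambda_q^*\, a\, \lang{q'}\Big)\ \cup\ E_q,
\]
where $E_q \defeq \Lambda_q^*$ if $q \in F$ and $E_q \defeq \emptyset$ otherwise. Inclusion ``$\supseteq$'' is immediate from the transitions of $\A$. For ``$\subseteq$'', take $w \in \lang{q}$ with an accepting run from $q$. If $w \in \Lambda_q^*$, every letter of $w$ forces the run to stay at $q$ (as $\delta(q, a) = \{q\}$ for $a \in \Lambda_q$), so the run ends at $q$, whence $q \in F$ and $w \in E_q$. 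Otherwise factor $w = u\, a\, w'$ with $u \in \Lambda_q^*$ its maximal self-loop prefix and $a \notin \Lambda_q$: reading $u$ keeps the run at $q$, reading $a$ sends it to some $q' \in \delta(q, a)$ (recall $q \notin \delta(q, a)$), and the rest of the run shows $w' \in \lang{q'}$, whence $w \in \Lambda_q^*\, a\, \lang{q'}$.

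It remains to check that the right-hand side is a weakly acyclic expression. Each inner term $\Lambda_q^*\, a\, \lang{q'}$ matches the production $\Lambda^* a\, r$ with $\Lambda = \Lambda_q$, $a \in \Sigma \setminus \Lambda_q$, and $r = r_{q'}$ supplied by the induction hypothesis since $q \prec q'$; the term $E_q$ is $\emptyset$ or $\Gamma^*$ with $\Gamma = \Lambda_q$; and all of these are combined with the $+$ operator of the grammar. (If $\delta(q, a) = \emptyset$ for every $a \notin \Lambda_q$ --- in particular if $q$ is maximal --- then $r_q$ is just the expression $E_q$, the base case.) This defines $r_q$, and $r_{q_0}$ is the required weakly acyclic expression for $\lang{\A}$.

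I expect the only delicate point to be the ``$\subseteq$'' direction of the displayed identity: it relies on weak acyclicity to guarantee that while reading letters of $\Lambda_q$ the automaton behaves deterministically and cannot leave $q$, whereas any letter outside $\Lambda_q$ forces it to leave $q$, so the decomposition of $w$ into a maximal self-loop prefix followed by an escaping letter is forced. The rest is routine verification against the grammar.
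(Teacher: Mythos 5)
Your proof is correct, but it is organized differently from the paper's. The paper's argument is global and very short: delete the self-loops, observe that the remaining graph is a finite DAG and hence has finitely many paths, and write $\lang{\A}$ directly as the finite union, over all self-loop-free paths from $q_0$ to accepting states, of the linear expressions $\Lambda_1^* a_1 \cdots \Lambda_n^* a_n \Gamma^*$, where $\Lambda_i$ (resp.\ $\Gamma$) collects the self-loop letters of the $i$-th (resp.\ last) state on the path. You instead proceed state by state, by induction along the reachability order, using the one-step unfolding identity $\lang{q} = \big(\bigcup_{a \notin \Lambda_q} \bigcup_{q' \in \delta(q,a)} \Lambda_q^*\, a\, \lang{q'}\big) \cup E_q$. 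The combinatorial core is the same in both proofs: an accepting run factors into blocks of self-loops separated by letters that necessarily leave the current state, and such a leaving letter cannot be a self-loop letter there (which is exactly what the grammar's side condition $a \in \Sigma \setminus \Lambda$ requires). The paper leaves this factorization and the side condition implicit; your identity spells them out, including the two consequences of weak acyclicity ($a \in \Lambda_q$ forces the run to stay at $q$, and $a \notin \Lambda_q$ forces $q \notin \delta(q,a)$, so the successor is strictly above $q$) and the well-foundedness of the induction via the absence of nontrivial cycles. What each buys: the paper's version is a two-line proof that also exhibits a normal form (a finite union of ``linear'' products, one per path, possibly exponentially many); your recursion is longer but syntax-directed, makes the delicate nondeterminism point explicit, and constructs the expression the way one would actually compute it, with the subexpressions $r_{q'}$ reusable across states.
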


\begin{proof}
  Let $\A = (Q, \Sigma, \delta, q_0, F)$. Without its self-loops, $\A$
  is a directed acyclic graph, and hence has finitely many paths. So,
  $\lang{A}$ is a finite union of languages of the form $\Lambda_1^*
  a_1 \cdots \Lambda_n^* a_n \Gamma^*$ where $\Lambda_1, \ldots,
  \Lambda_n, \Gamma \subseteq \Sigma$ and each $a_i \notin
  \Lambda_i$. \qed
\end{proof}

\begin{corollary}
  A language is weakly acyclic iff it is described by a weakly
  acyclic expression.
\end{corollary}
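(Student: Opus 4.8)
The plan is to obtain this corollary as a direct consequence of the two preceding propositions, handling each direction of the biconditional with one of them. No new construction is needed; the work has already been done in \Cref{prop:acyc:expr:dfa} and \Cref{prop:acyc:nfa:expr}, so the proof is essentially a bookkeeping step that unwinds the definition of ``weakly acyclic language.''

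For the ``if'' direction, suppose $L$ is described by a weakly acyclic expression $r$, i.e.\ $L = \lang{r}$. By \Cref{prop:acyc:expr:dfa} there is a weakly acyclic DFA accepting $\lang{r} = L$, and by definition a language accepted by a weakly acyclic automaton is weakly acyclic; hence $L$ is weakly acyclic. For the ``only if'' direction, suppose $L$ is weakly acyclic. By definition $L$ is accepted by some weakly acyclic automaton, and in particular by a weakly acyclic NFA (a weakly acyclic DFA is a special case). Applying \Cref{prop:acyc:nfa:expr} to that NFA yields a weakly acyclic expression for $L$. Chaining the two implications gives the equivalence.

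There is no genuine obstacle here; the corollary is immediate once the two propositions are available. The only point worth a second glance is that \Cref{prop:acyc:expr:dfa} really delivers a \emph{weakly} acyclic DFA: its inductive construction only ever adds self-loops to the automata obtained from subexpressions (for the $\Lambda^*$ and $\Gamma^*$ parts), and pairing two weakly acyclic DFAs stays weakly acyclic, so weak acyclicity is preserved throughout. With that observation in place, the statement follows.
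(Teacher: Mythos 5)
Your proof is correct and matches the paper's intent exactly: the corollary is stated as an immediate consequence of \Cref{prop:acyc:expr:dfa} and \Cref{prop:acyc:nfa:expr}, with the two directions handled precisely as you do (expression to weakly acyclic DFA, and weakly acyclic NFA to expression). No discrepancies to report.
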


\section{Missing proofs of \Cref{sec:ds}}

\propFundOne*

\begin{proof}
  The result clearly holds for $\emptyset$ and $\Sigma^*$. Further,
  assume that for every $L^a$ such that $L^a \neq L$ the language
  accepted from state $L^a$ is $L^a$. Since for each such letter $a$
  we have $\delta_{\M}(L, a) = L^a$, for each other letter $b$ we have
  $\delta_{\M}(L, b) = L$, and state $L$ is accepting iff $\ew \in L$,
  the language recognized from state $L$ is $L$. \qed
\end{proof}

\propFundTwo*

\begin{proof}
  By definition, states of the master automaton are distinct
  languages. By \Cref{prop:fund1}, distinct states of $\A_L$ accept
  distinct languages. Recall that a DFA is minimal iff its states
  accept different languages. So, $\A_L$ is minimal. \qed
\end{proof}

\propMake*

\begin{proof}
  For the sake of contradiction, let us consider a table obtained by
  successive calls to \make, and let $\make(s, b)$ be the first call
  that creates a state $q$ whose language is the same as another
  existing state $q' \neq q$. Note that $q'$ is unique, as otherwise
  there would already have been two states with the same language.

  Let $(s', b')$ be the entry of $q'$ in the table. As $\lang{q} =
  \lang{q'}$, we have $b = b'$. Moreover, we claim that the following
  properties hold for all $a \in \Sigma$:
  \begin{enumerate}
  \item[(i)] if $q^a \notin \{\self, q'\}$, then $(q')^a =
    q^a$;\label{itm:claim:1}

  \item[(ii)] if $q^a \in \{\self, q'\}$, then $(q')^a =
    \self$.\label{itm:claim:2}
  \end{enumerate}

  Before proving the claim, let us see why it allows us to prove the
  proposition. If $q' \notin s$, then, by~(i) and~(ii), we have $s =
  s'$, and so Line~\ref{ln:make:exists} of \make is executed, which is
  a contradiction as $q$ is a new state. Thus, we may assume that $q'
  \in s$. By~(i) and~(ii), we have $s[q' / \self] = s'$. Let $q''
  \defeq \max\{r \in s : r \neq \self\}$. If $q' = q''$, then
  Line~\ref{ln:make:equiv} of \make is executed, which yields a
  contradiction. So, we may assume that $q' < q''$. Let $a \in \Sigma$
  be such that $q^a = q''$. We have $\lang{q'}^a = \lang{q}^a =
  \lang{q^a} = \lang{q''}$. Thus, by unicity, we have $(q')^a =
  q''$. This means that $q'$ has a successor $q''$ such that $q' <
  q''$, which is impossible. Therefore, we get a contradiction in all
  cases, which proves the proposition.

  It remains to prove~(i) and~(ii).

  \medskip\emph{Property~(i)}. Let $a \in \Sigma$ be such that $q^a
  \notin \{\self, q'\}$. Since $q'$ is the unique state of the table
  such that $\lang{q'} = \lang{q}$, we must have $\lang{q^a} \neq
  \lang{q}$. If $(q')^a = \self$, then $\lang{q^a} = \lang{q}^a =
  \lang{q'}^a = \lang{q'} = \lang{q}$, which is a contradiction. Thus,
  we may assume that $(q')^a \neq \self$.

  We have $\lang{q^a} = \lang{q}^a = \lang{q'}^a =
  \lang{(q')^a}$. Note that $q^a$ existed prior to the creation of
  $q$. Thus, we must have $q^a = (q')^a$ as otherwise there would be
  two distinct states accepting the same language.

  \medskip\emph{Property~(ii)}. Let $a \in \Sigma$ be such that $q^a
  \in \{\self, q'\}$. If $q^a = \self$, then $\lang{q'}^a = \lang{q}^a
  = \lang{q} = \lang{q'}$. If $q^a = q'$, then $\lang{q'}^a =
  \lang{q}^a = \lang{q^a} = \lang{q'}$. Therefore, in both cases, we
  have $\lang{q'}^a = \lang{q'}$. This implies $(q')^a = \self$, as
  otherwise $(q')^a$ would be a distinct state accepting the same
  language as $q'$. \qed
\end{proof}

\section{Missing proofs of \Cref{ssec:binop}}

Let us prove the claims of \Cref{prop:comp:inter:full} separately.

\begin{proposition}
  \Cref{alg:comp} is correct.
\end{proposition}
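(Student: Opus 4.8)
The plan is to prove the slightly more precise statement that, for every identifier $q$ currently in the table, the call of \comp on $q$ returns an identifier $r$ with $\lang{r} = \overline{\lang{q}}$. I would argue by well-founded induction on the identifier $q$. This is sound because \make records a node only after all of its non-\self{} successors already exist, so whenever $q^a \neq \self$ we have $q^a < q$; hence each recursive call of \comp is on a strictly smaller identifier, which also re-derives the termination claim already noted in the text.

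The base cases are immediate: on $\qnone$ the procedure returns $\qall$, and $\overline{\lang{\qnone}} = \overline{\emptyset} = \Sigma^* = \lang{\qall}$; the case of $\qall$ is symmetric. For the inductive step, fix $q \notin \{\qnone, \qall\}$ with node $q \colon (s_q, \flag{q})$. The procedure builds a successor tuple $s$ with $s_a = \self$ when $q^a = \self$, and otherwise $s_a$ equal to the identifier returned by the recursive call of \comp on $q^a$; it then returns $r$, the identifier that \make produces from $s$ and the acceptance flag $\neg\flag{q}$. By the induction hypothesis, $\lang{s_a} = \overline{\lang{q^a}}$ whenever $q^a \neq \self$, and in that case $s_a$ is a genuine identifier (never \self), since \comp returns only $\qnone$, $\qall$, or a node created by \make. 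By the specification of \make we have $\lang{r} = \lang{s, \neg\flag{q}}$, so it remains to show $\lang{s, \neg\flag{q}} = \overline{\lang{q}}$, \ie\ that $w \in \lang{s, \neg\flag{q}} \iff w \notin \lang{q}$ for every $w \in \Sigma^*$.

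I would establish this equivalence by a nested induction on $|w|$, unfolding the recursive definition of $\lang{\cdot}$ one letter at a time. For $w = \ew$: $\ew \in \lang{s, \neg\flag{q}}$ iff $\neg\flag{q} = 1$ iff $\flag{q} = 0$ iff $\ew \notin \lang{q}$. For $w = a_i w'$: if $q^{a_i} = \self$ then $s_{a_i} = \self$, so $a_i w' \in \lang{s, \neg\flag{q}}$ reduces to $w' \in \lang{s, \neg\flag{q}}$ and $a_i w' \in \lang{q}$ reduces to $w' \in \lang{q}$, and the inner hypothesis applied to $w'$ finishes the case; if $q^{a_i} \neq \self$ then $s_{a_i} \neq \self$, so $a_i w' \in \lang{s, \neg\flag{q}}$ iff $w' \in \lang{s_{a_i}} = \overline{\lang{q^{a_i}}}$ by the outer hypothesis, iff $w' \notin \lang{q^{a_i}}$, iff $a_i w' \notin \lang{q}$. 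This closes both inductions and hence the proof.

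The only genuine subtlety I anticipate is the need for this nested word-length induction: the branch $q^{a_i} = \self$ keeps the recursion conceptually at the same node, so a single structural induction over the table's nodes does not close the argument by itself; the other delicate point is simply to invoke the contract that \make preserves the language it represents, not merely the canonicity guaranteed by \Cref{prop:make}. Beyond that the argument is routine bookkeeping, running in complete parallel with the correctness proof of negation for OBDDs referenced in the text.
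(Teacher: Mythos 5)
Your proof is correct and follows essentially the same route as the paper's: the $\varepsilon$-case is handled by the flag, and for each letter the \self{} branch and the recursive branch are justified by the identity $(\overline{L})^a = \overline{L^a}$, with correctness of the recursive call supplied by induction on the identifiers. The only difference is presentational: you make explicit (via the nested induction on $|w|$ unfolding the definition of $\lang{s,b}$, and the well-founded induction on identifiers) what the paper leaves implicit when it asserts that assigning the correct residual nodes and flag yields a node for $\overline{\lang{q}}$.
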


\begin{proof}
  Let $L \defeq \lang{q}$. The result $r$ of \Cref{alg:comp} should
  satisfy $\lang{r} = \overline{L}$. Let us show that this is the
  case.

  Note that $\ew \in \overline{L}$ iff $\ew \notin L$. Therefore,
  ``$\neg \mathrm{flag}(q)$'', on Line~\ref{ln:comp:return}, indicates
  correctly if the empty word belongs to $\overline{L}$.

  Let $a \in \Sigma$. \Cref{alg:comp} must assign to $s_a$ the state
  accepting $(\overline{L})^a$. For every $u \in \Sigma^*$, we have
  \begin{align*}
    u \in (\overline{L})^a
    &\iff au \in \overline{L}
    \iff au \notin L
    \iff u \notin L^a
    \iff u \in \overline{L^a}.    
  \end{align*}

  If $q^a = \self$, then $L^a = L$, and so by the above we have
  $(\overline{L})^a = \overline{L}$. Thus, the algorithm correctly
  assigns ``$s_a \leftarrow \self$''.

  Otherwise, by the above, assigning ``$s_a \leftarrow
  \comp{$q^a$}$'' is correct. \qed
\end{proof}

\begin{proposition}
  \Cref{alg:comp} with memoization works in time $\O(mn)$, where $m
  \defeq |\Sigma|$ and $n$ is the number of nodes reachable from $q$.
\end{proposition}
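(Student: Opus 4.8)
The plan is to argue via memoization that the total work is bounded by a per-node cost times the number of reachable nodes. First I would observe that, with memoization, \comp is called at most once per distinct argument, and the only arguments on which \comp is ever invoked are nodes reachable from $q$: the initial call is \comp{$q$}, and every recursive call has the form \comp{$q^a$} for some $a \in \Sigma$ with $q^a \neq \self$, i.e.\ a direct successor of an already-processed node. Hence the set of arguments ever passed to \comp is a subset of the $n$ nodes reachable from $q$, and each triggers at most one full (non-memoized) execution of the body.

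Next I would bound the cost of a single non-memoized execution of \comp{$q'$}. The body iterates over the $m$ letters of $\Sigma$; for each letter it either sets $s_a \leftarrow \self$ or makes a recursive call (which, by memoization, is either $\O(1)$ to fetch a cached value, or is accounted separately as the one execution for that argument). Building the successor tuple $s$ is $\O(m)$ work, and the final \make{$s, \neg\flag{q'}$} runs in amortized time $\O(m) = \O(|\Sigma|)$, as stated just before \Cref{prop:make}. The memoization dictionary lookups and insertions contribute $\O(1)$ amortized (or $\O(m)$ if one prefers to bound hashing a node by its size, which is still within budget). So each of the at most $n$ executions costs $\O(m)$, giving the claimed $\O(mn)$ total. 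I would also note the base cases $q = \qnone$ and $q = \qall$ take $\O(1)$.

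The main obstacle — really the only subtlety — is making the accounting of recursive calls rigorous: a naive reading suggests each node could be touched up to $m$ times (once per incoming letter from its predecessors), which would only give $\O(m^2 n)$. The resolution is that memoization collapses all repeated calls on the same argument to $\O(1)$, so the expensive body runs once per \emph{distinct} node, and the $m$ factor comes solely from the single $\Sigma$-loop inside that body. I would phrase this as: charge the $\O(m)$ cost of executing the body of \comp{$q'$} to the node $q'$ itself, and charge each memoized (cache-hit) call $\O(1)$ to the edge along which it was made; since there are at most $n$ nodes and at most $mn$ edges, the total is $\O(mn)$. The termination argument (each recursive call strictly decreases the identifier) was already given in the text, so I would just invoke it to guarantee the recursion is well-founded and every argument is eventually resolved.
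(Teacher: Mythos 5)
Your proof is correct and follows essentially the same approach as the paper's: memoization ensures the loop body of \comp{} executes at most once per node reachable from $q$, and each such execution costs $\O(|\Sigma|)$ (including the amortized cost of \make{}), giving $\O(mn)$ overall. The paper's proof is simply a terser version of this argument; your extra charging of cache-hit calls to edges just makes explicit the accounting the paper leaves implicit.
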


\begin{proof}
  Due to memoization, Line~\ref{ln:comp:for} is reached at most once
  per node $q'$ reachable from $q$. Moreover, the loop iterates over
  each letter $a \in \Sigma$. \qed
\end{proof}

\begin{proposition}
  \Cref{alg:inter} is correct.
\end{proposition}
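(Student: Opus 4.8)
The plan is to mirror the correctness argument just given for \Cref{alg:comp}, replacing negation by conjunction and invoking that residuals distribute over intersection. Concretely, I would prove that $\inter(p, q)$ returns an identifier $r$ with $\lang{r} = \lang{p} \cap \lang{q}$ by induction on the pair $(p, q)$ of identifiers, ordered componentwise over the natural numbers; this order is well-founded because identifiers are issued in increasing order (and, as noted in the text, whenever $p^a \neq \self$ the identifier $p^a$ is strictly smaller than $p$). The base cases are immediate from the semantics of $\qnone$ and $\qall$: if $p = \qnone$ or $q = \qnone$ then $\lang{p} \cap \lang{q} = \emptyset = \lang{\qnone}$; if $p = \qall$ then $\lang{p} \cap \lang{q} = \Sigma^* \cap \lang{q} = \lang{q}$, and symmetrically for $q = \qall$.

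For the recursive case, with $p, q \notin \{\qnone, \qall\}$, I would first check the acceptance flag: $\ew \in \lang{p} \cap \lang{q}$ iff $\ew \in \lang{p}$ and $\ew \in \lang{q}$ iff $\flag{p} = \flag{q} = 1$, which matches the flag $\flag{p} \land \flag{q}$ handed to \make on Line~\ref{ln:inter:return}. Then, for each $a \in \Sigma$, I would show that $s_a$ is assigned the node for $(\lang{p} \cap \lang{q})^a$. The key identity is that residuals distribute over intersection, so $(\lang{p} \cap \lang{q})^a = \lang{p}^a \cap \lang{q}^a$; and by the recursive definition of $\lang{\cdot}$ one has $\lang{p}^a = \lang{p'}$ for the identifier $p'$ computed by the algorithm (namely $p^a$ if $p^a \neq \self$, and $p$ otherwise), and likewise $\lang{q}^a = \lang{q'}$. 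Hence $(\lang{p} \cap \lang{q})^a = \lang{p'} \cap \lang{q'}$. If $p' = p$ and $q' = q$, this residual is $\lang{p} \cap \lang{q}$ itself, so setting $s_a \leftarrow \self$ is correct; otherwise, since $p' \leq p$, $q' \leq q$ and $(p', q') \neq (p, q)$, the pair $(p', q')$ is strictly below $(p, q)$, so the induction hypothesis gives that $\inter(p', q')$ returns the node for $\lang{p'} \cap \lang{q'} = (\lang{p} \cap \lang{q})^a$. Finally, by the specification of \make (\Cref{prop:make}), the identifier returned on Line~\ref{ln:inter:return} is the unique node whose language has exactly the residuals and acceptance flag just described, i.e., the node for $\lang{p} \cap \lang{q}$.

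I do not expect a genuine obstacle: the proof is the conjunction counterpart of the negation argument for \Cref{alg:comp}, and essentially unwinds the two residual identities above together with the induction hypothesis. The one point deserving explicit care is the interaction between the ``$p' = p$ and $q' = q$'' shortcut and the well-foundedness of the induction: the shortcut is exactly what guarantees that the recursive call is made only on a strictly smaller pair (the termination argument already sketched in the text), and it is sound because in that case $\lang{p} \cap \lang{q}$ equals its own $a$-residual, so the self-loop faithfully encodes that residual. I would therefore state the residual identities $\lang{p}^a = \lang{p'}$ and $(\lang{p} \cap \lang{q})^a = \lang{p}^a \cap \lang{q}^a$ once, up front, and then the remainder is routine.
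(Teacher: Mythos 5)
Your proposal is correct and follows essentially the same route as the paper's proof: set the acceptance flag via $\ew \in K \cap L \iff \ew \in K \wedge \ew \in L$, use the residual identity $(K \cap L)^a = K^a \cap L^a$ letter by letter, justify the $\self$ shortcut when both residuals are the language itself, and rely on the recursive call (plus \make) otherwise. The only difference is presentational: you make explicit the well-founded induction on pairs of identifiers that the paper leaves implicit in its termination remark, which is a fine (slightly more careful) way to phrase the same argument.
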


\begin{proof}
  Let $K \defeq \lang{p}$ and $L \defeq \lang{q}$. The result $r$ of
  \Cref{alg:inter} should satisfy $\lang{r} = K \cap L$. Let us show
  that this is the case.

  Note that $\ew \in K \cap L$ iff $\ew \in K \land \ew \in
  L$. Therefore, ``$\mathrm{flag}(p) \land \mathrm{flag}(q)$'', on
  Line~\ref{ln:inter:return}, indicates correctly if the empty word
  belongs to $K \cap L$.

  Let $a \in \Sigma$. \Cref{alg:inter} must assign to $s_a$ the state
  accepting $(K \cap L)^a$. For every $u \in \Sigma^*$, we have
  $
  u \in (K \cap L)^a
  \iff au \in K \cap L
  \iff (au \in K) \land (au \in L)
  \iff u \in K^a \land u \in L^a.    
  $

  If $p^a = q^a = \self$, then $L^a = L$ and $K^a = K$, and so by the
  above we have $(K \cap L)^a = K \cap L$. Thus, the algorithm
  correctly assigns ``$s_a \leftarrow \self$''.

  Otherwise, by the above, assigning ``$s_a \leftarrow \inter{$p',
    q'$}$'' is correct. \qed
\end{proof}

\begin{proposition}
  \Cref{alg:inter} with memoization works in time $\O(m n_p n_q)$,
  where $m \defeq |\Sigma|$ and $n_x$ is the number of nodes reachable
  from node $x$.
\end{proposition}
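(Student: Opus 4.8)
The plan is to mirror the complexity analysis of \Cref{alg:comp} carried out just above, now with two recursion parameters instead of one. There are two ingredients: a reachability invariant that bounds the set of distinct subproblems, and a counting step that exploits memoization.

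First I would establish the invariant that every recursive call performed during a top-level call on $(p, q)$ has the form ``intersection of $p'$ and $q'$'' with $p'$ reachable from $p$ and $q'$ reachable from $q$ in the table. This is a routine induction on the recursion: inside the body processing a pair $(r, s)$, the recursive call issued for a letter $a$ is on a pair $(r', s')$ with $r' \in \{r, r^a\}$ and $s' \in \{s, s^a\}$, so $r'$ is reachable from $r$ (hence from $p$) and $s'$ from $s$ (hence from $q$); the point to notice is that the ``else'' assignments $r' \leftarrow r$ (taken when $r^a = \self$) and $s' \leftarrow s$ keep the arguments inside the reachable sets. Hence the set of argument pairs ever passed to \Cref{alg:inter} is contained in $R_p \times R_q$, where $R_x$ denotes the set of nodes reachable from $x$; by the definition of $n_x$ we have $|R_p| = n_p$ and $|R_q| = n_q$, so there are at most $n_p n_q$ such pairs.

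Next comes the counting. The call graph is acyclic, since on each recursive call one of the two identifiers strictly decreases while the other does not increase (as already noted in the paper), so the recursion terminates; and memoization ensures that, for each of the at most $n_p n_q$ distinct argument pairs, the body of the procedure (everything after the dictionary lookup) is executed at most once. Each body execution does $O(1)$ work for the base-case tests, then one pass of the loop over $\Sigma$ at Line~\ref{ln:inter:for} --- that is $m$ iterations, each computing $p'$ and $q'$, performing the two equality tests, and either setting $s_a$ or issuing a recursive call whose own cost we attribute to the callee rather than to the caller --- and finally one call to \make at Line~\ref{ln:inter:return}. Since \make has amortized cost $O(|\Sigma|) = O(m)$, and dictionary operations on the memoization table are taken to run in $O(1)$ time (its keys being pairs of identifiers), each body execution costs $O(m)$ amortized; multiplying by the number of bodies yields $\O(m\, n_p\, n_q)$.

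I do not expect a genuine obstacle: this is the standard dynamic-programming analysis of OBDD-style binary operations transported to WADs, exactly as the paper advertises. The only points requiring a little care are the reachability invariant --- in particular that the $\self$-induced ``else'' branches do not escape $R_p$ or $R_q$ --- and the accounting convention by which the constant cost of launching each recursive call is charged to that call's single body execution. The amortized $O(|\Sigma|)$ bound for \make is used as a black box, as asserted earlier in the paper.
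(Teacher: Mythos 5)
Your proof is correct and follows essentially the same route as the paper's own (much terser) argument: memoization ensures the loop body is executed at most once per pair of nodes reachable from $p$ and $q$, and each such execution costs $\O(|\Sigma|)$ including the amortized cost of \make. Your additional care about the reachability invariant and the charging of recursive-call overhead simply spells out details the paper leaves implicit.
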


\begin{proof}
  Due to memoization, Line~\ref{ln:inter:for} is reached at most once
  per pair $(x, y)$ where $x$ is a node reachable from $p$, and $y$ is
  a node reachable for $q$. Moreover, the loop iterates over each
  letter $a \in \Sigma$. \qed
\end{proof}

\section{Missing proofs of \Cref{ssec:rel}}

\propSRes*

\begin{proof}
  We consider $w = a \in \Sigma$. The general case follows by
  induction. We have
  \begin{alignat*}{3}
    &&& u \in \lang{S}^a \\
    &\iff\ && au \in \lang{S} \\
    &\iff\ &&
    au \in \bigcup_{(p, q) \in S} \Pre{p}{q} \\
    &\iff\ &&
    \exists b \in \Sigma, v \in \Sigma^{|u|}, (p, q) \in S
    : (au, bv) \in \lang{p} \land bv \in \lang{q} \\
    &\iff\ &&
    \exists b \in \Sigma, v \in \Sigma^{|u|}, (p, q) \in S
    : (u, v) \in \lang{p}^{(a, b)} \land v \in \lang{q}^b \\
    &\iff\ &&
    \exists b \in \Sigma, v \in \Sigma^{|u|}, (p, q) \in S,
    p' \in \delta(p, (a, b))
    : (u, v) \in \lang{p'} \land v \in \lang{\tsucc{$q, b$}} \\
    &\iff\ &&
    \exists b \in \Sigma, (p, q) \in S, p' \in \delta(p, (a, b))
    : u \in \Pre{p'}{\tsucc{$q, b$}} \\
    &\iff\ &&
    u \in \bigcup_{(p', q') \in S'} \Pre{p'}{q'} \\
    &\iff\ && u \in \lang{S'}. \tag*{\qed}
  \end{alignat*}
\end{proof}

\propSCycle*

\begin{proof}
  Let $v' \in \Sigma^*$ be such that $S' \trans{v'} S$. Let $L \defeq
  \lang{S_0}$. Let $u \in \Sigma^*$ be such that $S_0 \trans{u}
  S'$. Let $v \defeq v'a$. Since $S_0 \trans{u} S' \trans{v} S'$, by
  \Cref{prop:s:res}, we have $L^u = \lang{S'}$ and $L^{uv} =
  \lang{S'}$. So, $L^u = L^{uv}$. Since $L$ is weakly acyclic, by
  \Cref{lem:residuals}, we have $L^u = L^{uw}$ for all $w \in
  \alp{v}^*$. Thus, $L^u = L^{uv'}$. Since $S_0 \trans{uv'} S$, by
  \Cref{prop:s:res}, we have $L^{uv'} = \lang{S}$. Thus, $\lang{S} =
  L^{u v'} = L^{u} = \lang{S'}$. \qed
\end{proof}

\begin{proposition}
  \Cref{alg:pre:gen} terminates.
\end{proposition}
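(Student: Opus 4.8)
The plan is to show that only finitely many distinct sets $S$ of pairs can ever arise as arguments to \prep, so the recursion tree has bounded depth along every branch and hence must terminate. Concretely, every set $S$ passed to \prep is a subset of $P \times Q_q$, where $P$ is the (finite) state set of the transducer $\T$ and $Q_q$ is the set of nodes reachable from $q$ in the underlying diagram (which is finite, since a diagram is a finite table of nodes). Thus $S \subseteq P \times Q_q$ and there are at most $2^{|P| \cdot |Q_q|}$ possibilities.

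First I would observe that the initial call is \prep{$\{(p_0, q)\}$}, so the first set lies in $P \times Q_q$. Then, inspecting Line~\ref{ln:S:def}, each successor set $S'$ is built from pairs $(p', q')$ where $p' \in \delta(p, (a, b))$ for some $(p,q) \in S$, hence $p' \in P$, and $q' = \tsucc{$q, b$}$, which returns either $q^b$ (a node reachable from $q$, since it is the $b$-successor of a node already reachable from $q$) or $q$ itself; in either case $q' \in Q_q$. So by a trivial induction every set reachable via $\trans{}$ from $\{(p_0,q)\}$ stays within the finite universe $2^{P \times Q_q}$.

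Next I would argue termination proper. The marking mechanism guarantees that along any chain of nested recursive calls $\prep{$S_0$}, \prep{$S_1$}, \ldots$ with $S_0 \trans{a_1} S_1 \trans{a_2} \cdots$, all the $S_i$ are currently marked, hence pairwise distinct (a set is marked exactly while its \prep-call is on the stack, and Line~\ref{ln:S:marked} prevents recursing into an already-marked set). Since there are only finitely many distinct subsets of $P \times Q_q$, every such chain has length at most $2^{|P|\cdot|Q_q|}$, so the recursion depth is bounded. Each invocation of \prep does only finitely much work outside its recursive calls (one loop over $\Sigma$, each iteration doing a finite computation of $S'$, one marked-check, and one call to \make), and branches finitely (at most $|\Sigma|$ recursive calls). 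A finitely branching tree of bounded depth is finite, so the whole computation terminates.

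I do not expect any real obstacle here: the only mildly delicate point is making precise that $\tsucc$ never produces a node outside $Q_q$ — but this is immediate from its definition, since it returns either the argument node or one of its direct successors, and $Q_q$ is closed under taking successors. Everything else is the standard ``finitely many subsets, marking prevents revisiting, finitely branching $\Rightarrow$ finite tree'' argument, which the paper itself signals by saying termination ``follows easily by the fact that only finitely many subsets can be generated.''
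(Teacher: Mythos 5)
Your argument is correct, and its skeleton is the same as the paper's: along any chain of nested recursive calls the generated sets are pairwise distinct (because a set stays marked while its call is on the stack and Line~\ref{ln:S:marked} blocks recursion into marked sets), and only finitely many sets can ever arise, so the recursion is finite. The one genuine difference is how the finiteness of the universe of sets is justified. The paper argues semantically: the second components of the pairs in successive sets accept residuals of the languages of the initial nodes, and a regular language has only finitely many residuals, while the first components range over the finite state set of the transducer. You argue structurally: \texttt{succ} returns either the node itself or one of its direct successors in the table, so the second components never leave the finite set of table nodes reachable from $q$, which is closed under successors. Your version is slightly more elementary in that it leans only on the data structure being a finite table (it is worth noting, as you implicitly do, that the nodes freshly created by \texttt{make} never feed back into the sets built on Line~\ref{ln:S:def}, so the reachable-node universe is fixed throughout the run); the paper's version is phrased at the level of languages and would apply even if one reasoned about the conceptual master automaton rather than a concrete finite table. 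The remaining difference is presentational: the paper derives a contradiction from a hypothetical infinite branch, whereas you bound the recursion depth directly and conclude via finite branching; both are fine.
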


\begin{proof}
  For the sake of contradiction, suppose that the algorithm does not
  terminate on some input $S_0$. There exists an infinite branch $S_0
  \trans{a_0} S_1 \trans{a_1} S_2 \trans{a_2} \cdots$ generated by
  successive calls to Line~\ref{ln:S:unmarked}. Note that $S_j \neq
  S_k$ for all $j \neq k$, as otherwise Line~\ref{ln:S:marked} would
  be executed at some point.

  For every $i \in \N$, let $Q_i \defeq \{q : (p, q) \in S_i\}$. By
  definition on Line~\ref{ln:S:def}, for every $q' \in Q_{i+1}$, there
  exist $b \in \Sigma$ and $q \in Q_i$ such that $\lang{q'} =
  \lang{q}^b$. Thus, for every $q' \in Q_i$, there exists $q \in Q_0$
  such that $\lang{q'}$ is a residual of $\lang{q}$. Since $Q_0$ is
  finite and a regular language has finitely many residuals, the set
  $Q \defeq Q_0 \cup Q_1 \cup \cdots$ is finite. Let $P_i \defeq \{p :
  (p, q) \in S_i\}$. Since the transducer is finite, it must be the
  case that $P \defeq P_0 \cup P_1 \cup \cdots$ is finite.

  Since $S_i \subseteq P_i \times Q_i \subseteq P \times Q$ for all $i
  \in \N$, and since $P \times Q$ is finite, there must exist $j \neq k$
  such that $S_j = S_k$. This is a contradiction. \qed
\end{proof}

Let us prove the claims of \Cref{prop:pre:comp:full} separately.

\begin{proposition}\label{prop:pre:comp:correct}
  \Cref{alg:pre:comp} is correct.
\end{proposition}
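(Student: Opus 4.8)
The plan is to prove correctness of \Cref{alg:pre:comp} by structural induction along the recursion, mirroring the argument already used for \Cref{alg:comp,alg:inter}, but now carefully exploiting the pre-compatibility hypothesis to justify that the ``unique $b$'' step is well-defined and that the recursive calls are again on pre-compatible pairs. Concretely, fix input identifiers $p$ (over $\Sigma \times \Sigma$) and $q$ (over $\Sigma$) with $\lang{p}$ and $\lang{q}$ pre-compatible, write $K \defeq \lang{p}$ and $L \defeq \lang{q}$, and let $r$ be the returned identifier; I must show $\lang{r} = \Pre{K}{L}$. The recursion terminates by the same measure as before: on each recursive call, either $p^{(a,b)} \neq \self$ and $p^{(a,b)}$ is a strictly smaller identifier than $p$, or $q^b \neq \self$ and $q^b$ is strictly smaller than $q$; the guard on Line~\ref{ln:pp:qq} ensures we only recurse when at least one of these strict decreases happens.

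First I would handle the acceptance flag. We have $\ew \in \Pre{K}{L}$ iff $(\ew,\ew) \in K$ and $\ew \in L$, which holds iff $\flag{p} = \true$ and $\flag{q} = \true$; this is exactly what is passed to \make on Line~\ref{ln:pre:return}. Next, fix a letter $a \in \Sigma$ and I would compute $(\Pre{K}{L})^a$. Unwinding definitions, $u \in (\Pre{K}{L})^a$ iff $au \in \Pre{K}{L}$ iff there exist $b \in \Sigma$ and $v \in \Sigma^{|u|}$ with $(au, bv) \in K$ and $bv \in L$, which is equivalent to the existence of $b \in \Sigma$ with $(u,v) \in K^{(a,b)}$ and $v \in L^b$ for some $v$, i.e.\ $u \in \bigcup_{b \in \Sigma} \Pre{K^{(a,b)}}{L^b}$. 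Now pre-compatibility enters: for the fixed $a$, there is at most one $b$ with $K^{(a,b)} \neq \emptyset$ and $L^b \neq \emptyset$, so the union collapses. If no such $b$ exists, every term $\Pre{K^{(a,b)}}{L^b}$ is empty (since one of the two residuals is empty), so $(\Pre{K}{L})^a = \emptyset$ and the algorithm correctly sets $s_a \leftarrow q_\emptyset$ (using $p^{(a,b)} \neq p_\emptyset$ and $q^b \neq q_\emptyset$ as effective emptiness tests, which are valid by uniqueness of node identifiers). Otherwise let $b$ be the unique such letter; then $(\Pre{K}{L})^a = \Pre{K^{(a,b)}}{L^b}$, and by the second clause of pre-compatibility $K^{(a,b)}$ and $L^b$ are themselves pre-compatible, so the recursive call \pre{$p',q'$} is legal and, by induction, returns a node for $\Pre{K^{(a,b)}}{L^b}$. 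The only remaining subtlety is the $\self$ case: when $p' = p$ and $q' = q$ we have $K^{(a,b)} = K$ and $L^b = L$, hence $(\Pre{K}{L})^a = \Pre{K}{L}$, so $s_a \leftarrow \self$ is correct; and $p' = p, q' = q$ can only happen when $p^{(a,b)} = \self$ and $q^b = \self$, matching the guard. One must also observe that $p' = p^{(a,b)}$ when $p^{(a,b)} \neq \self$ genuinely refers to $K^{(a,b)}$, and $p' = p$ when $p^{(a,b)} = \self$ also refers to $K^{(a,b)} = K$; likewise for $q$. Finally, since each $s_a$ is the node for $(\Pre{K}{L})^a$ and the flag is correct, \make returns the node for $\Pre{K}{L}$ by \Cref{prop:make} and the definition of $\lang{s,b}$.

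The main obstacle, I expect, is not any single deep step but rather the bookkeeping around pre-compatibility: one has to be careful that the residual identities $\lang{p}^{(a,b)} = \lang{p^{(a,b)}}$ (when $p^{(a,b)} \neq \self$) and $\lang{p}^{(a,b)} = \lang{p}$ (when $p^{(a,b)} = \self$) are applied correctly in every branch, that the ``empty vs.\ nonempty residual'' reasoning lines up with the identifier-level tests $\neq p_\emptyset$ / $\neq q_\emptyset$, and that the inductive hypothesis is only invoked on pairs that genuinely inherit pre-compatibility. Once the union-collapse from pre-compatibility is in hand, the rest is a direct transcription of the OBDD-style correctness argument already given for \Cref{alg:inter}.
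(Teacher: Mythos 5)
Your proof is correct and follows essentially the same route as the paper's: the same acceptance-flag check, the same chain of residual equivalences reducing $(\Pre{K}{L})^a$ to $\bigcup_{b}\Pre{K^{(a,b)}}{L^b}$, the collapse of that union via pre-compatibility, and the same three-way case analysis for $q_\emptyset$, $\self$, and the recursive call. Your explicit remarks on termination and on the recursive pairs inheriting pre-compatibility are points the paper leaves implicit, but they do not change the argument.
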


\begin{proof}
  Let $K \defeq \lang{p}$ and $L \defeq \lang{q}$. The result $r$ of
  \Cref{alg:pre:comp} should satisfy $\lang{r} = \Pre{K}{L}$. Let us
  show that this is the case.

  Note that $\ew \in \Pre{K}{L}$ iff $(\ew, \ew) \in K \land \ew \in
  L$. Therefore, ``$\flag{p} \land \flag{q}$'', on
  Line~\ref{ln:pre:return}, indicates correctly if the empty word
  belongs to $\Pre{K}{L}$.
  
  Let $a \in \Sigma$. \Cref{alg:pre:comp} must assign to $s_a$ the
  state accepting $(\Pre{K}{L})^a$. For every $u \in \Sigma^*$, we
  have
  \begin{align*}
    u \in (\Pre{K}{L})^a
    &\iff
    au \in \Pre{K}{L} \\
    &\iff
    \exists b \in \Sigma \text{ and }
    v \in \Sigma^{|u|} : (au, bv) \in K \land bv \in L \\
    &\iff
    \exists b \in \Sigma \text{ and }
    v \in \Sigma^{|u|} : (u, v) \in K^{(a, b)} \land v \in L^b \\
    &\iff
    \exists b \in \Sigma : u \in \Pre{K^{(a, b)}}{L^b}.
  \end{align*}
  So, if there is no $b \in \Sigma$ such that $p^{(a, b)} \neq
  p_\emptyset$ and $q^{(a, b)} \neq q_\emptyset$, then $(\Pre{K}{L})^a
  = \emptyset$, and hence assigning ``$s_a \leftarrow q_\emptyset$''
  is correct.

  Otherwise, let $b \in \Sigma$ be the unique letter given by
  pre-compatibility for $a$. By the above equivalences, we have
  $(\Pre{K}{L})^a = \Pre{K^{(a, b)}}{L^b}$.

  Upon reaching Line~\ref{ln:pp:qq}, we have $\lang{p'} = K^{(a, b)}$
  and $\lang{q'} = L^b$. Therefore, if $p' = p$ and $q' = q$, then
  $(\Pre{K}{L})^a = \Pre{K^{(a, b)}}{L^b} = \Pre{K}{L}$, and hence
  assigning ``$s_a \leftarrow \self$'' is correct. Otherwise, since
  $(\Pre{K}{L})^a = \Pre{K^{(a, b)}}{L^b} =
  \Pre{\lang{p'}}{\lang{q'}}$, assigning ``$s_a \leftarrow \pre{$p',
    q'$}$'' is correct. \qed
\end{proof}

\begin{proposition}
  \Cref{alg:pre:comp} with memoization works in time $\O(m^2 n_p
  n_q)$, where $m \defeq |\Sigma|$ and $n_x$ is the number of nodes
  reachable from node $x$.
\end{proposition}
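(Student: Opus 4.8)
The plan is to mirror the standard memoization argument already used for Algorithms~\ref{alg:comp} and~\ref{alg:inter}, but tracking the extra factor of $m$ introduced by the need to search for the unique matching letter $b$. First I would observe that, with memoization, the body of \pre{} (from Line~\ref{ln:pre:comp:for} through Line~\ref{ln:pre:return}) is executed at most once for each distinct argument pair $(p', q')$ that can ever arise. Every such $p'$ is a residual node of the original $p$, hence one of the $n_p$ nodes reachable from $p$; likewise every $q'$ is one of the $n_q$ nodes reachable from $q$. Therefore the number of distinct calls whose bodies are actually run is at most $n_p \cdot n_q$.

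Next I would bound the work done inside a single (non-memoized) invocation. The loop on Line~\ref{ln:pre:comp:for} runs $m$ times, once per $a \in \Sigma$. Within one iteration, the guard $\neg(\exists b : p^{(a,b)} \neq p_\emptyset \land q^b \neq q_\emptyset)$ and the subsequent selection of the unique $b$ each require scanning the $m$ letters of $\Sigma$ — reading the components $p^{(a,b)}$ and $q^b$ of the successor tuples — which costs $\O(m)$. Once $b$ is fixed, computing $p'$, $q'$, performing the constant-time comparisons on Line~\ref{ln:pp:qq}, and issuing the (possibly memoized) recursive call are all $\O(1)$ of local work, as is building the entry $s_a$. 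Thus one loop iteration is $\O(m)$, the whole loop is $\O(m^2)$, and the concluding call to \make on Line~\ref{ln:pre:return} contributes only its amortized $\O(m)$ cost (Proposition~\ref{prop:make}), which is absorbed. Hence each executed body costs $\O(m^2)$.

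Multiplying the two bounds gives total running time $\O(m^2 n_p n_q)$, as claimed. The only subtlety — and the point I would be most careful to state precisely — is the accounting of the \make calls and the dictionary lookups for memoization: one must argue that each memo lookup and each \make invocation is (amortized) $\O(m)$, so that the $n_p n_q$ recursive edges contribute $\O(m \, n_p n_q)$ in aggregate, which is dominated by the $\O(m^2 n_p n_q)$ term coming from the in-body letter scans. With hashing on successor tuples this is routine, and no genuine obstacle arises; the argument is essentially identical in spirit to the cost analysis of Algorithm~\ref{alg:inter}, with the single added $m$-factor attributable to pre-compatibility's ``search for the unique $b$'' step.
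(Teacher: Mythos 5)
Your proposal is correct and follows essentially the same argument as the paper's proof: memoization bounds the number of executed bodies by $n_p n_q$, and each body costs $\O(m^2)$ because the loop over $a \in \Sigma$ must, for each $a$, naively scan all $b \in \Sigma$ to find the unique compatible letter. The extra bookkeeping you add about \make and memo-lookup costs is a harmless refinement of the same analysis.
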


\begin{proof}
  Due to memoization, Line~\ref{ln:pre:comp:for} is reached at most
  once per pair $(x, y)$ where $x$ is a node reachable from $p$, and
  $y$ is a node reachable for $q$. Moreover, the loop iterates over
  each letter $a \in \Sigma$. The body of the loop needs to identify
  the unique letter $b$ or its nonexistence. This can be done naively
  by checking whether $p^{(a, b)} \neq \emptyset$ and $q^b \neq
  \emptyset$ for each $b \in \Sigma$. \qed
\end{proof}

\section{Missing proofs of \Cref{sec:model:check}}

\propLossy*

\begin{proof}
  Clearly, pure lossiness allows at least as many behavior as
  semi-lossiness, so the implication from right to left is
  obvious. Let us prove the other direction.

  Let $C \lcstrans{\pi}_\text{pure-lossy} C'' \succeq C'$. We proceed
  by induction on the number of $\lcsread{\cdot}{\cdot}$ occurring
  along $\pi$. If there is no occurrence, then we can remove all the
  losses to obtain a sequence $\pi'$. We clearly have
  \[
  C \lcstrans{\pi'}_\text{semi-lossy} D \text{ for some } D \succeq C''.
  \]

  Now, consider the general case. Let $\lcsloss{\sigma}{i}$ denote the
  loss of an occurrence of letter $\sigma$ on channel $i$ (without
  changing the state of any process). We have
  \[
  \pi = \pi_0\ \lcsloss{a_1}{j_1}\ \pi_1\ \lcsloss{a_2}{j_2}\ \cdots\ \lcsloss{a_m}{j_m}\ \pi_m\ \lcsread{\sigma}{i}\ \pi',
  \]
  where each $\pi_j$ is a sequence of $\lcswrite{\cdot}{\cdot}$ and
  $\lcsnop{}$ operations, and $\pi'$ is the rest of the sequence. The
  losses on channel $j_\ell \neq i$ do not impact
  $\lcsread{\sigma}{i}$, and so they can be postponed. Moreover, any
  $\lcsloss{a_\ell}{i}$ that affects a letter appearing \emph{after}
  the letter first occurrence of $\sigma$, can be postponed. Thus, we
  can reorganize $\pi$ as
  \[
  w \defeq \overbrace{\pi_0\ \pi_1\ \cdots\ \pi_m\
  \prod_{\mathclap{\substack{\ell \in [1..m] \\ j_\ell = i \\ a_j~\text{occurs before}~\sigma}}} \lcsloss{a_\ell}{i}\
  \lcsread{\sigma}{i}}^{u \defeq {\,}}\
  \overbrace{\prod_{\mathclap{\substack{\ell \in [1..m] \\ j_\ell = i \\ a_j~\text{occurs after}~\sigma}}} \lcsloss{a_\ell}{i}\
  \prod_{\mathclap{\substack{\ell \in [1..m] \\ j_\ell \neq i}}} \lcsloss{a_\ell}{j_\ell}\ \pi'}^{v \defeq {\,}}.
  \]
  Let $D$ be such that $C \lcstrans{u}_\text{pure-lossy} D
  \lcstrans{v}_\text{pure-lossy} C''$. We have $C
  \lcstrans{u}_\text{semi-lossy} D$. We are done by induction on
  $v$.

  \medskip

  It remains to prove the last part of the proposition, namely that
  ${\lcstrans{}_\text{semi-lossy}}$ is monotone. Let $C$ and $C'
  \succeq C$ be such that $C \lcstrans{t}_\text{semi-lossy} D$.

  If $t$ is a $\lcsnop$ or $\lcswrite{\cdot}{\cdot}$, then $C'
  \lcstrans{t}_\text{semi-lossy} D'$ where $D' \succeq C'$. Consider
  the case where $t = \lcsread{a}{i}$. Let $C = ([p_1, \ldots, p_m],
  w_1, \ldots, w_k)$ and $C' = ([p_1, \allowbreak \ldots, \allowbreak
    p_m], \allowbreak w_1', \ldots, w_k')$. Let $w_i = x a y$ where
  $x$ contains no $a$. Since $w_i \sqsubseteq w_i'$, we have $w_i' =
  x' a y'$ for some $x' \sqsupseteq x$ and $y' \sqsupseteq y$. If $x'$
  contains no $a$, then we are done since $\lcsread{a}{i}$ converts
  $w_i$ into $y$; $\lcsread{a}{i}$ converts $w_i'$ into $y'$; and $y
  \sqsubseteq y'$. Otherwise, $x' = u a v$ where $u$ contains no
  $a$. In this case, $\lcsread{a}{i}$ converts $w_i$ into $y$, and
  $w_i'$ into $avy'$. As $y \sqsubseteq y'$, we have $y \sqsubseteq
  avy'$, and hence we are done. \qed
\end{proof}

\section{Extra experimental results for \Cref{sec:exp}}

\begin{table}
  \begin{center}
    \begin{tabular}{lrrrrrr}
\toprule
& \bml & \bmlCsre & \bmlMof & \bmlSi & \mcscm & \wadl \\
\midrule
\texttt{tcp\_simplest\_err} & 0.04 & 0.08 & 0.10 & 0.22 & 0.01 & 0.04 \\
\texttt{BAwCC} & 3.28 & 0.24 & 0.28 & 20.79 & \TO & 0.18 \\
\texttt{BAwCC\_enh} & 4.17 & 0.25 & 0.25 & 23.44 & 124.72 & 0.50 \\
\texttt{tcp\_simplest} & 0.03 & 0.09 & 0.11 & 0.07 & 0.34 & 0.06 \\
\texttt{BAwPC\_enh} & 1.22 & 0.08 & 0.08 & 6.75 & 0.10 & 0.19 \\
\texttt{peterson\_3} & \TO & 34.92 & 36.97 & 5.21 & \TO & 19.50 \\
\texttt{BAwPC} & 0.83 & 0.11 & 0.11 & 7.85 & \TO & 0.11 \\
\texttt{peterson\_4} & \TO & \SO & \SO & \TO & \TO & \TO \\
\texttt{simple\_server-bwerror} & 0.01 & 0.01 & 0.01 & 0.03 & 0.15 & 0.01 \\
\texttt{ring2} & 0.36 & 1.02 & 0.91 & 0.45 & \SO & 0.04 \\
\texttt{brp\_like\_modified} & 0.39 & 0.64 & 0.69 & 0.28 & 0.39 & 0.29 \\
\texttt{simple\_server} & 0.13 & 1.44 & 0.76 & 2.79 & 0.25 & 0.06 \\
\texttt{pop3} & 400.20 & \TO & 1.60 & 0.73 & 4.73 & 2.07 \\
\bottomrule
\end{tabular}
  \end{center}
  \caption{Time in seconds to solve each lossy channel system
    instance, where \TO~=~timeout and \SO~=~stack overflow.}
\end{table}

\begin{table}
\begin{center}
  \begin{minipage}[t]{0.46\textwidth}\vspace*{0pt}
\begin{tabular}{>{\scriptsize}l>{\scriptsize}r}
\toprule
Broadcast protocol instance & \wadl \\
\midrule
\texttt{MOESI\_exclusivemodified} & 0.02 \\
\texttt{dragon\_exclusiveshared} & 0.12 \\
\texttt{firefly\_deadlock} & 0.01 \\
\texttt{MOESI\_sharedexclusive} & 0.01 \\
\texttt{dragon\_dirtyshareddirty} & 0.04 \\
\texttt{dragon\_dirtysharedexclusive} & 0.16 \\
\texttt{FutureBus\_exclusiveexclusive} & 65.32 \\
\texttt{Berkeley\_exclusiveexclusive} & \TO \\
\texttt{MOESI\_ownedmodified} & 0.02 \\
\texttt{dragon\_exclusivedirty} & \TO \\
\texttt{firefly\_dirtyexclusive} & \TO \\
\texttt{MOESI\_exclusiveexclusive} & 0.01 \\
\texttt{Berkeley\_deadlock} & 0.01 \\
\texttt{FutureBus\_deadlock} & 0.01 \\
\texttt{FutureBus\_pendingrightsecond} & 1.02 \\
\texttt{Berkeley\_exclusivenonexclusive} & 0.01 \\
\texttt{synapse\_dirtydirty} & 0.01 \\
\texttt{Berkeley\_exclusiveunowned} & 0.01 \\
\texttt{MOESI\_deadlock} & 0.01 \\
\texttt{dragon\_deadlock} & 0.01 \\
\texttt{dragon\_exclusiveexclusive} & 0.05 \\
\texttt{Illinois\_dirtyshared} & 0.06 \\
\texttt{MOESI\_ownedexclusive} & 0.01 \\
\texttt{firefly\_dirtyshared} & 0.05 \\
\texttt{firefly\_dirtydirty} & \TO \\
\texttt{FutureBus\_sharedexclusive} & 41.12 \\
\texttt{dragon\_dirtydirty} & \TO \\
\texttt{firefly\_exclusiveexclusive} & \TO \\
\texttt{MESI\_sharedmodified} & 0.01 \\
\texttt{synapse\_deadlock} & 0.01 \\
\texttt{MOESI\_modifiedmodified} & 0.01 \\
\texttt{FutureBus\_secondpending} & 0.79 \\
\texttt{Illinois\_dirtydirty} & 0.03 \\
\texttt{MESI\_modifiedmodified} & 0.01 \\
\texttt{MOESI\_sharedmodified} & 0.01 \\
\texttt{synapse\_dirtyvalid} & 0.01 \\
\texttt{Illinois\_deadlock} & 0.01 \\
\texttt{dragon\_shareddirty} & 0.24 \\
\bottomrule
  \end{tabular}
  \end{minipage}\hspace*{5pt}%
  \begin{minipage}[t]{0.54\textwidth}\vspace*{0pt}
\begin{tabular}{>{\scriptsize}l>{\scriptsize}r}
\toprule
Regular model checking instance & \wadl \\
\midrule
\texttt{icalp-Israeli-Jafon\_notoken} & 0.01 \\
\texttt{Dijkstra-ring\_deadlock} & \TO \\
\texttt{token-passing\_notoken} & 0.01 \\
\texttt{icalp-Herman-ring\_notoken} & 0.01 \\
\texttt{DijkstraMutEx\_nomutex} & \TO \\
\texttt{Burns\_nomutex} & \UNK \\
\texttt{Herman\_notoken} & 0.01 \\
\texttt{Dijkstra-Scholten\_twob} & \UNK \\
\texttt{bakery\_deadlock} & 0.01 \\
\texttt{icalp-bakery\_nomutex} & 0.01 \\
\texttt{Burns\_deadlock} & 0.01 \\
\texttt{mux-array\_deadlock} & \TO \\
\texttt{icalp-Herman-line\_notoken} & 0.01 \\
\texttt{dining-cryptographers\_deadlock} & 0.01 \\
\texttt{token-passing-no-invariant\_deadlock} & 0.01 \\
\texttt{token-passing-no-invariant\_manytoken} & \UNK \\
\texttt{icalp-Israeli-Jafon\_deadlock} & 0.01 \\
\texttt{token-passing\_manytoken} & 0.01 \\
\texttt{Szymanski\_nomutex} & \TO \\
\texttt{res-allocator\_deadlock} & \TO \\
\texttt{icalp-bakery\_deadlock} & 0.01 \\
\texttt{DijkstraMutEx\_deadlock} & \TO \\
\texttt{token-passing-no-invariant\_nonreach} & \UNK \\
\texttt{token-passing-no-invariant\_notoken} & 0.01 \\
\texttt{bakery\_nomutex} & 0.01 \\
\texttt{icalp-Herman-ring\_deadlock} & 0.01 \\
\texttt{icalp-Herman-line\_deadlock} & 0.01 \\
\texttt{token-passing\_deadlock} & 0.01 \\
\texttt{mux-array\_nomutex} & \UNK \\
\texttt{Israeli-Jafon\_notoken} & 0.01 \\
\texttt{Dijkstra-Scholten\_twod} & \UNK \\
\texttt{Herman\_deadlock} & 0.01 \\
\texttt{Israeli-Jafon\_deadlock} & 0.01 \\
\texttt{Dijkstra-ring\_nomutex} & \UNK \\
\texttt{res-allocator\_nomutex} & 0.01 \\
\bottomrule
\end{tabular}
\end{minipage}  
\end{center}
  \caption{Time in seconds to solve each broadcast protocol instance
    (left) and regular model checking instance (right), where
    \TO~=~timeout and \UNK~=~unknown. The ``unknown'' instances are
    those where a cycle was contracted in Line~\ref{ln:S:marked} of
    \Cref{alg:pre:gen}, and hence where we cannot be certain that the
    result is weakly acyclic.}
\end{table}

\begin{figure}
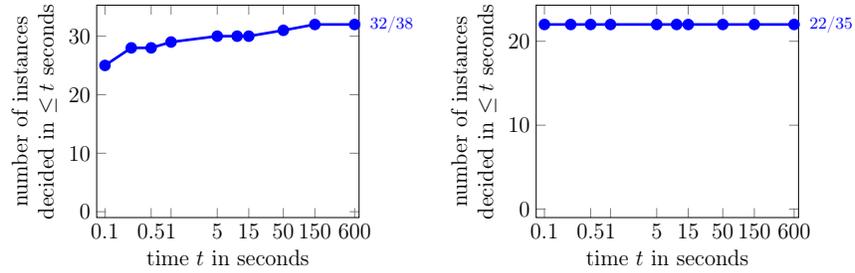

  \centering
  \begin{minipage}[t]{0.47\textwidth}
    \begin{center}
      \plotcanvas{time $t$ in seconds}{number of instances \\ decided in $\leq t$ seconds}{%
        \toolplotwithtotalcount{wadl}{\markwadl}{
          (0.1, 25)(0.25, 28)(0.5, 28)(1.0, 29)(5.0, 30)(10.0, 30)(15.0, 30)(50.0, 31)(150.0, 32)(600.0, 32)
        }{600.0}{32}{32/38};
      }
    \end{center}
  \end{minipage}
  \begin{minipage}[t]{0.47\textwidth}
    \begin{center}
      \plotcanvas{time $t$ in seconds}{number of instances \\ decided in $\leq t$ seconds}{%
        \toolplotwithtotalcount{wadl}{\markwadl}{
          (0.1, 22)(0.25, 22)(0.5, 22)(1.0, 22)(5.0, 22)(10.0, 22)(15.0, 22)(50.0, 22)(150.0, 22)(600.0, 22)
        }{600.0}{22}{22/35};        
      }
    \end{center}
  \end{minipage}

  \caption{Cumulative number of instances decided by \wadl{} over time
    (semi-log scale) for broadcast protocols (left) and for the other
    regular model checking instances (right).}
  \label{fig:results:dodo}
\end{figure}

\end{document}